\newtheorem{lemma}{\textbf{Lemma}}
\newtheorem{corollary}{\textbf{Corollary}}
\newtheorem{definition}{\textbf{Definition}}
\newtheorem{theorem}{\textbf{Theorem}}
\newtheorem{observation}{\textbf{Observation}}
\newtheorem{example}{\textbf{Example}}
\newcommand{\ie}{\emph{i.e.}}
\newcommand{\eg}{\emph{e.g.}}
\newcommand{\etal}{~\emph{et al.}~}
\newcommand{\Bs}[1]{\boldsymbol{#1}}
\newcommand{\Fbb}{\mathbb{F}}
\newcommand{\mc}[1]{\mathcal{#1}}
\newcommand{\gaussnum}[3]{{#1 \brack #2}_#3}
\newcommand{\sspan}[1]{\left\langle #1 \right\rangle}
\newcommand{\Prob}[1]{{\ensuremath{\mathbb{P}\left[ #1 \right]} }}
\newcommand{\transpose}{\top}
\newcommand{\bigo}[1]{O\left( #1 \right)}
\DeclareMathOperator{\rank}{rank}
\DeclareMathOperator{\head}{head}
\DeclareMathOperator{\tail}{tail}
\DeclareMathOperator{\In}{In}
\DeclareMathOperator{\Out}{Out}
\title{Subspace Properties of Network Coding\\ and their Applications}
\author{M.~Jafari~Siavoshani,~\IEEEmembership{Student Member,~IEEE}, 
C.~Fragouli,~\IEEEmembership{Member,~IEEE},
S.~N.~Diggavi,~\IEEEmembership{Member,~IEEE}
\thanks{The work of M.~Jafari~Siavoshani and C. Fragouli was supported 
by the Swiss National Science Foundation through Grant PP00P2-128639.}
\thanks{M.~Jafari~Siavoshani and C.~Fragouli are with the School of Computer 
and Communication Sciences, Ecole Polytechnique F\'{e}d\'{e}rale de Lausanne (EPFL), 
Lausanne CH 1015, Switzerland (e-mail: mahdi.jafarisiavoshani@epﬂ.ch; christina.fragouli@epﬂ.ch).}
\thanks{S.~N.~Diggavi was with the Ecole Polytechnique F\'{e}d\'{e}rale de Lausanne
(EPFL), Lausanne CH 1015, Switzerland. He is now with the Department of
Electrical Engineering, University of California, Los Angeles (UCLA), CA
90095 USA (e-mail: suhas@ee.ucla.edu).}
}
\begin{document}
\maketitle
\begin{abstract}
  Systems that employ  network coding for content distribution 
  convey to the receivers  linear combinations of the source
  packets. If we assume randomized network coding, during this process 
   the network nodes collect random subspaces of the space spanned by the source packets.
  We establish several fundamental properties of the random
  subspaces induced in such a system, and show that these subspaces 
   implicitly carry topological information about the
  network and its state that can be passively collected and
  inferred. We leverage this information towards a number of
  applications that are interesting in their own right, such as
  topology inference, bottleneck discovery in peer-to-peer systems and
  locating Byzantine attackers.  We thus argue that, randomized
  network coding, apart from its better known properties for improving
  information delivery rate, can additionally facilitate network management
  and control.
\end{abstract}

\section{Introduction}

Randomized network coding offers a promising technique for 
content distribution systems. In randomized network coding, each
node in the network combines its incoming packets randomly and sends
them to its neighbours \cite{HoKoMeEfShKa2006,ChWuJa2003}. This is
the approach adopted by most practical applications today. For
example, Avalanche, the first implementation of a peer-to-peer (P2P)
system that uses network coding, adopts such a randomized operation
\cite{GkRo2005,GkMiRo2006}. In ad-hoc wireless and sensor networks as
well, most proposed protocols employing network coding again opt for
randomized network operation (see \cite{CWL:Infocom2006} and references therein).

The reason for the popularity of randomized network coding is because it
facilitates a very simple and flexible network operation without need
of synchronization among network nodes, that is well suited to packet
networks. To every packet, a coding vector is appended that determines
how the packet is expressed with respect to the original data packets
produced at the source node. When intermediate nodes combine packets,
the coding vector keeps track of the linear combinations contained in
a particular packet. A receiver which collects enough packets, uses
the coding vectors to determine the set of linear equations it needs
to solve in order to recover the original data packets.

Our contributions start with the observation that coding vectors
implicitly carry information about the network structure as well as
its state\footnote{By state we refer to link or node failures,
congestion in some part of the network, etc.}. Such vectors belong
to appropriately defined vector spaces, and we are interested in
fundamental properties of these (finite-field) vector spaces.  In
particular, since we are investigating properties induced by randomized
network coding, we need to characterize random subspaces of the
aforementioned vector spaces. These properties of random subspaces
over finite fields might be of independent interest. 
We aim to show, using these
properties, that observing the coding vectors we can passively
collect structural and state information about a network. We can
leverage this information towards several applications that are
interesting in their own merit, such as topology inference, network
tomography, and network management (we do not claim here the 
design of practical protocols that use these properties). 
However, we show that randomized network coding, apart from its 
better known properties for facilitating information delivery, 
can provide us with information about the network itself.

To support this claim, we start by studying the problem of passive
topology inference in a content distribution system where intermediate nodes
perform randomized network coding. We show that the subspaces nodes
collect during the dissemination process have a dependence with each
other which is inherited from the network structure. Using this
dependence, we describe the conditions that let us perfectly
reconstruct the topology of a network, if subspaces of all nodes at
some time instant are available.

We then investigate a reverse or dual problem of topology
inference, which is, finding the location of Byzantine attackers. In a
network coded system, the adversarial nodes in the network can disrupt
the normal operation of information flow by inserting erroneous
packets into the network. We use the dependence between subspaces
gathered by network nodes and the topology of the network to extract
information about the location of attackers. We propose several
methods, compare them and investigate the conditions that allow us to
find the location of attackers up to a small uncertainty.

Finally, we then observe that the received subspaces, even at one specific
node, reveal some information about the network, such as the existence
of bottlenecks or congestion. We consider P2P networks for content
distribution that use randomized network coding
techniques. It is known that the performance of such P2P networks
depends critically on the good connectivity of the overlay
topology. Building on our observation, we propose algorithms for
topology management to avoid bottlenecks and clustering in
network-coded P2P systems. The proposed approach is decentralized,
inherently adapts to the network topology, and reduces substantially
the number of topology rewirings that are necessary to maintain a well
connected overlay; moreover, it is integrated in the normal content
distribution.

The paper is organized as follows. We start with the notation and problem
modeling in \S\ref{chp:ProbDefMod}. We investigate the
properties of vector spaces in a system that employs randomized network
coding in \S\ref{chp:NotesVecSpace} and these properties give the framework
to explore applications in \S\ref{chp:TopInfer},
\S\ref{chp:ByzantineAttack}, and \S\ref{chp:TopManagement}. Finally, 
we conclude the paper with a discussion in
\S\ref{chp:conclusion}. Shorter versions of these results have also
appeared in \cite{JaFrDi2007,JaFrDiGk2007,JaFrDi2008}.

\subsection{Related Work}
Network coding started by the work of Ahlswede\etal\cite{AhlCaiLiYeu} 
who showed that a source can multicast
information at a rate approaching the smallest min-cut between the
source and any receiver if the middle nodes in the network combine the
information packets. Li\etal\cite{LiYeCa2003_InfoTheoryTran} showed
that linear network coding with finite field size is sufficient for
multicast. Koetter\etal\cite{KoMe2003_TransNet} presented an
algebraic framework for linear network coding.

Randomized network coding was  proposed by 
Ho\etal\cite{HoKoMeKaEf2003} where they showed that randomly choosing the
network code leads to a valid solution for a multicast problem with
high probability if the field size is large. It was later applied by
Chou\etal\cite{ChWuJa2003} to demonstrate the practical aspects of
random linear network coding. Gkantsidis\etal\cite{GkRo2005,GkMiRo2006} 
 implemented a practical file sharing system based on this idea.
Several other works have also adopted randomized network coding
for content distribution, see for example \cite{add1,add2,add3}.

Network error correcting codes, that are capable of correcting errors
inserted in the network, have been developed during the last few
years. For example see the work of Koetter\etal\cite{KoKs2007},
Jaggi\etal\cite{JaLaKaHoKaMe2007}, Ho\etal\cite{HoLeKoMeEfKa2004}, 
Yeung\etal\cite{YeCa2006_1,CaYe2006_2},
Zhang \cite{Zh2006}, and Silva\etal\cite{SiKsch-IT2011}. 
These schemes are capable of delivering
information despite the presence of Byzantine attacks in the network
or nodes malfunction, as long as the amount of undesired information
is limited. These network error correcting schemes allow to correct 
malicious packet corruption up to certain rate. In contrast, we use
network coding to identify malicious nodes in our work.
Recently, and following our work \cite{JaFrDi2008}, 
 additional approaches are proposed in the literature,
some building on our results \cite{KeLi-InfoCom09}.

Overlay topology monitoring and management that do not employ network
coding has been an intensively studied research topic, see for example
\cite{ResOvNet}. However, in the context of network coding, it is a new
area of research. Fragouli\etal\cite{FrMa2005,FrMaDi2006} took
advantage of  network coding capabilities  for active link loss network
monitoring where the focus was on link loss rate inference. Passive
inference of link loss rates has also been proposed by
Ho\etal\cite{HoLeChWeKo2005}. In a subsequent work of ours, 
Sharma\etal\cite{ShJaDe-ITA07}
study  passive topology estimation for the upstream nodes of every
network node. This work is based on the assumption 
that the local coding vectors for each node
in the network are fixed, generated in advance and known by all other nodes 
in the network, unlike our work that builds on randomized operation.
The idea of passive inference of topological properties from  subspaces 
that are build over time, as far as we know, is a novel contribution of 
this work.

\section{Models: Coding and Network Operation}\label{chp:ProbDefMod}

A simple observation motivates much of the work presented in this
paper:  the subspaces gathered by the network nodes  during information 
dissemination with randomized network coding,  are not  completely random, 
but have some relationship, and this relationship conveys information about 
the network topology as well as its state.   
We will thus  investigate properties of the collected subspaces 
and show how we can use them for diverse applications.

Different properties of the subspaces are relevant to each particular
application and therefore we will develop a framework for investigating
these properties. This will also involve some understanding of modeling
the problem to fit the requirements of an application and then developing
subspace properties relevant to that model.

\subsection{Notation}\label{sec:NotSubSpace}

Let $q\ge 2$ be a power of a prime. In this paper, all vectors and matrices 
have elements in a finite field $\Fbb_q$. We use $\Fbb^{n\times m}_q$ to
denote the set of all $n \times m$ matrices over $\Fbb_q$, and $\Fbb^\ell_q$ to 
denote the set of all row vectors of length $\ell$. The set $\Fbb^\ell_q$ forms 
an $\ell$-dimensional vector space over the field $\Fbb_q$. Note that all vectors 
are row vectors unless otherwise stated. Bold lower-case letters, \eg, $\Bs{v}$, 
are used for vectors and bold capital letters, \eg, $\Bs{X}$, are used to denote matrices.

For a set of vectors $\{\Bs{v}_1,\ldots,\Bs{v}_k\}$ we denote their linear span by $\sspan{\Bs{v}_1,\ldots,\Bs{v}_k}$. For a matrix $\Bs{X}$, $\sspan{\Bs{X}}$
is the subspace spanned by the rows of $\Bs{X}$. We then have $\rank(\Bs{X}) = \dim(\sspan{\Bs{X}})$.

We denote subspaces of a vector space by $\Pi$ and sometimes also
by $\pi$. In this paper, we work on a vector space $\mathbb{F}_q^\ell$ of dimension $\ell$ defined over a finite field $\mathbb{F}_q$. 
For two subspaces
$\Pi_1,\Pi_2\subseteq\mathbb{F}_q^\ell$, we will denote
their intersection by $\Pi_1\cap\Pi_2$ and their joint span 
by $\Pi_1+\Pi_2$ where
\[
\Pi_1 + \Pi_2 \triangleq \{\Bs{v}_1+\Bs{v}_2|\Bs{v}_1\in\Pi_1,\Bs{v}_2\in\Pi_2 \},
\]
is the smallest subspace that contains both $\Pi_1$ and $\Pi_2$.
It is  well known that 
\[
\dim(\Pi_1+\Pi_2)=\dim(\Pi_1)+\dim(\Pi_2)-\dim(\Pi_1\cap\Pi_2).
\]
We also use the following metric to measure
the distance between two subspaces,
\begin{align}\label{eq:SubspaceMetric}
d_S(\Pi_1,\Pi_2) &\triangleq \dim(\Pi_1+\Pi_2)-\dim(\Pi_1\cap\Pi_2)\\
&= \dim(\Pi_1)+\dim(\Pi_2)-2\dim(\Pi_1\cap\Pi_2)\nonumber.
\end{align}
This metric was also introduced in \cite{KoKs2007}, where it was used
to design error correction codes.

In addition to the metric $d_S(\cdot,\cdot)$ defined above,
in some cases we will also need a measure that compares how
a set $\mathcal{A}$ of subspaces differs from
another set $\mathcal{B}$ of subspaces. For this we will
use the average pair-wise distance defined as follows
\begin{equation}\label{distance}
D_S(\mathcal{A},\mathcal{B}) \triangleq \frac{1}{|\mathcal{A}||\mathcal{B}|}\sum_{\pi_a\in \mathcal{A},\pi_b\in \mathcal{B}} d_S(\pi_a,\pi_b). 
\end{equation}
It should be noted that the above relation does not define a metric
for the set of subspaces because the self distance of a set with
itself is not zero. However, $D_S(\cdot,\cdot)$ satisfies the triangle inequality.

In this paper we will be interested in investigating the relationship 
of the collected subspaces at neighboring network nodes.
We consider a network represented as a directed acyclic graph $G=(V,E)$, with
$\vartheta=|V|$ nodes and $\xi=|E|$ edges. For an arbitrary edge
$e=(u,v)\in E$, we denote $\head(e)=v$ and
$\tail(e)=u$. For an arbitrary node $v\in V$, we denote
$\In(v)$ the set of incoming edges to $v$ and
$\Out(v)$ the set of outgoing edges from $v$. 
If a node $u$ has $p$ parents $u_1,\ldots,u_{p}$, we denote with
$P(u)=\{u_1,\ldots,u_{p}\}$ the set of parents of $u$. 
We use $P^l(u)$ to denote the set of all ancestors of $u$ at distance $l$ from $u$ in
the network (we say that two nodes $u$ and $v$ are at distance $l$ if there exists a path of length exactly $l$ that connects them).
We denote with  $\pi^{(u_i)}_u(t)$ the subspace
node~$u$ receives from parent $u_i$ at exactly time $t$,
and  with $\pi_u(t)$  the whole subspace (from all parents) that node $u$
receives at time $t$, that is  $\pi_u(t)=\sum_{i=1}^{p}\pi_u^{(u_i)}(t)$.
We also denote
with $\Pi^{(u_i)}_u(t)$ the subspace node~$u$ has received from
parent $u_i$ up to time $t$, that is, \mbox{$\Pi^{(u_i)}_u(t)=\Pi^{(u_i)}_u(t-1)
  + \pi^{(u_i)}_u(t)$}. Then the subspace $\Pi_u(t)$ that the node has at time $t$ can be expressed as
 \mbox{$\Pi_u(t)=\sum_{i=1}^{p} \Pi^{(u_i)}_u(t)$}.  
For a set of nodes $\mathcal{U}=\{u_1,\ldots,u_{p}\}$, we define
\mbox{$\Pi_\mathcal{U} = \Pi_{u_1}+\cdots + \Pi_{u_{p}}$}.

Finally, we use the big $O$ notation which is defined as follows. Let
$f(x)$ and $g(x)$ be two functions defined on some subset of the real
numbers. We write $f(x)=O(g(x))$
if and only if
there exists a positive real number $M$ and a real number $x_0$ such that
$|f(x)|\le M|g(x)|$ for all $x>x_0$. During the rest of the paper 
we use $O$ to compare functions of the field size $q$, unless otherwise stated.
For example, we will use $f(q)=O(q^{-1})$ to imply that the value of $f(q)$
goes to zero as $q^{-1}$  for $q\rightarrow\infty$.

\subsection{Network Operation}\label{sec:GenNetOperation}

We assume that there is an information source located on a node $S$ 
that has a set of $n$  packets (messages) 
$\{\boldsymbol{x}_1,\ldots,\boldsymbol{x}_n\}$, $\boldsymbol{x}_i\in \mathbb{F}_q^{\ell}$,
to distribute to a set of receivers, where each packet is a sequence 
of $\ell$ symbols over the finite field $\mathbb{F}_q$. To do so, we 
will employ a dissemination protocol based on randomized network 
coding, namely, where each network node sends random linear 
combinations (chosen to be uniform over $\mathbb{F}_q$) of its collected 
packets to its neighbors. We assume for simplicity that there are no packet-losses.

\subsection*{Dissemination Protocol}

It is possible to separate the dissemination protocols into the
following operation categories.
\begin{itemize}
\item {\em Synchronous:} All nodes are synchronized and transmit to
  their neighbors according to a global clock tick (time-slot). At
  timeslot $t\in\mathbb{N}$, node $v$ sends linear combinations from all 
vectors it has  collected up to time \mbox{$t-1$}. Once nodes start transmitting
  information, they keep transmitting until all receivers are able to
  decode. 
\item {\em Asynchronous:} Nodes transmit linear combinations at
  randomly and independently chosen time instants.
\end{itemize}

In this paper, we focus on the
synchronous network where we assume that each link has unit delay\footnote{Unit 
delay can model a buffering window a node needs to wait to collect packets
from all its neighbors.} corresponding to each timeslot, however 
our results can be extended to asynchronous networks as well.

Next, we explain in detail the dissemination protocol, that is
 summarized in
Algorithm~\ref{alg_diss}.

{
\paragraph*{\bf\em Timing} We depict in Fig.~\ref{fig:timing} the relative timing of events within a timeslot. 
Nodes transmit at the beginning of a timeslot. We assume that each packet is received
by its intended receiver before the end of the timeslot. Thus, the timeslot duration incorporates
the packet propagation delay  in one edge of the network. 
}

\begin{figure}[hbt]
\begin{center}
\psset{unit=0.07in}
\begin{pspicture}(2,2)(45,18)
\psset{linewidth=0.5pt}
\psline{->}(2,3)(35,3) \rput(36,1.5){\scriptsize{Time}}

\psline{-}(6,2.5)(6,3) \rput(6,1){\scriptsize{$t-1$}}
\psline[linestyle=dotted,dotsep=2pt](6,3)(6,6)
\psline{-}(26,2.5)(26,3) \rput(26,1){\scriptsize{$t$}}
\psline[linestyle=dotted,dotsep=2pt](26,3)(26,6)
\psline{<->}(6.5,6.1)(25.5,6.1) \rput(16,5.1){\scriptsize{Slot number $t$}}

\rput(36,17){\circlenode{A}{A}}
\rput(36,10){\circlenode{B}{B}}
\ncline{->}{A}{B} 

\psline{<-}(8.8,3)(8.8,17)(25,17) \rput[Cl](25.5,17){\scriptsize{$A$ Transmits}}
\psline{<-}(23.5,3)(23.5,10)(25,10) \rput[Cl](25.5,10){\scriptsize{$B$ Receives}}

\psline{<-}(26.5,3.5)(30,6) \rput[Cl](30.5,6){$\left\{\text{\parbox{2.5cm}{\scriptsize{The point that subspaces are measured: $\Pi_B(t)$}}}  \right.$}
\psline{<-}(5.5,3.5)(3,6)(3,10) \rput[Cb](3,10){\scriptsize{$\Pi_A(t-1)$}}
\end{pspicture}
\end{center}
\caption{Timing schedule of the dissemination protocol given by Algorithm~\ref{alg_diss}.}
\label{fig:timing}
\end{figure}
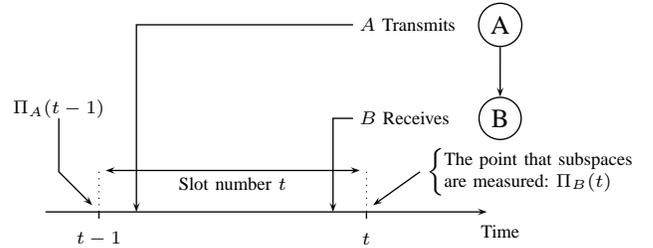

\paragraph*{\bf\em Rate Allocation and Equivalent Network Graph} The dissemination protocol first associates with
each link $e$ of the network a rate $r_e$ (measured as the number of packets transmitted per timeslot on edge $e$). 
These rates are selected in advance using a rate allocation method, for example \cite{Lun}. 

For the rest of the paper, we consider an equivalent network graph, where
each edge $e$ has capacity equal to its allocated rate $r_e$.
On this new graph, we can define the min-cut $c_v$ from the source node $S$ 
to a node $v\in V$. Whenever we refer to min-cut values in the following, we 
imply min-cut values over this equivalent graph. 

We assume that the rate allocation protocol we use satisfies 
\begin{equation} \label{eq_rate} 
r_e\le \min[c_e,c_{\tail(e)}],
\end{equation}
where $c_e$ is the capacity of edge $e$.
This very mild assumption says that the node $v=\tail(e)$ does not send more 
information than it receives, and is satisfied by all protocols that do not 
send redundant packets, \ie, observe flow conservation.

In our work, we consider the case where  $n \gg c_v$, namely,
the dissemination of the $n$ source packets to the receivers 
takes place by using the network over several timeslots.

\paragraph*{\bf\em Node operation} When the dissemination starts, at timeslot say zero, the source 
starts transmitting at each time slot and to each of its outgoing 
edges $e$, $r_e$  randomly selected linear combinations of $n$ 
information packets. We  will call $r_S$ the {\em source rate}.
The source continues until it has transmitted linear combinations 
of all $n$ packets, \ie, for $\frac{n}{r_S}$ timeslots.  
Every other node $v\in V\setminus\{S\}$ in the network, operates as follows:
\begin{itemize}
\item Initially it does not transmit, but only collects in a buffer packets 
from its parents, until a time $\tau_v$, which we call {\em waiting time} 
and we will define in the following. As we will see, each node can decide 
the waiting time by itself and independently from other nodes.
\item At each timeslot $t$,  for all $t\geq \tau_v+1$, it transmits to each 
outgoing edge $e$, $r_e$ linear combinations of all packets it has 
collected in its buffer up to time $t-1$. 
\end{itemize}

\paragraph*{\bf\em Collected Subspaces} 
We can think of each of the $n$ source messages $\{\Bs{x}_i\}$ as corresponding 
to one dimension of an $n$-dimensional space  $\Pi_S\subseteq\mathbb{F}_q^\ell$
where $\Pi_S=\sspan{\Bs{x}_1,\ldots,\Bs{x}_n}$. 
We say that node~$v\in V$ at time $t$ observes a subspace $\Pi_v(t) \subseteq \Pi_S$, with 
dimension $d_v(t)\triangleq\dim(\Pi_v(t))$, if $\Pi_v(t)$
is the space spanned by the received vectors at node $v$ up to time
$t$. Initially, at time $t=0$, the collected subspaces of all nodes (apart the
source) are empty; $d_v(0)=0$, $\forall v\in V\setminus\{S\}$.

\paragraph*{\bf\em Waiting Times}
We next define the waiting times, that will be used in
the following sections to ensure that the subspaces of different
nodes be distinct, and are a usual assumption in dissemination
protocols; indeed, for   large $n$ the waiting time does not
affect the rate. 
For example, in the information-theoretic proof of the
main theorem in network coding \cite{AhlCaiLiYeu}, each node waits
until it collects at least one message from each of its incoming links
before starting transmissions. 

\begin{definition}\label{def:waiting_time}
The {\em waiting time} $\tau_v$ for a node $v$ is the first timeslot 
during which node $v$ receives information from the source at a rate 
equal to its min-cut $c_v$, and additionally, has collected in its 
buffer a  subspace of dimension at least $c_v+1$.
\end{definition} 

Note that, because we are dealing with acyclic graphs, we can impose a 
partial order on the waiting times of the nodes, such that all parents 
of a node have smaller waiting time than the node. Moreover, each node 
can decide whether the conditions for the waiting time are met, by observing 
whether it receives information at a rate equal to its min-cut, and what 
is the dimension of the subspace it has collected. That is, a node does 
not need to know any topological information (apart from its min-cut), and 
the waiting times do not need to be communicated in advance to the nodes, but 
can be decided online based on the network conditions.

\begin{center}
\begin{pseudocode}[doublebox]{DisseminationProtocol}{G=(V,E),S,n,\tau_v,r_e }
\FOREACH v \in V\setminus \{S\} \DO \Pi_v(0)=\varnothing,d_v(0)=0\\
t \GETS 1\\
\WHILE \min_v{d_v(t)}<n \\
\BEGIN
\FOREACH v \in V \\
\BEGIN
\IF t\geq \tau_v +1 \THEN 
\BEGIN
\FOREACH e\in \Out(v) \DO
\BEGIN
\text{node $v$ transmits from}\\ \text{$\Pi_v(t-1)$ with rate $r_e$ on $e$}\\
\END
\END
\END\\
\FOREACH v \in V \DO \;\text{update}\; \Pi_v(t),\; d_v(t) \\
t \GETS t+1\\
\END
\label{alg_diss}
\end{pseudocode}
\\
\vspace{-0.3cm}
Alg.~\ref{alg_diss}: Dissemination protocol.
\end{center}

\subsection*{Source Operation and the Source Subspace  $\Pi_S$}

As we discussed, the source needs to convey to the receivers $n$ source packets
that span the $n$-dimensional subspace  $\Pi_S=\sspan{\Bs{x}_1,\ldots,\Bs{x}_n}$,
with $\Pi_S\subseteq\mathbb{F}_q^\ell$. 
$\Pi_S$ is isomorphic to $\Fbb_q^n$; thus, for the purpose of studying relationships between subspaces of $\Pi_S$,
we can equivalently assume that $\Pi_S=\Fbb_q^n$, and that
 node~$v\in V$ at time $t$ observes a subspace $\Pi_v(t) \subseteq \Pi_S$.
This simplification is very natural in the case where we employ coding 
vectors, reviewed briefly in the following, as we only need consider the 
coding vectors for our purposes and ignore the remaining contents of the 
packets; however, we can also use the same approach in the case where the 
source performs noncoherent coding, described subsequently.

\paragraph{Use of Coding Vectors}
\label{subsec:NetCodCodVec}
To enable receivers to decode, the source assigns $n$ symbols of each
message vector (packet) to determine the linear relation between that packet
and the original vectors $\Bs{x}_i$, $i=1,\ldots,n$. Without loss of
generality, let us assume these $n$ symbols (which form a vector of
length $n$) are placed at the  beginning of each message vector. This
vector is called \emph{coding vector}. Each message vector $\Bs{x}_i$
contains two parts. The vector $\Bs{x}_i^C\in\mathbb{F}_q^n$ with length
$n$ is the coding vector and remaining part,
$\Bs{x}_i^I\in\mathbb{F}_q^{\ell-n}$, is the information part where
\[
\Bs{x}_i\triangleq [\Bs{x}_i^{C}\ |\ \Bs{x}_i^{I}].
\]
The coding vectors $\Bs{x}_i^C$, $i=1,\ldots,n$ are chosen such that they
form a basis for $\mathbb{F}_q^n$. For simplicity we assume
$\Bs{x}_i^C=\Bs{e}_i$ where $\Bs{e}_i\in\mathbb{F}_q^n$ is a vector with one at
position $i$ and zero elsewhere.

For our purposes, it is sufficient to restrict our algorithms to examine the coding vectors. Thus, the source has the space $\Pi_S=\mathbb{F}_q^n$; during the information dissemination,
if a node $v$  at time $t$  has collected $m$ packets
$\Bs{z}_i$ with coding vectors $\Bs{z}_i^C$, it has observed the subspace
$\Pi_v(t)=\sspan{\Bs{z}_1^C,\ldots,\Bs{z}_m^C}$. In other words, the coding vectors capture all the information we need for our applications.

\paragraph{Subspace Coding}\label{sec:Mod:ObliviousModel}

Our approach also works in the case of subspace coding,  that was introduced in
\cite{KoKs2007}. 
We next briefly explain the idea of communication using subspaces, in a
network performing randomized network coding. 

In the following, we use the same notation as
introduced in \S\ref{sec:GenNetOperation}. Let $\{\Bs{x}_1,\ldots,\Bs{x}_n\}$,
$\Bs{x}_i\in\mathbb{F}_q^\ell$ denote the set of packets the source
has. Assume that there is no error in the network. An arbitrary
receiver $R_v$ at node $v$ collects $m$ packets $\Bs{z}_i$, $i=1,\ldots,m$,
where each $\Bs{z}_i$ can be presented as $\Bs{z}_i=\sum_{j=1}^n h_{ij} \Bs{x}_j$. The
coefficients $h_{ij}$ are unknown and randomly chosen over
$\mathbb{F}_q$. 
In matrix form, the transmission model can be represented as
\[
\Bs{Z}_v=\Bs{H}_{Sv} \Bs{X},
\]
where $\Bs{H}_{Sv}\in\mathbb{F}_q^{m\times n}$ is a random matrix and
$\Bs{X}\in\mathbb{F}_q^{n\times\ell}$ is the matrix whose rows are the
sources' packets.

The matrices $\Bs{H}_{Sv}$ are randomly chosen, under constraints
imposed by the network topology. As stated in \cite{KoKs2007} and
proved in \cite{JaFrDi2008-ISIT,SiKschKo-CapRandNetCod,JaMoFrDi-IT11}, the above model naturally leads 
to consider information transmission not via the choice of $\Bs{x}_i$ but
rather by the choice of the vector space spanned by $\{\Bs{x}_i\}$.

In the case of subspace coding, the dissemination algorithm works in exactly the same way as in the case of coding vectors; what changes is how the source maps the information to the packets it transmits, and how decoding occurs. However, this is orthogonal to our purposes, since we perform no decoding of the information messages, but simply observe the relationship between the subspaces different nodes in the network collect. Thus, the same approach can be applied in this case as well.

\subsection{Input to Algorithms}

We are interested in designing algorithms that leverage the relationships between subspaces observed at different network nodes for network management and control.
The algorithms design will depend on the
information that we have access to. We distinguish between the following.
\begin{itemize}
\item {\em Global information}: A central entity knows the subspaces
  that all $\vartheta$ nodes in the network have observed.
\item {\em Local Information:} There is no such omniscient entity, and
  each node $v$ only knows what it has received, its own subspace
  $\Pi_v$.
\end{itemize} 
We may also have information between these two extreme cases. Moreover,
we may have a {\em static view}, where we take a snapshot of the
network at a given time instant $t$, or a {\em non-static view}, where
we take several snapshots of the network and use the subspaces'
evolution to design an algorithm.

We will argue in Section \ref{chp:TopInfer} that capturing even global information can be accomplished with relatively low overhead (sending one additional packet per node at the end of the dissemination protocol); thus, the algorithms we develop even assuming global information can in fact be implemented almost passively and at low cost.

\section{Properties of Random Vector Spaces over a Finite Field $\mathbb{F}_q^n$}
\label{chp:NotesVecSpace}

In this section, we will state and prove basic properties and results that we will
exploit towards various applications in the following sections. In
particular, we will investigate the properties of random sampling from
vector spaces over a finite field. 
Such properties 
give us a better insight and understanding of randomized
network coding and form a foundation 
for the results and algorithms presented in this paper.

\subsection{Sampling Subspaces over $\mathbb{F}_q^n$}
Here, we explore properties of randomly sampled
subspaces from a vector space $\mathbb{F}_q^n$. 
We start with the following lemma that explores properties of a single subspace.

\begin{lemma}\label{lem:rand_choose_full_rank} 
Suppose we choose $m$ vectors from an $n$-dimensional vector space
$\Pi_S = \mathbb{F}_q^n$ uniformly at random to construct a space
$\Pi$. Then the subspace $\Pi$ will be
full rank (has dimension $\min[m,n]$) w.h.p. (with high
probability)\footnote{Throughout this paper, when we talk about an
event occurring with high probability, we mean that its
probability behaves like $1-\bigo{q^{-1}}$, which goes to
$1$ as $q\rightarrow\infty$.}, namely,
\begin{equation*}
\Prob{\dim(\Pi)=\min[m,n]} = [1-\bigo{q^{-1}}]. 
\end{equation*}
\end{lemma}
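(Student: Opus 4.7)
The plan is to sample the $m$ vectors $\Bs{v}_1,\ldots,\Bs{v}_m$ one at a time and apply a union bound over the events that a newly drawn vector lies in the span of its predecessors. This is the standard sequential-coupling view of random sampling from $\Fbb_q^n$, and each ``collision'' costs a factor of roughly $q^{-1}$, which is what produces the target failure bound $\bigo{q^{-1}}$.

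For the case $m\le n$, conditioned on $\Bs{v}_1,\ldots,\Bs{v}_{i-1}$ being linearly independent, their span $\sspan{\Bs{v}_1,\ldots,\Bs{v}_{i-1}}$ contains exactly $q^{i-1}$ vectors out of the $q^n$ possible choices for $\Bs{v}_i$, so the conditional probability of a collision at step $i$ is $q^{i-1-n}$. A union bound over $i=1,\ldots,m$ yields the geometric sum $\sum_{i=1}^{m} q^{i-1-n}$, which is dominated by its largest term $q^{m-1-n}$ and hence bounded by $\bigo{q^{-(n-m+1)}}=\bigo{q^{-1}}$ (the weakest case being $m=n$). For the case $m>n$, the dimension of the span is monotone non-decreasing in the number of samples, so $\dim(\Pi)=n$ is implied by the first $n$ samples being linearly independent; this reduces to the $m=n$ subcase and again yields failure probability $\bigo{q^{-1}}$.

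There is no real obstacle; the argument is a routine application of a union bound to the sequential view of building a random subspace. The only care required is to separate the two regimes $m\le n$ and $m>n$ so that the bound matches $\min[m,n]$, and to note that the geometric series is dominated by its largest term, which corresponds to the last sample in the $m\le n$ case.
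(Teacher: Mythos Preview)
Your argument is correct. The sequential view with a union bound (or equivalently the chain rule $\Pr[\text{all independent}]=\prod_{i=1}^{m}(1-q^{i-1-n})$ together with $1-\prod_i(1-p_i)\le\sum_i p_i$) gives exactly the $\bigo{q^{-1}}$ failure bound, and the reduction from $m>n$ to $m=n$ via monotonicity is fine.

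The paper proceeds differently: it fixes a basis, identifies the $m$ random vectors with a uniformly random matrix in $\Fbb_q^{m\times n}$, and invokes the exact count $N_{m,n}(k)$ of $m\times n$ matrices of rank $k$ over $\Fbb_q$ (citing \cite{LaTh-MathScand94}). Dividing by $q^{mn}$ and Taylor-expanding the product yields $\Pr[\dim(\Pi)=k]=q^{-(m-k)(n-k)}[1-\bigo{q^{-1}}]$, from which the lemma follows by setting $k=\min[m,n]$. So the paper actually obtains the asymptotic probability of \emph{every} rank value, not just an upper bound on the failure event. Your route is more elementary and self-contained (no external enumeration formula needed), while the paper's route gives strictly more information as a by-product, though that extra information is not used elsewhere in the paper.
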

\begin{proof}
Refer to Appendix~\ref{apndx:Proofs}.
\end{proof}

We conclude that for large values of $q$, selecting $m\le n$ vectors uniformly at random
from $\mathbb{F}_q^n$ to construct a subspace $\Pi$ is equivalent to
choosing an $m$-dimensional subspace from $\mathbb{F}_q^n$ uniformly at random.
Note that this is not true for small values of $q$.

We next examine connections between multiple subspaces.
\begin{lemma}\label{lem:subset}
Let $\Pi_1$ and $\Pi_2$ be two subspaces of $\Pi_S=\mathbb{F}_q^n$ with
dimension $d_1$ and $d_2$ respectively, intersection of dimension
$d_{12}$ and  $\Pi_1\nsubseteq\Pi_2$ (\ie, $d_{12}<d_1$). 
Construct $\Pi_1'$ by choosing $m$ vectors from $\Pi_1$
uniformly at random. Then \mbox{$\Prob{\Pi_1' \subset
\Pi_2}=\bigo{q^{-m}}.$}
\end{lemma}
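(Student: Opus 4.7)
The plan is to reduce the event $\Pi_1' \subset \Pi_2$ to an event about the $m$ randomly chosen vectors themselves, and then exploit independence. By construction $\Pi_1' \subseteq \Pi_1$, so $\Pi_1' \subseteq \Pi_2$ if and only if $\Pi_1' \subseteq \Pi_1 \cap \Pi_2$; equivalently, every one of the $m$ sampled vectors must lie in the intersection $\Pi_1 \cap \Pi_2$. Thus I only need to estimate the probability that $m$ i.i.d.\ uniform draws from $\Pi_1$ all land in the sub-subspace $\Pi_1 \cap \Pi_2$.

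Next I would invoke the counting facts $|\Pi_1| = q^{d_1}$ and $|\Pi_1 \cap \Pi_2| = q^{d_{12}}$. A single uniform draw from $\Pi_1$ lies in $\Pi_1 \cap \Pi_2$ with probability $q^{d_{12}}/q^{d_1} = q^{d_{12}-d_1}$, and by independence of the $m$ draws,
\[
\Prob{\Pi_1' \subseteq \Pi_1 \cap \Pi_2} = q^{-m(d_1-d_{12})}.
\]
Because the hypothesis $\Pi_1 \nsubseteq \Pi_2$ gives $d_{12} < d_1$, i.e.\ $d_1 - d_{12} \geq 1$, this is at most $q^{-m}$, which is $O(q^{-m})$ in the field-size convention fixed earlier in the paper.

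There is no real obstacle here; the argument is essentially a one-line union-bound-free computation once the equivalence $\Pi_1' \subseteq \Pi_2 \iff \Pi_1' \subseteq \Pi_1 \cap \Pi_2$ is recorded. The only mild subtlety is to note that whether the containment in the statement is interpreted as strict or non-strict does not matter, since the event of strict containment is contained in the event of non-strict containment, so the same upper bound applies. Unlike Lemma~\ref{lem:rand_choose_full_rank}, we do not need to appeal to any genericity argument about the rank of the sampled vectors, because the event we care about is monotone in the set of samples rather than in the subspace they span.
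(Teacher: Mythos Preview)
Your proof is correct and follows essentially the same approach as the paper: both reduce to the probability that all $m$ independently sampled vectors from $\Pi_1$ land in the intersection $\Pi_1\cap\Pi_2$, compute this as $(q^{d_{12}}/q^{d_1})^m = q^{(d_{12}-d_1)m}$, and conclude $O(q^{-m})$ from $d_{12}<d_1$. Your write-up is simply more explicit about the equivalence $\Pi_1'\subseteq\Pi_2\iff\Pi_1'\subseteq\Pi_1\cap\Pi_2$ and the strict/non-strict containment point, which the paper leaves implicit.
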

\begin{proof}
Refer to Appendix~\ref{apndx:Proofs}. 
\end{proof}

\begin{lemma}\label{lem:JointSpacePi_k}
  Suppose $\Pi_k$ is a $k$-dimensional subspace of a vector space
  $\Pi_S=\mathbb{F}_q^n$. Select $m$ vectors uniformly at random from
  $\Pi_S$ to construct the subspace $\Pi$. We have
\begin{eqnarray}
\dim(\Pi\cap\Pi_k)&=& \min[k,(m-(n-k))^+] \nonumber\\
&=&\left(\min[m,n]+k-n\right)^+,
\end{eqnarray}
with probability $1-\bigo{q^{-1}}$. 
\end{lemma}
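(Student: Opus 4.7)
My plan is to combine Lemma~\ref{lem:rand_choose_full_rank} (which pins down $\dim(\Pi)$) with a direct counting argument on random subspaces to control $\dim(\Pi \cap \Pi_k)$ from above, together with the standard dimension formula for the lower bound.

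First I would separate the trivial case $m \ge n$: by Lemma~\ref{lem:rand_choose_full_rank}, $\Pi = \Pi_S$ with probability $1 - O(q^{-1})$, so $\Pi \cap \Pi_k = \Pi_k$ has dimension $k$, which matches $(\min[m,n] + k - n)^+ = k$. So I may assume $m \le n$, in which case Lemma~\ref{lem:rand_choose_full_rank} gives $\dim(\Pi) = m$ w.h.p. Under this event, the standard identity $\dim(\Pi + \Pi_k) = \dim(\Pi) + \dim(\Pi_k) - \dim(\Pi \cap \Pi_k) \le n$ immediately yields the lower bound $\dim(\Pi \cap \Pi_k) \ge (m + k - n)^+$ deterministically.

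For the matching upper bound, I would use the remark following Lemma~\ref{lem:rand_choose_full_rank}: up to an $O(q^{-1})$ event, sampling $m \le n$ uniform vectors is equivalent to sampling a uniformly random $m$-dimensional subspace of $\Pi_S$. Set $j^\star = (m + k - n)^+ + 1$. For a uniformly random $m$-dim subspace $\Pi$, a union bound over all $j^\star$-dimensional subspaces $W$ of $\Pi_k$ gives
\[
\Prob{\dim(\Pi \cap \Pi_k) \ge j^\star} \;\le\; \gaussnum{k}{j^\star}{q} \cdot \frac{\gaussnum{n-j^\star}{m-j^\star}{q}}{\gaussnum{n}{m}{q}} \;=\; \frac{\gaussnum{k}{j^\star}{q}\gaussnum{m}{j^\star}{q}}{\gaussnum{n}{j^\star}{q}},
\]
where the last equality uses the symmetry identity $\gaussnum{n}{m}{q}\gaussnum{m}{j}{q} = \gaussnum{n}{j}{q}\gaussnum{n-j}{m-j}{q}$.

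Using the asymptotic $\gaussnum{a}{b}{q} = q^{b(a-b)}(1 + O(q^{-1}))$, this bound is of order $q^{\,j^\star(k + m - n - j^\star)}$. I would then verify in both remaining cases that the exponent is $\le -1$: if $m + k \le n$, then $j^\star = 1$ and the exponent is $(k + m - n) - 1 \le -1$; if $m + k > n$, then $j^\star = m + k - n + 1$ and the exponent equals $-j^\star \le -1$. Either way the probability is $O(q^{-1})$, so $\dim(\Pi \cap \Pi_k) \le (m + k - n)^+$ w.h.p., completing the proof after combining with the lower bound and the $O(q^{-1})$ failure probability from Lemma~\ref{lem:rand_choose_full_rank}.

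The main obstacle is the upper bound: the lower bound is a soft dimension count, but proving that the intersection is not accidentally larger than forced by dimension requires the careful Gaussian-binomial manipulation above. In particular I want to make sure the union bound is applied at the critical dimension $j^\star$ (rather than, say, summing over all $j$), since this is where the exponent in $q$ barely dips below zero; any sloppier choice of $j$ could give a bound that does not go to zero.
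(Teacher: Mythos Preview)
Your argument is correct, but it takes a genuinely different route from the paper's proof. The paper proceeds \emph{sequentially}: it sets $\Pi(0)=\Pi_k$ and adjoins the random vectors $\Bs{v}_1,\ldots,\Bs{v}_m$ one at a time, using Lemma~\ref{lem:subset} (with $m=1$) at each step to conclude that $\dim(\Pi(i))=\dim(\Pi(i-1))+1$ w.h.p.\ until the ambient dimension $n$ is reached. This immediately yields $\dim(\Pi+\Pi_k)=\min[m+k,n]$ w.h.p., and one application of the dimension formula together with Lemma~\ref{lem:rand_choose_full_rank} finishes. Your approach is instead \emph{global}: you condition on $\dim(\Pi)=m$, pass to the uniform distribution on $m$-dimensional subspaces, and bound $\Prob{\dim(\Pi\cap\Pi_k)\ge j^\star}$ by a Gaussian-binomial union bound. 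The paper's route is shorter and avoids any Gaussian-binomial identities or asymptotics, relying only on the already-proved Lemma~\ref{lem:subset}; your route, while requiring slightly more machinery, yields an explicit and potentially sharper tail bound on the intersection dimension and is closer in spirit to standard ``subspace-in-general-position'' counting arguments.
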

\begin{proof}
Refer to Appendix~\ref{apndx:Proofs}. 
\end{proof}

\begin{corollary}\label{cor:SpaceJointDim}
  Suppose $\Pi_1$ and $\Pi_2$ are two subspace of $\mathbb{F}_q^n$
  with dimension $d_1$ and $d_2$ respectively and joint dimension
  $d_{12}$. Let us take $m_1$ vectors uniformly at random from $\Pi_1$
  and $m_2$ vectors from $\Pi_2$ to construct subspaces $\hat{\Pi}_1$
  and $\hat{\Pi}_2$. We have
\begin{align*}
\dim(\hat{\Pi}_1\cap\hat{\Pi}_2)=& \min \left[d_{12},(m_1+m_2-(d_1+d_2-d_{12}))^+,\right.\\
&\left. (m_1-(d_1-d_{12}))^+, (m_2-(d_2-d_{12}))^+\right],
\end{align*}
with probability $1-\bigo{q^{-1}}$.
\end{corollary}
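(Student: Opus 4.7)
The plan is to apply Lemma~\ref{lem:JointSpacePi_k} twice in succession, first inside the ambient space $\Pi_1$ and then inside the ambient space $\Pi_2$, and finally to simplify the resulting nested minimum.

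First, I would observe that $\hat{\Pi}_1 \subseteq \Pi_1$ implies $\hat{\Pi}_1 \cap \Pi_2 = \hat{\Pi}_1 \cap (\Pi_1 \cap \Pi_2)$. Taking $\Pi_1$ as the ambient $n$-dimensional space (here of dimension $d_1$), $\Pi_1 \cap \Pi_2$ as the fixed $k$-dimensional subspace (here of dimension $d_{12}$), and $\hat{\Pi}_1$ as the subspace generated by $m_1$ uniformly random vectors drawn from $\Pi_1$, Lemma~\ref{lem:JointSpacePi_k} yields
\[
\dim(\hat{\Pi}_1 \cap \Pi_2) \;=\; \bigl(\min[m_1,d_1] + d_{12} - d_1\bigr)^{+} \;=:\; d_1'
\]
with probability $1 - \bigo{q^{-1}}$.

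Next, I would condition on $\hat{\Pi}_1$ so that $\Pi' := \hat{\Pi}_1 \cap \Pi_2$ becomes a fixed subspace of $\Pi_2$ of dimension $d_1'$ (on the high-probability event above). Because $\hat{\Pi}_2 \subseteq \Pi_2$, we have $\hat{\Pi}_1 \cap \hat{\Pi}_2 = \Pi' \cap \hat{\Pi}_2$. Since the samples generating $\hat{\Pi}_2$ are independent of those generating $\hat{\Pi}_1$, a second application of Lemma~\ref{lem:JointSpacePi_k}, now with ambient space $\Pi_2$ (dimension $d_2$), fixed subspace $\Pi'$ (dimension $d_1'$), and random subspace $\hat{\Pi}_2$ built from $m_2$ uniform samples in $\Pi_2$, gives
\[
\dim(\hat{\Pi}_1 \cap \hat{\Pi}_2) \;=\; \bigl(\min[m_2,d_2] + d_1' - d_2\bigr)^{+}
\]
with conditional probability $1 - \bigo{q^{-1}}$. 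A union bound over the two high-probability events gives the same unconditional bound. The one subtlety here is that the probability guarantee of Lemma~\ref{lem:JointSpacePi_k} must be uniform over the choice of the fixed subspace, which it is, so the conditioning on $\hat{\Pi}_1$ is harmless.

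What remains is purely algebraic: to show that the nested expression $\bigl(\min[m_2,d_2] + (\min[m_1,d_1] + d_{12} - d_1)^{+} - d_2\bigr)^{+}$ equals the four-term minimum in the statement. I would do this by case analysis on which of the two arguments of each inner $\min[\,\cdot\,,\,\cdot\,]$ is active and whether the $(\cdot)^{+}$ operators are truncated, namely the three regimes $m_1 \ge d_1$, $d_1 - d_{12} \le m_1 < d_1$, and $m_1 < d_1 - d_{12}$. In each regime the nested formula collapses to two of the four candidate values, and one checks directly that the other two candidates in the claimed minimum are no smaller. This bookkeeping with positive parts is the one place that requires care, but it is elementary; the probabilistic content of the corollary is already contained in the two invocations of Lemma~\ref{lem:JointSpacePi_k}.
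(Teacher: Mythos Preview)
Your proposal is correct and follows essentially the same route as the paper: two successive applications of Lemma~\ref{lem:JointSpacePi_k}, first in the ambient space $\Pi_1$ with fixed subspace $\Pi_1\cap\Pi_2$ to obtain $\dim(\hat{\Pi}_1\cap\Pi_2)=\beta$, and then in the ambient space $\Pi_2$ with fixed subspace $\hat{\Pi}_1\cap\Pi_2$ to obtain $\dim(\hat{\Pi}_1\cap\hat{\Pi}_2)$, followed by an algebraic unpacking of the nested expression. The paper is terser about the conditioning and leaves the final four-term identification implicit, whereas you make both explicit, but the argument is the same.
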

\begin{proof}
Refer to Appendix~\ref{apndx:Proofs}.
\end{proof}

By choosing $\Pi_1=\Pi_2=\mathbb{F}_q^n$ in Corollary~\ref{cor:SpaceJointDim} we have the following corollary.
\begin{corollary}
  Let us construct two subspaces $\hat{\Pi}_1$ and $\hat{\Pi}_2$ by
  choosing $m_1$ and $m_2$ vectors uniformly at random respectively
  from $\mathbb{F}_q^n$. Then the subspaces $\hat{\Pi}_1$ and
  $\hat{\Pi}_2$ will be disjoint with probability
  $1-\bigo{q^{-1}}$ if $m_1+m_2\le n$.
\end{corollary}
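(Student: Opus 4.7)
The plan is to derive this corollary as a direct specialization of Corollary~\ref{cor:SpaceJointDim}, where the roles of $\Pi_1$ and $\Pi_2$ are played by the ambient space $\mathbb{F}_q^n$ itself.

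First, I would set $\Pi_1 = \Pi_2 = \mathbb{F}_q^n$ in Corollary~\ref{cor:SpaceJointDim}, so that $d_1 = d_2 = d_{12} = n$. Substituting these into the formula for $\dim(\hat{\Pi}_1 \cap \hat{\Pi}_2)$, the four arguments of the outer $\min$ become
\[
\min\bigl[\,n,\ (m_1+m_2-n)^+,\ m_1,\ m_2\,\bigr],
\]
since $d_i - d_{12} = 0$ for $i = 1, 2$. The hypothesis $m_1 + m_2 \le n$ immediately forces $(m_1+m_2-n)^+ = 0$, and hence the minimum equals $0$. By Corollary~\ref{cor:SpaceJointDim} this holds with probability $1 - \bigo{q^{-1}}$.

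Next, I would interpret this conclusion: saying $\dim(\hat{\Pi}_1 \cap \hat{\Pi}_2) = 0$ means the intersection is the trivial subspace $\{\boldsymbol{0}\}$, which is precisely what ``disjoint'' signifies in the context of vector spaces. Thus with probability $1 - \bigo{q^{-1}}$ the two random subspaces share only the zero vector, establishing the claim.

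There is essentially no obstacle here beyond bookkeeping. The one subtle point worth flagging is to recall (implicitly invoking Lemma~\ref{lem:rand_choose_full_rank}) that ``disjoint'' is compatible with $\hat{\Pi}_i$ possibly having dimension strictly less than $m_i$; but under the hypothesis $m_1 + m_2 \le n$ we automatically have $m_i \le n$, so $\hat{\Pi}_i$ is of full rank $m_i$ w.h.p., and the statement is consistent and tight. No separate calculation is needed, as all the probabilistic heavy lifting has already been done in the preceding results.
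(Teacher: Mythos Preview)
Your proposal is correct and follows exactly the approach the paper takes: the corollary is stated immediately after Corollary~\ref{cor:SpaceJointDim} with the remark that it follows by choosing $\Pi_1=\Pi_2=\mathbb{F}_q^n$, and your substitution $d_1=d_2=d_{12}=n$ together with $(m_1+m_2-n)^+=0$ is precisely the intended computation.
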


We are now ready to discuss one of the important properties 
of randomly chosen subspaces which is very useful for our work: randomly
selected subspaces tend to be ``as far as possible''. We will clarify
and make precise what we mean by ``as far as possible'', see also  \cite{BaFr}. We first review the
definition of a subspace in general position with respect to a family of
subspaces.
\begin{definition}[{\cite[Chapter~3]{BaFr}}]
  Let $\Pi_S$ be an $n$-dimensional space over the field $\mathbb{F}_q$ and for
  $i=1,\ldots,r$, let $\Pi_i$ be a subspace of $\Pi_S$,
 with $\dim(\Pi_i)=d_i$. A subspace $\Pi\subseteq \Pi_S$ of dimension $d$ is
  in general position with respect to the family $\{\Pi_i\}$ if
\begin{equation}\label{equ:genpos}
\dim(\Pi_i\cap \Pi)=\max\left[d_i+d-n,0 \right],\quad \forall i\in\{1,\ldots,r\}.
\end{equation}
\end{definition}

It should be noted that $\max[d_i+d-n,0]$ is the minimum possible
dimension of $(\Pi_i\cap \Pi)$. So what the above definition says is
that the intersection of $\Pi$ and each $\Pi_i$ is as small as
possible. Using the above definition we can state the following
theorem\footnote{Versions of this theorem can be easily derived from results in the literature \cite{BaFr}, 
but we repeat here the short derivation for completeness}.
\begin{theorem}\label{thm:SubSpaceGenPos}
  Suppose $\{\Pi_i\}$, $i=1,\ldots,r$, are subspaces of
  $\Pi_S=\mathbb{F}_q^n$. Let us construct a subspace $\Pi$ by randomly
  choosing $m$ vectors from $\Pi_S$. Then $\Pi$ will be in general
  position with respect to the family $\{\Pi_i\}$ 
  w.h.p., \ie, with probability $1-\bigo{q^{-1}}$.
\end{theorem}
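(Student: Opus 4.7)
The plan is to reduce the theorem directly to Lemma~\ref{lem:JointSpacePi_k} by applying it once for each $\Pi_i$ and then taking a union bound over $i=1,\ldots,r$.

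First I would fix the dimension of $\Pi$. Since $\Pi$ is built by choosing $m$ vectors uniformly at random from $\Pi_S=\mathbb{F}_q^n$, Lemma~\ref{lem:rand_choose_full_rank} gives $d=\dim(\Pi)=\min[m,n]$ with probability $1-O(q^{-1})$. Condition on this event.

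Next, for each fixed $i\in\{1,\ldots,r\}$, apply Lemma~\ref{lem:JointSpacePi_k} with $\Pi_k$ chosen to be $\Pi_i$ and $k=d_i$. That lemma gives
\[
\dim(\Pi\cap\Pi_i)=\bigl(\min[m,n]+d_i-n\bigr)^{+}=\max[d+d_i-n,0],
\]
with probability $1-O(q^{-1})$, which is exactly the defining equality~\eqref{equ:genpos} of general position for that single index $i$.

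Finally, taking a union bound over the $r$ events (together with the event that $\dim(\Pi)=\min[m,n]$), the probability that general position fails for \emph{some} $i$ is at most $(r+1)\cdot O(q^{-1})=O(q^{-1})$, since $r$ is a fixed constant independent of $q$. Hence $\Pi$ is in general position with respect to the whole family $\{\Pi_i\}$ with probability $1-O(q^{-1})$.

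There is essentially no obstacle here beyond correctly invoking the previous lemma; the only subtlety is making sure that the constant hidden in the big-$O$ does not blow up, which is why the statement implicitly treats $r$ as a constant with respect to $q$. If one wanted an explicit dependence on $r$, the union bound above gives a failure probability of the form $cr/q$ for an absolute constant $c$, which is $O(q^{-1})$ whenever $r=o(q)$.
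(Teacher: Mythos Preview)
Your proposal is correct and follows essentially the same approach as the paper: apply Lemma~\ref{lem:JointSpacePi_k} to each $\Pi_i$ individually and then take a union bound over $i=1,\ldots,r$. The paper's proof is slightly terser (it does not separately invoke Lemma~\ref{lem:rand_choose_full_rank}, since the conclusion of Lemma~\ref{lem:JointSpacePi_k} already has $\min[m,n]$ built in), but your extra conditioning step is harmless and arguably makes the matching with~\eqref{equ:genpos} more transparent.
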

\begin{proof}
Refer to Appendix~\ref{apndx:Proofs}. 
\end{proof}

Theorem~\ref{thm:SubSpaceGenPos} demonstrates a nice property of
randomized network coding where the subspaces spanned by coding
vectors tend to be as far as possible on different paths of the 
network.

\subsection{Rate of Innovative Packets}
In the following sections, we will need to know the rate of receiving
innovative message vectors (packets) at receivers in a dissemination protocol performing
randomized network coding. By innovative we refer to vectors that do not belong in the space spanned by already collected packets.
As it is shown in \cite{AhlCaiLiYeu}, the
source can multicast at rate equal to the minimum min-cut of all
receivers if the intermediate nodes can combine the incoming messages. 
Moreover, it is shown in \cite{LiYeCa2003_InfoTheoryTran} that using 
linear combinations is sufficient to achieve information transfer at a rate equal to the minimum mincut of all receivers.
In \cite{AhlCaiLiYeu,HoKoMeEfShKa2006}, it is also demonstrated that choosing the coefficients 
of the linear combinations randomly is sufficient (no network-specific code design is required) with high 
probability if the field size is large enough.

To find the rate of receiving information at each node where the
implemented dissemination protocol performs randomized network 
coding, we can use the following result given in Theorem~\ref{thm:RandNetCodRate}.
Note that our described dissemination protocol, although 
very common in practice, does not exactly fit to the 
previous theoretical results in the literature that 
examine rates, because  the operation of the network nodes is
not memory-less. That is, while for example in 
\cite{HoKoMeEfShKa2006,AhlCaiLiYeu,LiYeCa2003_InfoTheoryTran} 
each transmitted packet at time $t$ is a function of a 
small subset of the received packets up to time $t$ (the 
ones corresponding to the same information message), in our 
case a packet transmitted at time $t$ is a random linear 
combination of all packets received up to time $t$.
This small variant of the main theorem on randomized 
network coding is very intuitive, and we formally state it in following.
\begin{theorem}\label{thm:RandNetCodRate}
Consider a source that transmits  $n$ packets over a connected network using
the dissemination protocol described in \S\ref{sec:GenNetOperation},
and assume that the network nodes perform random linear network coding
over a sufficiently large finite field.
Then there exists $t_0$ such that for all $t>t_0$ 
each node $v$ in the network receives $c_v$   independent linear combinations of the $n$ source packets per time slot, where  $c_v=\textnormal{mincut}(v)$.
\end{theorem}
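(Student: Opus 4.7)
The goal is to show that once the initial transients are over, the rate at which node $v$ acquires linearly independent source-packet combinations matches the min-cut $c_v$. My plan is to reduce the claim to the classical result of Ho~\emph{et~al.}~\cite{HoKoMeEfShKa2006} via max-flow/min-cut, and to handle the "memory" in our protocol by embedding the standard memoryless transfer-matrix argument as a special case.

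First I would invoke the hypothesis~\eqref{eq_rate} on the allocated rates, which guarantees flow conservation at every node in the equivalent graph, so that the min-cut from $S$ to $v$ in that graph is indeed $c_v$. By Menger's theorem there exist $c_v$ edge-disjoint paths $\mathcal{P}_1,\ldots,\mathcal{P}_{c_v}$ from $S$ to $v$ in the equivalent graph, each of unit capacity per time slot. Next I would choose $t_0$ to be larger than $\max_u \tau_u$ plus the diameter of $G$ so that, for $t>t_0$, every ancestor of $v$ has already started transmitting at its full allocated outgoing rate, and the $c_v$ packets that $v$ receives from the paths $\mathcal{P}_1,\ldots,\mathcal{P}_{c_v}$ during slot $t$ are all random linear combinations of source packets.

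The central step is then to show that, w.h.p., these $c_v$ packets extend the existing subspace $\Pi_v(t-1)$ to one of dimension $\min(d_v(t-1)+c_v,n)$. I would do this algebraically: writing the coefficients of the received vectors with respect to the source basis as polynomials in the local random coding variables, one obtains a transfer matrix whose determinant (of the appropriate $c_v\times c_v$ minor, or of the block that measures the increment over $\Pi_v(t-1)$) is a polynomial $P$ in the local coefficients. I would exhibit a \emph{deterministic} assignment of local coefficients on the edges of $\mathcal{P}_1,\ldots,\mathcal{P}_{c_v}$ (e.g.\ the identity routing along each disjoint path, zero elsewhere) that forwards $c_v$ unused source symbols to $v$ in slot $t$, and would observe that this assignment makes $P$ nonzero. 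Hence $P$ is a non-identically-zero polynomial in finitely many variables of bounded degree, so by the Schwartz--Zippel bound it is nonzero w.p.\ $1-O(q^{-1})$ under uniformly random coefficients; equivalently, by Lemma~\ref{lem:JointSpacePi_k} / Theorem~\ref{thm:SubSpaceGenPos} applied to the subspaces arriving at $v$ on the different edge-disjoint paths, those subspaces sit in general position w.r.t.\ one another and w.r.t.\ $\Pi_v(t-1)$.

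The main obstacle is precisely the non-memoryless feature emphasized in the theorem's preamble: a packet sent at time $t$ is a random combination of all packets ever received at that node, not just of the current slot's packets. This could a priori create correlations that defeat an entry-by-entry Schwartz--Zippel count. I would handle this by working on the standard time-expanded (acyclic) graph, where each "memory" node at time $t$ is split into one vertex per time slot with auxiliary unit-capacity "memory" edges carrying past packets forward; the resulting coding is then memoryless on the time-expanded graph and the classical polynomial transfer-matrix formalism applies directly. The edge-disjoint paths $\mathcal{P}_1,\ldots,\mathcal{P}_{c_v}$ lift to edge-disjoint paths in the time-expanded graph, which provides the nonzero deterministic witness needed above. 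Taking the union bound of the $O(q^{-1})$ failure events over the (finite, in each finite time horizon) slots $t\in(t_0,t_\text{end}]$ during which $d_v(t)<n$ completes the argument.
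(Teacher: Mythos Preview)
Your proposal is correct and follows essentially the same approach as the paper: edge-disjoint paths from Menger, a polynomial transfer-matrix argument with a deterministic routing assignment as the nonzero witness, Schwartz--Zippel, and a union bound over nodes and time slots. The only difference is formalism for handling the memory: the paper builds an explicit state-space model (matrices $\Bs{F}_T$, $\Bs{U}_T$, etc.) so that the transfer matrix $\Bs{\hat H}_{Sv}(T)$ is directly a polynomial of degree at most $T$ in the local coefficients, whereas you propose to unfold into a time-expanded acyclic graph and invoke the memoryless result of \cite{HoKoMeEfShKa2006}; these are equivalent encodings of the same argument.
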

\begin{proof}
Refer to Appendix~\ref{apndx:ProofThmRandNetCodeRate}.
\end{proof}

Given Theorem~\ref{thm:RandNetCodRate}, we can state the following definition.
\begin{definition}
  For a specific information dissemination protocol over a network, 
  we define the \emph{steady state}  as the time period during which each node $v$ in the network  receives exactly $c_v$ independent linear combinations of the $n$ source packets per time slot and none of
  the nodes, except source $S$, has collected $n$ linearly independent
  combinations. 
We call the  time that the network enters  steady state phase
  the \emph{steady state starting time} and denote it by $T_s$.
If the network never attains the steady state phase then we use
$T_s=\infty$.
\end{definition}

For our protocol in \S\ref{sec:GenNetOperation}, $T_s$ depends not only 
on the network topology, but also on the waiting times $\tau_v$. For
the waiting time defined in Definition~\ref{def:waiting_time} we can
upper bound $T_s$ as stated in Lemma~\ref{lem:UpperBound_Ts}.
\begin{lemma}\label{lem:UpperBound_Ts}
If $n$ is large enough, for the dissemination protocol given in \S\ref{sec:GenNetOperation}
we may upper bound the steady state starting time as follows
\[
T_s\le 2 D(G)-1,
\]
where $D(G)$ is the longest path from the source to other nodes
in the network\footnote{Note that $D(G)$ is different from the longest
shortest path which is called diameter of $G$ in the graph theory literature.}.
\end{lemma}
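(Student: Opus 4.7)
The plan is to induct on the depth $h(v)$, defined as the length of the longest path from the source $S$ to node $v$ (so $D(G) = \max_v h(v)$), showing inductively that every node $v$ begins receiving at its min-cut rate $c_v$ by timeslot $2h(v) - 1$ and therefore satisfies $\tau_v \le 2h(v)$. The bound on $T_s$ will then follow by taking $v$ to be any node attaining depth $D(G)$.

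For the base case, $h(S) = 0$ and by convention $\tau_S = 0$, so the source begins transmitting $r_e$ random linear combinations on each outgoing edge starting at $t = 1$. For the inductive step, assume the claim holds for all proper ancestors of $v$. Every parent $u$ of $v$ satisfies $h(u) \le h(v) - 1$, hence by the induction hypothesis $\tau_u \le 2h(u) \le 2h(v) - 2$, so each parent is already transmitting at its allocated rate $r_e$ at every timeslot $t \ge 2h(v) - 1$. Because the rate allocation protocol is assumed to realize the min-cut, \ie, $\sum_{e \in \In(v)} r_e = c_v$, and because $\dim(\Pi_u(\tau_u)) \ge c_u + 1 > r_e$ at each parent, one can invoke Theorem~\ref{thm:RandNetCodRate} (together with Lemma~\ref{lem:rand_choose_full_rank}) to conclude that, for sufficiently large $q$, the $c_v$ packets arriving at $v$ during any timeslot $t \ge 2h(v) - 1$ are linearly independent in $\Pi_S$ with high probability. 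Consequently $\dim(\Pi_v(2h(v) - 1)) \ge c_v$ and $\dim(\Pi_v(2h(v))) \ge 2 c_v \ge c_v + 1$, so both conditions of Definition~\ref{def:waiting_time} are met at $t = 2h(v)$, giving $\tau_v \le 2h(v)$ and completing the induction.

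The conclusion $T_s \le 2 D(G) - 1$ then follows, since every node begins receiving at its full rate $c_v$ by timeslot $2 h(v) - 1 \le 2 D(G) - 1$. The main technical point is the appeal to Theorem~\ref{thm:RandNetCodRate}: one must verify that the $c_v$ random combinations arriving at $v$ on its incoming edges during a single timeslot are not only locally innovative at each parent, but jointly span a $c_v$-dimensional subspace of $\Pi_S$. This is exactly the min-cut-achieving property of randomized network coding over a large field, and the hypothesis that $n$ is large enough is used only to guarantee that no node has exhausted $\Pi_S$ during the transient $t \le 2 D(G) - 1$, so that the steady state phase is nonempty.
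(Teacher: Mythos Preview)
Your proof is correct and follows essentially the same route as the paper: both establish the recursion $\tau_v \le 2 + \max_{u\in P(v)}\tau_u$ (which you phrase as an induction on the depth $h(v)$, yielding $\tau_v \le 2h(v)$) and then read off $T_s \le 2D(G)-1$. The paper's version is terser---it simply asserts the recursion from Definition~\ref{def:waiting_time} and unrolls it---whereas you explicitly check both conditions of the waiting time and invoke the random-coding machinery; one small slip is the claim $\sum_{e\in\In(v)} r_e = c_v$, which is only $\ge c_v$ in general, but the needed conclusion (that $v$ receives innovative packets at rate $c_v$ once all parents transmit) still follows from the min-cut argument behind Theorem~\ref{thm:RandNetCodRate}.
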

\begin{proof}
Refer to Appendix~\ref{apndx:Proofs}.
\end{proof}

In order to be sure that the dissemination protocol given in \S\ref{sec:GenNetOperation} enters the
steady state phase, $n$ should be large enough. Using Lemma~\ref{lem:UpperBound_Ts}
we have the following result, Corollary~\ref{cor:SuffCondon_n}.

\begin{corollary}\label{cor:SuffCondon_n}
A sufficient condition for $n$ to be sure that the protocol enters 
the steady state is that 
\[
2 D(G)-1 < \lfloor \frac{n}{c_{\text{max}}} \rfloor,
\]
where $c_{\text{max}}=\max_{v\in V} c_v$.
\end{corollary}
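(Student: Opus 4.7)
The plan is to combine Lemma~\ref{lem:UpperBound_Ts}, which gives the upper bound $T_s \le 2D(G)-1$ on the steady state starting time, with an upper bound on the number of linearly independent combinations that any single node can have accumulated by that time. The steady state, by definition, only exists if, at the moment the network begins receiving at full rate, no node (other than the source) has yet collected a full basis of $n$ vectors. So the task reduces to showing that the condition $2D(G)-1 < \lfloor n/c_{\max} \rfloor$ forces this.

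First, I would observe that for every node $v$, the rate of innovative (indeed, of any) packets entering $v$ per timeslot is at most $c_v \le c_{\max}$. This follows from Theorem~\ref{thm:RandNetCodRate} together with the mild rate allocation assumption \eqref{eq_rate}: the total capacity of edges that a flow of dimension greater than $c_v$ would need to cross is bounded by $c_v$, so a node cannot receive more than $c_v$ linearly independent combinations per timeslot. Iterating this bound from $t=0$ gives that at any time $t$, the dimension $d_v(t)$ satisfies $d_v(t) \le c_{\max}\, t$ for every $v \in V \setminus \{S\}$.

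Next, I would apply Lemma~\ref{lem:UpperBound_Ts} to conclude that, provided $n$ is large enough for the lemma's hypothesis to hold, $T_s \le 2D(G)-1$. Evaluating the previous bound at $t = 2D(G)-1$ yields
\[
d_v(T_s) \le c_{\max}(2D(G)-1).
\]
The hypothesis of the corollary gives $2D(G)-1 < \lfloor n/c_{\max}\rfloor$, so
\[
c_{\max}(2D(G)-1) < c_{\max}\lfloor n/c_{\max}\rfloor \le n.
\]
Hence $d_v(T_s) < n$ for every non-source node, which is exactly the requirement that no node has yet collected a full $n$-dimensional subspace at the time the steady state begins. Combined with the fact that from $T_s$ onwards every node receives $c_v$ independent combinations per timeslot (Theorem~\ref{thm:RandNetCodRate}), this means the steady state phase is non-empty, \ie, the protocol does enter the steady state.

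There is no real obstacle here; the argument is essentially arithmetic once the two pieces (the upper bound $T_s \le 2D(G)-1$ and the per-timeslot rate bound $c_v \le c_{\max}$) are in place. The only subtlety worth flagging is making sure that Lemma~\ref{lem:UpperBound_Ts} applies, which requires $n$ to be ``large enough''; since the corollary's inequality itself forces $n > c_{\max}(2D(G)-1)$, this condition is automatically strong enough to invoke the lemma without a circular assumption.
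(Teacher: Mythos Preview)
Your proposal is correct and matches the paper's intended reasoning: the paper presents this as an immediate corollary of Lemma~\ref{lem:UpperBound_Ts} without an explicit proof, and the argument you give (combine $T_s\le 2D(G)-1$ with the per-timeslot dimension bound $d_v(t)\le c_{\max}t$ to force $d_v(T_s)<n$) is exactly the natural filling-in. One small cleanup: the bound $d_v(t)\le c_v t$ follows directly from the min-cut itself (all information reaching $v$ must cross a cut of capacity $c_v$, so at most $c_v$ new dimensions per timeslot), rather than from Theorem~\ref{thm:RandNetCodRate}, which only asserts the rate equals $c_v$ \emph{after} some $t_0$.
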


\section{Topology Inference}\label{chp:TopInfer}

In this section, we will use the tools developed in
\S\ref{chp:NotesVecSpace} to investigate the relation between
the network topology and the subspaces collected at the nodes during
information dissemination. We will develop conditions that allow us
to passively infer the network topology with 
(asymptotically on the value of $q$) no error.  
The proposed scheme is passive in the sense that it does not alter the
normal data flow of the network, and the information rates that can be achieved.
In fact, we can think of our protocol as identifying the topology
of the network which is induced by the traffic.

We build our intuition starting from information dissemination in tree 
topologies, and then extend our results in arbitrary topologies. Note that
information dissemination using network coding in tree topologies does
not offer throughput benefits as compared to routing; however, it
is an interesting case study that will naturally lead to our framework
for general topologies.
We then define conditions under which the topology of a tree and that of an
arbitrary network can be uniquely identified using the observed
subspaces. Note that uniquely identifying the topology is a strong
requirement, as the number of topologies for a given number of network
nodes is exponential in the number of nodes.

\subsection{Tree Topologies}\label{sec_tree}
Let $G=(V,E)$ be a network that is a directed tree of
depth $D(G)$, rooted at
the source node $S$. 
We will present {\sf (i)} necessary and
sufficient conditions under which the tree topology can be uniquely
identified, and {\sf (ii)} given that these conditions are satisfied,
algorithms that allow us to do so.

We first consider trees where each edge is allocated the same rate $c$, and
thus the min-cut from the source to each node of the tree equals
$c$. We then briefly discuss the case of undirected trees. Finally we
examine the case where edges are allocated different rates, and thus nodes
may have different min-cuts from the source.

\subsubsection{Common Min-Cut} 

Assume that each edge of the tree has the same capacity $c$ (\ie, a rate 
allocation algorithm has assigned the same rate $r_e=c$ on each edge of the tree).
Thus all nodes in the tree have the same min-cut, equal to $c$.
Then according to the
dissemination protocol introduced in Algorithm~\ref{alg_diss},
each node $v$ will wait time $\tau_v$, until it has collected a $c+1$ dimensional subspace, and then start transmitting to its children.
Our claim is that, we can then identify
the network topology using a single snapshot of all node's subspaces
at a time $t$. 
Before formally proving the result in Theorem~\ref{th_1},
we will give some intuition on why this is so, and why the waiting time is crucial to achieve this. 
We start from an example on the simple  network in Figure~\ref{fig_tree}.

\begin{example}
  Consider the tree  in Figure~\ref{fig_tree}
and assume that the edges have unit capacity ($c=1$). 
 Algorithm~\ref{alg_diss} works as follows.
At time $t=1$, node $A$ receives a vector $y_1$ from the source $S$. Node $A$ waits, as it has not yet collected a $c+1=2$ dimensional subspace.
At time $t=2$, it receives a vector $y_2$. It now has collected the subspace $\Pi_A(2)=\sspan{y_1,y_2}$, and thus at the next timeslot it will start transmitting.
At time $t=3$, node $A$ transmits vectors $y_1^B$ and  $y_1^C$ to nodes $B$ and $C$ respectively, with  $y_1^B,y_1^C \in \Pi_A(2)$. Thus $\Pi_B(3)=\sspan{y_1^B}$ and  $\Pi_C(3)=\sspan{y_1^C}$.
Node $A$ also receives
a vector $y_3$ from the source, and thus $\Pi_A(3)=\sspan{y_1,y_2,y_3}$. 
Consider now the subspaces $\Pi_A(3)$, $\Pi_B(3)$ and $\Pi_C(3)$. 
We see that  $\Pi_B(3)\subseteq \Pi_A(3)$, and  $\Pi_C(3)\subseteq \Pi_A(3)$;
we thus conclude that nodes $B$ and $C$ are children of node $A$. Moreover,
 $\Pi_B(3)\neq \Pi_C(3)$, which will allow us to distinguish between children of these two nodes when we deal with larger trees.

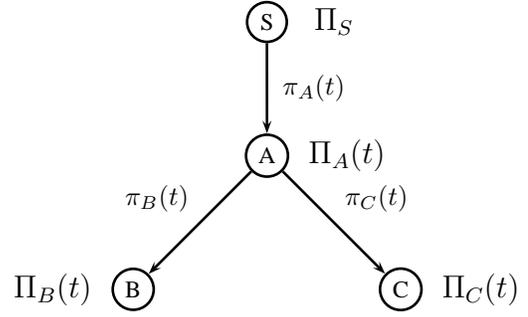
\begin{figure}[hbt]
\begin{center}
\psset{unit=0.07in}
\begin{pspicture}(-3,-3)(23,23)
\psset{linewidth=1.0pt}
\begin{small}
\rput(10,20){\circlenode{S}{S}}\rput(15,20){\large $\Pi_S$}
\rput(10,10){\circlenode{A}{A}}\rput(16,10){\large $\Pi_A(t)$}
\rput(0,0){\circlenode{B}{B}}\rput(-6,0){\large $\Pi_B(t)$}
\rput(20,0){\circlenode{C}{C}}\rput(26,0){\large $\Pi_C(t)$}
\end{small}
\ncline{->}{S}{A}\Aput{$\pi_A(t)$}
\ncline{->}{A}{B}\Bput{$\pi_B(t)$}
\ncline{->}{A}{C}\Aput{$\pi_C(t)$}
\end{pspicture}
\end{center}
\caption{Directed tree with four nodes rooted at the source $S$.}
\label{fig_tree}
\end{figure}

In contrast, if Algorithm~\ref{alg_diss} did not impose a waiting time, and
  node $A$ started transmitting to nodes $B$ and $C$ at time $t=2$,
  then both nodes $B$ and $C$ would receive the same vector $y_1$, {\it
    i.e.}, $\Pi_B(2)=\Pi_C(2)=\sspan{y_1}$. In fact, at all subsequent
  times, we will have that $\Pi_B(t)=\Pi_C(t)=\Pi_{A}(t-1)$.
  Thus, we would not be able to distinguish between these two nodes.

$\hfill\blacksquare$
\end{example}

The main idea in our result is that, if we consider two nodes $u$ and $v$ 
at the network which have collected subspaces $\Pi_u(t)$ and $\Pi_v(t)$ at 
time $t$, then, unless $u$ and $v$ have a child-ancestor relationship (\ie, are 
on the same branch in the tree), it holds that $\Pi_u(t)\nsubseteq\Pi_v(t)$ 
and $\Pi_v(t)\nsubseteq\Pi_u(t)$.

The challenge in proving this is that we deal with subspaces evolving over 
time, and thus we cannot directly apply the results in \S\ref{chp:NotesVecSpace}. 
For example, for the network in  Figure~\ref{fig_tree}, $\Pi_B(t)$ and 
$\Pi_C(t)$ are not subspaces that are selected uniformly at random 
from $\Pi_A(t)$; instead, they are build over time as $\Pi_A(t)$ also 
evolves. We will thus need the following two results, that modify the 
results in  \S\ref{chp:NotesVecSpace} to take into account the time 
evolution in the creation of the subspaces.  We start by examining in 
Lemma~\ref{lem:ConsecutiveSubspaceSampling} the relationship between 
subspaces collected at the immediate children of a given parent node (for 
example, at the children  $B$ and $C$ of node $A$). These are created 
by sampling the same subspaces (those at node $A$). We then examine in 
Corollary \ref{cor:ConsecutiveSubspaceSampling} the relationship between 
subspaces collected at nodes that have different parents (for example, a 
node that has $B$ as parent and a node that has $C$ as parent).

\begin{lemma}\label{lem:ConsecutiveSubspaceSampling}
Suppose there exist (proper) subspaces $\Pi(0)\subset\Pi(1)\subset\cdots\subset\Pi(t-1)$
with dimensions $d_0,\ldots,d_{t-1},$ respectively. Let us construct the set of subspaces
$\Pi_u(i)$, $i=1,\ldots,t$, as follows. Set $\Pi_u(i)=\sum_{j=1}^i \pi_u(j)$ where 
$\pi_u(j)$ is the span of $k_u(j)$ vectors chosen uniformly at random from $\Pi(j-1)$
such that $k_u(1)<d_0$ and $k_u(j)\le (d_{j-1}-d_{j-2})$ for $j=2,\ldots,t$. Similarly,
we construct the set of subspaces $\Pi_v(i)=\sum_{j=1}^i \pi_v(j)$ where for $k_v(j)$
we have similar conditions, namely, $k_v(1)<d_0$ and $k_v(j)\le (d_{j-1}-d_{j-2})$ for 
$j=2,\ldots,t$. Then we have
\[
\Pi_u(i)\nsubseteq \Pi_v(j) \quad \text{and}\quad \Pi_v(j)\nsubseteq \Pi_u(i) \quad \forall i,j\in\{1,\ldots,t\},
\]
with high probability.
\end{lemma}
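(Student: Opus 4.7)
The plan is to show $\pi_u(i) \nsubseteq \Pi_v(j)$ (which, since $\pi_u(i) \subseteq \Pi_u(i)$, implies $\Pi_u(i) \nsubseteq \Pi_v(j)$) and, by interchanging the roles of $u$ and $v$, deduce the reverse non-containment; a union bound over the at most $t^2$ pairs $(i,j)$ then gives the result. Since $\pi_u(i)$ is drawn uniformly from $\Pi(i-1)$ independently of all choices making up $\Pi_v(j)$, and since $\pi_u(i) \subseteq \Pi_v(j)$ iff $\pi_u(i) \subseteq \Pi_v(j) \cap \Pi(i-1)$, Lemma~\ref{lem:subset} immediately gives $\Prob{\pi_u(i) \subseteq \Pi_v(j)} = \bigo{q^{-k_u(i)}}$ provided $\Pi(i-1) \nsubseteq \Pi_v(j)$. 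The entire burden of the argument is therefore to establish the latter, i.e., that $\dim(\Pi_v(j) \cap \Pi(i-1)) < d_{i-1}$ with high probability.

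Telescoping the hypotheses $k_v(1) < d_0$ and $k_v(\ell) \le d_{\ell-1} - d_{\ell-2}$ yields $\sum_{\ell=1}^{s} k_v(\ell) \le d_{s-1} - 1$ for every $s$. If $j \le i$, then $\Pi_v(j) \subseteq \Pi(j-1) \subseteq \Pi(i-1)$, and Lemma~\ref{lem:rand_choose_full_rank} gives $\dim(\Pi_v(j) \cap \Pi(i-1)) = \dim(\Pi_v(j)) \le d_{j-1} - 1 < d_{i-1}$ w.h.p. The interesting case is $j > i$: here I would decompose $\Pi_v(j) = \Pi_v(i) + W$ with $W = \sum_{\ell=i+1}^{j}\pi_v(\ell)$ and invoke the modular law (since $\Pi_v(i) \subseteq \Pi(i-1)$) to obtain
\[
\Pi_v(j) \cap \Pi(i-1) = \Pi_v(i) + \bigl(W \cap \Pi(i-1)\bigr).
\]
Because $\dim(\Pi_v(i)) \le d_{i-1}-1$, the whole case reduces to the key claim that $W \cap \Pi(i-1) = \{0\}$ w.h.p.

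I would prove this claim by induction on $r = j - i \ge 1$. For $r=1$, Lemma~\ref{lem:JointSpacePi_k} applied to the random $k_v(i+1)$-dim subspace $\pi_v(i+1) \subseteq \Pi(i)$ gives $\dim(\pi_v(i+1) \cap \Pi(i-1)) = \max\bigl(0,\, k_v(i+1) + d_{i-1} - d_i\bigr) = 0$ w.h.p., since $k_v(i+1) \le d_i - d_{i-1}$. For the inductive step, write $W = W' + \pi_v(j)$ with $W' = \sum_{\ell=i+1}^{j-1}\pi_v(\ell)$; the inductive hypothesis gives $W' \cap \Pi(i-1) = \{0\}$ w.h.p., whence $\dim(W' + \Pi(i-1)) = \dim(W') + d_{i-1}$. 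Since $\pi_v(j)$ is independent of $W'$, Lemma~\ref{lem:JointSpacePi_k} (applied to $\pi_v(j)$ random in $\Pi(j-1)$ and the fixed subspace $W' + \Pi(i-1) \subseteq \Pi(j-1)$) shows that $\pi_v(j) \cap (W' + \Pi(i-1))$ has dimension $\max\bigl(0,\, k_v(j) + \dim(W') + d_{i-1} - d_{j-1}\bigr)$ w.h.p.; using the telescoping bound $\dim(W') \le d_{j-2} - d_{i-1}$ together with $k_v(j) \le d_{j-1} - d_{j-2}$, the argument of the maximum is $\le 0$. A short check (if $v = w' + p \in \Pi(i-1)$ with $w' \in W'$, $p \in \pi_v(j)$, then $p \in \pi_v(j)\cap(W'+\Pi(i-1)) = \{0\}$, so $v \in W' \cap \Pi(i-1) = \{0\}$) then upgrades this to $W \cap \Pi(i-1) = \{0\}$ w.h.p.

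The main technical obstacle is precisely this inductive argument: while each individual late sample $\pi_v(\ell)$ for $\ell > i$ avoids $\Pi(i-1)$ by a direct application of Lemma~\ref{lem:JointSpacePi_k}, one could \emph{a priori} worry that the \emph{sum} of these samples---which is not itself a uniformly random subspace of any natural containing space---accumulates enough mass to cover a vector of $\Pi(i-1)$. The hypothesis $k_v(\ell) \le d_{\ell-1} - d_{\ell-2}$ is exactly what keeps the total ``budget'' of the late samples within directions complementary to $\Pi(i-1)$, and what makes the induction close. Once the key claim is in hand, $\dim(\Pi_v(j) \cap \Pi(i-1)) = \dim(\Pi_v(i)) \le d_{i-1}-1 < d_{i-1}$, and the proof is completed by the initial reduction to Lemma~\ref{lem:subset}, the symmetric $u \leftrightarrow v$ argument, and a union bound over the $t^2$ pairs.
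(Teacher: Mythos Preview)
Your argument is correct and shares the same inductive core as the paper's, but the paper takes a shorter path. Instead of proving $\pi_u(i)\nsubseteq\Pi_v(j)$ for every $i$, the paper observes that $\pi_u(1)\subseteq\Pi_u(i)$ for all $i$, so it suffices to show $\pi_u(1)\nsubseteq\Pi_v(j)$ for every $j$. This reduces the task to understanding $\Pi(0)\cap\Pi_v(j)$ only, and the paper proves the exact identity $\Pi(0)\cap\Pi_v(j)=\pi_v(1)$ by the same ``peel off the last sample'' induction you use (their step uses $\pi_v(k+1)\cap\Pi(k-1)=\{0\}$, which is your $W\cap\Pi(i-1)=\{0\}$ specialized to $i=1$). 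Having identified the intersection as $\pi_v(1)$, they finish with Corollary~\ref{cor:SpaceJointDim} on the two independent random subspaces $\pi_u(1),\pi_v(1)\subseteq\Pi(0)$, rather than invoking Lemma~\ref{lem:subset}. Your route via the modular law and the general-$i$ claim $\dim(\Pi_v(j)\cap\Pi(i-1))<d_{i-1}$ is sound (and in fact gives a bit more information), but the paper's specialization to $i=1$ avoids the case split $j\le i$ versus $j>i$ and the extra bookkeeping on $\dim(W'+\Pi(i-1))$. One minor remark: in your easy case $j\le i$, the bound $\dim(\Pi_v(j))\le d_{j-1}-1$ is deterministic from the telescoping inequality, so Lemma~\ref{lem:rand_choose_full_rank} is not actually needed there.
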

\begin{proof}
Refer to Appendix~\ref{apndx:Proofs}.
\end{proof}

\begin{corollary}\label{cor:ConsecutiveSubspaceSampling}
Suppose that there exist two set of subspaces $\{\Pi_u(i)\}_{i=0}^{t-1}$ and $\{\Pi_v(i)\}_{i=0}^{t-1}$
such that $\Pi_u(0)\subset\cdots\subset\Pi_u(t-1)$ and $\Pi_v(0)\subset\cdots\subset\Pi_v(t-1)$.
Moreover, assume that $\Pi_u(i)\nsubseteq \Pi_v(j)$ and $\Pi_v(j)\nsubseteq \Pi_u(i)$ 
$\forall i,j\in\{0,\ldots,t-1\}$. Now, construct two set of subspaces $\{\Pi_a(i)\}_{i=1}^t$
and $\{\Pi_b(i)\}_{i=1}^t$ by setting $\Pi_a(i)=\sum_{j=1}^i \pi_a(j)$ and $\Pi_b(i)=\sum_{j=1}^i \pi_b(j)$
where $\pi_a(i)$ is chosen uniformly at random from $\Pi_u(i-1)$ and 
$\pi_b(i)$ is chosen uniformly at random from $\Pi_v(i-1)$ (with some arbitrary dimension).
Then we have
\[
\Pi_a(i)\nsubseteq \Pi_b(j) \quad \text{and}\quad \Pi_b(j)\nsubseteq \Pi_a(i) \quad \forall i,j\in\{1,\ldots,t\},
\]
with high probability.
\end{corollary}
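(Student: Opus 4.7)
The plan is to reduce the corollary to Lemma~\ref{lem:subset} via two transitivity observations. First I would note the deterministic containments $\Pi_a(i) \subseteq \Pi_u(i-1)$ and $\Pi_b(j) \subseteq \Pi_v(j-1)$, which hold because every increment $\pi_a(k)$ (respectively $\pi_b(k)$) lies in $\Pi_u(k-1) \subseteq \Pi_u(i-1)$ (respectively $\Pi_v(k-1) \subseteq \Pi_v(j-1)$) by construction. Combined with the standing hypothesis that $\Pi_u(k) \nsubseteq \Pi_v(l)$ and $\Pi_v(l) \nsubseteq \Pi_u(k)$ for all $k,l \in \{0,\ldots,t-1\}$, this gives both the frame and the ``obstruction'' needed to invoke Lemma~\ref{lem:subset}.

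Next, fix a pair $(i,j) \in \{1,\ldots,t\}^2$ and bound $\Prob{\Pi_a(i) \subseteq \Pi_b(j)}$. The containment $\Pi_a(i) \subseteq \Pi_b(j)$ implies in particular that the initial increment satisfies $\pi_a(1) \subseteq \Pi_b(j)$, since $\pi_a(1) \subseteq \Pi_a(i)$. Condition on the random subspace $\Pi_b(j)$; because the $a$-samplings and $b$-samplings are performed independently, $\pi_a(1)$ remains the span of $m_1 \ge 1$ vectors drawn uniformly at random from $\Pi_u(0)$. Moreover, since $\Pi_b(j) \subseteq \Pi_v(j-1)$ and by hypothesis $\Pi_u(0) \nsubseteq \Pi_v(j-1)$, we get $\Pi_u(0) \nsubseteq \Pi_b(j)$ deterministically. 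Lemma~\ref{lem:subset} (applied with $\Pi_1 = \Pi_u(0)$ and $\Pi_2 = \Pi_b(j)$) then yields $\Prob{\pi_a(1) \subseteq \Pi_b(j) \mid \Pi_b(j)} = \bigo{q^{-m_1}} = \bigo{q^{-1}}$. Averaging over $\Pi_b(j)$ and using $\{\Pi_a(i) \subseteq \Pi_b(j)\} \subseteq \{\pi_a(1) \subseteq \Pi_b(j)\}$ gives $\Prob{\Pi_a(i) \subseteq \Pi_b(j)} = \bigo{q^{-1}}$.

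The symmetric bound $\Prob{\Pi_b(j) \subseteq \Pi_a(i)} = \bigo{q^{-1}}$ follows by swapping the roles of the chains (using $\Pi_v(0) \nsubseteq \Pi_u(i-1)$ to obtain the obstruction). A union bound over the $O(t^2)$ pairs $(i,j)$, with $t$ fixed and $q \to \infty$, then ensures that all the required non-containment conditions hold simultaneously with probability $1 - \bigo{q^{-1}}$.

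The only subtlety is verifying that the conditioning step is legitimate — namely, that conditioning on the realization of $\Pi_b(j)$ leaves the $a$-increments uniformly distributed, and that the required obstruction $\Pi_u(0) \nsubseteq \Pi_b(j)$ is a deterministic consequence of the hypothesis rather than a probabilistic one. Both are immediate from the independence of the two sampling processes and from the transitive containment $\Pi_b(j) \subseteq \Pi_v(j-1)$. The rest is a routine union bound, so I do not expect any conceptual difficulty beyond correctly setting up this conditioning.
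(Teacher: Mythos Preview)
Your proposal is correct and follows essentially the same route as the paper: both arguments reduce to Lemma~\ref{lem:subset} applied to the first increment $\pi_a(1)$ sampled from $\Pi_u(0)$, use the containment $\Pi_b(j)\subseteq\Pi_v(j-1)$ together with the hypothesis $\Pi_u(0)\nsubseteq\Pi_v(j-1)$, and finish by symmetry. The only cosmetic difference is that the paper applies Lemma~\ref{lem:subset} with the deterministic target $\Pi_v(j-1)$ and then passes to $\Pi_b(j)$ by containment, whereas you condition on $\Pi_b(j)$ first; the paper's ordering sidesteps the conditioning discussion, but your version is equally valid.
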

\begin{proof}
Refer to Appendix~\ref{apndx:Proofs}.
\end{proof}

\begin{theorem}\label{th_1}
  Consider a tree of depth $D(G)$ where each edge has capacity
  $c$, and the dissemination Algorithm~\ref{alg_diss}. A static global
  view of the network at time $t$, with
  \mbox{$2 D(G)-1 < t < \lfloor\frac{n}{c} \rfloor$}, allows to uniquely
  determine the tree structure with high probability, if the waiting times
  are chosen according to Definition~\ref{def:waiting_time}.
\end{theorem}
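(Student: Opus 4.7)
The plan is to prove that the subspace-containment relation among the observed subspaces captures the ancestor-descendant relation in the tree exactly, so that the tree is uniquely recoverable from the snapshot at time $t$. The proof naturally splits into a (deterministic) soundness direction, a (high-probability) completeness direction, and a reconstruction algorithm; the time window $2D(G)-1 < t < \lfloor n/c\rfloor$ appears precisely to guarantee steady-state rank growth (via Lemma~\ref{lem:UpperBound_Ts} and Theorem~\ref{thm:RandNetCodRate}) while ensuring no node has yet collected all of $\Pi_S$.

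First I would show the easy direction: if $u$ is an ancestor of $v$, then $\Pi_v(t)\subseteq \Pi_u(t)$. By flow conservation, every packet $v$ receives is a linear combination of packets its parent $u'$ had at some earlier time, so $\pi_v(s)\subseteq \Pi_{u'}(s-1)\subseteq \Pi_u(s)$, and monotonicity of the chains $\{\Pi_\cdot(s)\}$ plus an induction on distance finishes this direction. No randomness is needed here.

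The main content is the converse. Fix two nodes $u,v$ with no ancestor-descendant relationship; I would like to show $\Pi_u(t)\nsubseteq\Pi_v(t)$ and $\Pi_v(t)\nsubseteq\Pi_u(t)$ w.h.p. Let $w$ be the lowest common ancestor of $u$ and $v$ in the tree, and let $w_1,w_2$ be the distinct children of $w$ on the paths to $u$ and $v$. Both chains $\{\Pi_{w_1}(i)\}$ and $\{\Pi_{w_2}(i)\}$ are obtained by independently sampling $c$ uniformly random vectors per timeslot from the chain $\{\Pi_w(i-1)\}$. Because of the waiting-time rule (Definition~\ref{def:waiting_time}), each of the sibling chains begins with a $c$-vector sample from a $\Pi_w$ of dimension at least $c+1$, which is strictly less than $\dim \Pi_w$; this is exactly the sampling regime covered by Lemma~\ref{lem:ConsecutiveSubspaceSampling}, which then yields pairwise incomparability $\Pi_{w_1}(i)\nsubseteq \Pi_{w_2}(j)$ and vice versa for all $i,j\le t$ w.h.p. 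I would then propagate this one level at a time down the two paths: at each downstream node the new chain is constructed by sampling from its parent's already-pairwise-incomparable chain, so Corollary~\ref{cor:ConsecutiveSubspaceSampling} preserves pairwise incomparability. After at most $2D(G)$ such applications the incomparability reaches $\{\Pi_u(i)\}$ and $\{\Pi_v(i)\}$, and a union bound over the at most $\binom{\vartheta}{2}$ node pairs and at most $2D(G)$ propagation steps per pair keeps the total error $O(q^{-1})$.

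Finally, for reconstruction, given the pairwise containment relation (which by the previous two paragraphs coincides exactly with ancestry w.h.p.), the tree is determined by identifying parents. In steady state each node's subspace grows by exactly $c$ dimensions per timeslot (Theorem~\ref{thm:RandNetCodRate}), so if $u$ is a distance-$d$ ancestor of $v$ then $\dim\Pi_u(t)-\dim\Pi_v(t)=cd$; the parent of $v$ is therefore the unique ancestor with dimension gap $c$. The main obstacle I anticipate is the careful bookkeeping needed to verify that at every level of the downward propagation the sampling dimensions $k(\cdot)$ in Lemma~\ref{lem:ConsecutiveSubspaceSampling} (respectively Corollary~\ref{cor:ConsecutiveSubspaceSampling}) satisfy the prescribed bounds; in particular, the condition that the first sampled subspace have dimension strictly less than the initial dimension of the ancestor's chain is exactly the reason the waiting-time rule requires collecting $c+1$ (not $c$) dimensions before transmitting, and the upper bound $t<\lfloor n/c\rfloor$ is needed precisely to keep all subspaces proper in $\Pi_S$ so that the Lemma applies throughout the window.
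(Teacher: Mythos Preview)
Your proposal is correct and uses the same machinery as the paper (Lemma~\ref{lem:ConsecutiveSubspaceSampling} at the point where two branches split, then Corollary~\ref{cor:ConsecutiveSubspaceSampling} to push the incomparability down each branch). The paper organizes the same argument level-by-level from the source rather than pair-by-pair via the lowest common ancestor, and for the cross-level case $l_u>l_v$ it argues upward with Lemma~\ref{lem:subset} instead of downward with Corollary~\ref{cor:ConsecutiveSubspaceSampling}; your LCA organization is a clean equivalent. One small slip in your reconstruction step: in steady state the parent--child dimension gap is $2c$, not $c$, because the waiting-time rule of Definition~\ref{def:waiting_time} delays each level by two slots (one to accumulate $c+1$ dimensions, one to start transmitting). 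The paper avoids committing to the exact gap by taking the parent of $u$ to be the node of minimum dimension among those whose subspace contains $\Pi_u(t)$, which is equivalent to your ``smallest dimension gap among ancestors'' once the slip is fixed.
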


\begin{proof}
We will say that a node of the tree is at level $l$ if it has distance $l$ 
from the source. In a tree there exists a unique path
\mbox{$\mathcal{P}_u=\{S,P^{l_u-1}(u),\ldots,P(u),u\}$} from source $S$ to
node $u$ at level $l_u$ of the network.

If we consider a time $t$ in steady state (where all nodes have nonempty 
subspaces and none has collected the whole space), then clearly using 
Algorithm~\ref{alg_diss} for dissemination in the network for the nodes 
along the path $\mathcal{P}_u$ it holds that
\begin{equation} \label{eq_subsp}
\Pi_u(t)\subset\Pi_{P(u)}(t)\subset\cdots\subset\Pi_{P^{l_u-1}(u)}(t)\subset \Pi_{S}.
\end{equation} 
Note that the conditions on $t$ ensure that the network is in steady-state.

To identify the topology of the tree it is sufficient to show that
$\Pi_u(t)\nsubseteq\Pi_v(t)$ for any node $v$ that is not in
$\mathcal{P}_u$. Let $l_u$ and $l_v$ be the distance of $u$ and $v$ 
from the source, respectively.

First, we observe that, starting from the source,
by applying Lemma~\ref{lem:ConsecutiveSubspaceSampling} and
Corollary~\ref{cor:ConsecutiveSubspaceSampling} and because of 
Definition~\ref{def:waiting_time} the subspaces of the nodes
at the same level (same distance from the source) are different
at all times. So it only remains to check the condition 
$\Pi_u(t)\nsubseteq\Pi_v(t)$ for those node $v$ that are not in the 
same level as $u$.

Consider two cases. First, if $l_u<l_v$ then let $v'$ be the ancestor of $v$ at the same
level as $u$. By Corollary~\ref{cor:ConsecutiveSubspaceSampling} we have
$\Pi_u(t)\nsubseteq\Pi_{v'}(t)$ so
$\Pi_u(t)\nsubseteq\Pi_v(t)$ because $\Pi_v(t)\subseteq\Pi_{v'}(t)$.

Now consider the second case, $l_u>l_v$. We start by assuming $\Pi_u(t)\subseteq\Pi_v(t)$ 
and then we will show that this assumption leads to a contradiction. 
Let $u'$ be the ancestor of $u$ at the same level of $v$. Then we make
the following observation. If at time $t$ we have $\Pi_u(t)\subseteq\Pi_v(t)$
by Lemma~\ref{lem:subset} we should have had $\Pi_{P(u)}(t-1)\subseteq\Pi_v(t)$ and 
so $\Pi_{P^2(u)}(t-2)\subseteq\Pi_v(t)$ and finally we should had had
$\Pi_{u'}(t-l_u+l_v)\subseteq\Pi_v(t)$. But according to 
Corollary~\ref{cor:ConsecutiveSubspaceSampling} this is a contradiction
because $u'$ and $v$ are at the same level.

In the above argument, we have shown that $\Pi_{P(u)}(t)$ 
is the smallest subspace contains $\Pi_u(t)$ among all nodes' subspaces 
at time $t$. So we are done.
\end{proof}

Assume now that Theorem~\ref{th_1} holds. To determine the tree
structure, it is sufficient to determine the unique parent each node
has. From the previous arguments, the parent of node $u$ is the unique
node $v$ such that $\Pi_v(t)$ is the minimum dimension subspace that
contains $\Pi_u(t)$. Then, the parent of node $u$ is the node $v$ such
that 
\[
v= \underset{w\in V:\ d_{uw}=d_u}{\operatorname{arg\,min}} d_w.
\]
As we will discuss in Section~\ref{sec_practical}, 
collecting the subspace information from the network nodes can be implemented efficiently.
The algorithm that determines the tree topology reduces this information to only two ``sufficient statistics'': the
dimension of each subspace $d_u=\dim(\Pi_u), \ \forall u\in V,$ and the
dimension of the intersection of every two subspaces
$d_{uv}=\dim(\Pi_u\cap\Pi_v),\ \forall u,v\in V$, as described in
Algorithm~\ref{alg_ind_tree}, assuming that the conditions of
Theorem~\ref{th_1} hold. 
\begin{center}
\begin{pseudocode}[doublebox]{Tree}{\{d_u\},\{d_{uv}\}}
\FOREACH u \in V \DO 
\BEGIN
\IF d_u=n \THEN  u\GETS S
\ELSE
\BEGIN
\text{node $u$ has parent the node $v$ with}\\
v= \underset{w\in V:\ d_{uw}=d_u}{\operatorname{arg\,min}} d_w
\END
\END
\label{alg_ind_tree}
\end{pseudocode}
\\
\vspace{-0.3cm}
Alg.~\ref{alg_ind_tree}: Find the network topology for a tree.
\end{center}

\subsubsection{Directed v.s. Undirected Network}
In a tree with a single source, since new information can only flow
from the source to each node along a single path, whether the network
is directed or undirected makes no difference.  In other words, from 
\eqref{eq_subsp}, all vectors that a node will send to its
predecessor will belong in the subspace the predecessor already
has. Thus Theorem~\ref{th_1} still holds for undirected networks with
a common mincut.

\subsubsection{Different Min-Cuts}
Assume now that the edges of the tree have 
different capacities, i.e., assigned different rates.
In this case, the proof of 
Theorem~\ref{th_1} still holds, provided that 
the condition in Theorem~\ref{th_1} is modified to 
\[
2D(G)-1 < t < \lfloor\frac{n}{c_{\text{max}}} \rfloor,
\]
where $c_{\text{max}}=\max_{v\in V} c_v$.

We underline that this theorem would not hold without the assumption in (\ref{eq_rate}) .
 Without this condition, it is possible that we cannot distinguish between nodes at
 same level with a common parent as explained in the following example.
 \begin{example}
   If in the network in Figure~\ref{fig_tree}, edge $SA$ has unit
   capacity, while edge $AB$ and $AC$ have capacity two. In this case
   it is easy to see that there exists $t_0$ such that
   \mbox{$\Pi_B(t)=\Pi_C(t)=\Pi_A(t-1)$}, $\forall t\geq t_0$. Clearly
   in this case, we cannot distinguish between nodes $B$ and $C$ with
   this dissemination protocol.
 $\hfill\blacksquare$
 \end{example}

\subsection{General Topologies}
\label{sec_general}
Consider now an arbitrary network topology, corresponding to a
directed acyclic graph. 
An intuition we can get from examining tree structures
is that, we can distinguish between two topologies provided all node
subspaces are distinct.  This is used to identify the unique parent of each node. 
In general topologies, it is similarly sufficient to 
identify the parents of each node, in order to learn the graph topology.
The following theorem claims that having distinct subspaces is in fact a sufficient condition for topology
identifiability over general graphs as well.

\begin{theorem}\label{th_general_cond}
  In a synchronous network employing randomized network coding over
  $\mathbb{F}_q$, a sufficient condition to uniquely identify the
  topology with high probability as $q\gg 1$, is that
\begin{equation} 
\label{eq_cond_general}
\Pi_u(t) \neq \Pi_v(t) \quad \forall\;  u,v\in V,\quad u\ne v,
\end{equation}
for some time $t$. Under this condition, we can identify the topology by collecting global
information at times $t$ and $t+1$, {\it i.e.}, two consecutive static
views of the network.
\end{theorem}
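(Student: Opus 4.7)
The plan is to exhibit a decoding algorithm that, given the two snapshots $\{\Pi_v(t)\}_{v\in V}$ and $\{\Pi_v(t+1)\}_{v\in V}$, recovers the parent set $P(u)$ of every non-source node $u$; since the DAG topology is determined by $E=\{(v,u):v\in P(u)\}$, this suffices. For each node $u$, I would first extract any complement $\pi_u(t+1)$ of $\Pi_u(t)$ in $\Pi_u(t+1)$, so $\Pi_u(t+1)=\Pi_u(t)+\pi_u(t+1)$. By Algorithm~\ref{alg_diss}, this increment decomposes as $\pi_u(t+1)=\sum_{v_i\in P(u)}\pi_u^{(v_i)}(t+1)$, where $\pi_u^{(v_i)}(t+1)$ is the span of $r_{(v_i,u)}$ vectors chosen uniformly at random from $\Pi_{v_i}(t)$; in particular $\pi_u(t+1)\subseteq\sum_{v\in P(u)}\Pi_v(t)$.

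The main task is to show that the mapping $P(u)\mapsto\pi_u(t+1)$ can be inverted from the globally observed subspaces. My test would be: declare $v\in P(u)$ iff
\[
\dim\bigl(\pi_u(t+1)\cap\Pi_v(t)\bigr)>\max\bigl[0,\ \dim\pi_u(t+1)+\dim\Pi_v(t)-n\bigr],
\]
i.e., iff the intersection exceeds the ``generic'' value predicted by Lemma~\ref{lem:JointSpacePi_k}. For a true parent $v_i\in P(u)$, the deterministic inclusion $\pi_u^{(v_i)}(t+1)\subseteq\pi_u(t+1)\cap\Pi_{v_i}(t)$ contributes at least $r_{(v_i,u)}\ge 1$ forced dimensions, which, combined with Corollary~\ref{cor:SpaceJointDim} applied to the other parents' random contributions, strictly exceeds the generic value with high probability. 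For a non-parent $w$, I would invoke a restricted form of Theorem~\ref{thm:SubSpaceGenPos}: inside the host space $\Sigma:=\sum_{v_i\in P(u)}\Pi_{v_i}(t)$, the randomly sampled $\pi_u(t+1)$ is in general position with respect to the family $\{\Pi_w(t)\cap\Sigma:w\notin P(u)\}$, so the intersection attains exactly the generic minimum.

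The hardest step will be ruling out spurious non-parents $w$ for which $\Pi_w(t)$ already contains a large portion of $\Sigma$ (for example a downstream descendant whose subspace has grown to envelop $\Sigma$, or two non-parents whose subspaces jointly cover $\Sigma$). This is where the distinct-subspace hypothesis \eqref{eq_cond_general} is essential: combined with Lemma~\ref{lem:subset}, it lets me upper-bound by $O(q^{-1})$ the probability that a collection $S\neq P(u)$ of candidate parents satisfies $\pi_u(t+1)\subseteq\sum_{v\in S}\Pi_v(t)$ while failing the per-parent intersection test. Because each individual test fails with probability $O(q^{-1})$ and the number of candidate pairs $(u,v)$ is $O(\vartheta^2)$, a union bound preserves overall success with probability $1-O(q^{-1})$, completing the topology reconstruction.
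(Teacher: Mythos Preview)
Your approach diverges from the paper's at the very first step: you assume the two snapshots consist only of the aggregate subspaces $\{\Pi_v(t)\}_{v\in V}$ and $\{\Pi_v(t+1)\}_{v\in V}$, whereas the paper exploits the \emph{per-edge} subspaces $\Pi_u^{(u_i)}(t+1)$ that node $u$ has accumulated from each incoming edge separately (these are part of the ``global information'' at time $t+1$; see Algorithm~\ref{alg_ind_general} and the discussion in \S\ref{sec_practical}). With that richer data the paper's argument is short: for each incoming edge $i$ of $u$, the parent $u_i$ is identified as the node $v$ of \emph{minimum dimension} among all $v$ with $\Pi_u^{(u_i)}(t+1)\subseteq\Pi_v(t)$. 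Lemma~\ref{lem:subset} together with hypothesis~\eqref{eq_cond_general} ensures w.h.p.\ that the only $v$ passing this containment test are those with $\Pi_{u_i}(t)\subseteq\Pi_v(t)$, and among those $u_i$ itself has strictly smallest dimension because all subspaces are distinct.

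Your aggregate-only intersection test has a genuine gap that hypothesis~\eqref{eq_cond_general} cannot close. Take any node $w\notin P(u)$ that is the \emph{unique} parent of some $v_1\in P(u)$. Then $\Pi_{v_1}(t)\subsetneq\Pi_w(t)$ deterministically (everything $v_1$ ever received came from $w$), hence $\pi_u^{(v_1)}(t+1)\subseteq\Pi_w(t)$, and the intersection $\pi_u(t+1)\cap\Pi_w(t)$ is already at least $r_{(v_1,u)}$-dimensional, exactly as for the true parent $v_1$. When $\dim\pi_u(t+1)+\dim\Pi_w(t)\le n$ both generic thresholds are zero, so your criterion declares the grandparent $w$ a parent of $u$. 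This is a structural containment, not a low-probability coincidence, so neither Lemma~\ref{lem:subset} nor Theorem~\ref{thm:SubSpaceGenPos} rules it out; and the appeal to \eqref{eq_cond_general} in your final paragraph does not help, since all node subspaces can be pairwise distinct while still satisfying $\Pi_{v_1}(t)\subsetneq\Pi_w(t)$. (A secondary issue: an \emph{arbitrary} complement $\pi_u(t+1)$ of $\Pi_u(t)$ in $\Pi_u(t+1)$ need not equal $\sum_i\pi_u^{(v_i)}(t+1)$ nor contain any individual $\pi_u^{(v_i)}(t+1)$, so the ``deterministic inclusion'' you invoke is not guaranteed; this is fixable by using the actual received packets.) The device your scheme is missing, and the paper's uses, is precisely the minimum-dimension selection over per-edge data, which separates $v_1$ from its ancestors.
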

\begin{proof}
  Assume node $u$ has the $p$ parents
  $P(u)=\{u_1,\ldots,u_p\}$. Let
  $\Pi^{(u_1)}_u(t),\ldots,\Pi^{(u_p)}_u(t)$ denote the subspaces
  node $u$ has received from its parents up to time $t$, where
  $\Pi_u(t)=\sum_{i=1}^{p} \Pi^{(u_i)}_u(t)$. From construction it
  is clear that $\Pi^{(u_i)}_u(t+1)\subseteq\Pi_{u_i}(t)$.

  To identify the network topology, it is sufficient to decide which
  node $v\in V$ is the parent that sent the subspace
  $\Pi^{(u_i)}_{u}(t)$ to node~$u$ for each $i$, and thus find the
  $p$ parents of node~$u$. We claim that, provided
  (\ref{eq_cond_general}) holds, node $u$ has as parent the node $v$
  which at time $t$ has the smallest dimension subspace containing
  $\Pi^{(u_i)}_u(t+1)$. Thus we can uniquely identify the network
  topology, by two static views, at times $t$ and $t+1$, as
  Algorithm~\ref{alg_ind_general} describes.

  Indeed, let $\pi^{(u_i)}_u(t)$ denote the subspace that node $u$ receives 
  from parent $u_i$ at exactly time $t$, that is, 
  \mbox{$\Pi^{(u_i)}_u(t+1)=\Pi^{(u_i)}_u(t) + \pi^{(u_i)}_u(t+1)$}.
  For each $i\in\{1,\ldots,p\}$, if $\pi^{(u_i)}_u(t+1) \nsubseteq \Pi_v(t)$ for all 
  $v\in V \setminus \{u_i\}$, clearly $\Pi^{(u_i)}_u(t+1) \nsubseteq \Pi_v(t)$ 
  for all $v\in V \setminus \{u_i\}$, and we are done.
  Otherwise, using Lemma~\ref{lem:subset} and because
  (\ref{eq_cond_general}) holds, with high probability we have
  $\pi_u^{(u_i)}(t+1) \nsubseteq \Pi_v(t)$ for all $v\in V$ except
  those nodes that their subspaces contain $\Pi_{u_i}(t)$. So we are
  done.
\end{proof}

Note that to identify the network topology, we need to know, for all
nodes $u$, the dimension $d_u\triangleq \dim(\Pi_u(t))$ of their observed subspaces at time $t$, the
dimension $d_u^{(i)}\triangleq \dim(\Pi^{(u_i)}_u(t+1))$ for
all parents~$u_i$ of node~$u$, and the dimension of the intersection of
$\Pi^{(u_i)}_u(t+1)$ with all $\Pi_w(t)$, $w\in V$, denoted as 
\mbox{$d_{w u}^{(i)}\triangleq \dim(\Pi^{(u_i)}_u(t+1)\cap \Pi_w(t))$}. 
Algorithm~\ref{alg_ind_general} uses this information to infer the topology.
\begin{center}
  \begin{pseudocode}[doublebox]{Gen}{\{d_u\},\{ d_u^{(i)} \}, \{ d_{w u}^{(i)} \}}
\FOREACH u \in V \DO 
\BEGIN
\IF d_u=n \THEN  u\GETS S
\ELSE
\BEGIN
\FOREACH i \in \{1,\ldots,p_u\} \DO
\BEGIN
\text{node $u$ has as parent the}\\
\text{ node $v$ with}\\
v= \underset{ w\in V:\ d_{w u}^{(i)} = d_u^{(i)} }{\operatorname{arg\,min}} d_w
\END
\END
\END
\label{alg_ind_general}
\end{pseudocode}
\\
\vspace{-0.3cm}
Alg.~\ref{alg_ind_general}: Find the topology of a general network.
\end{center}

The sufficient conditions (\ref{eq_cond_general}) in
Theorem~\ref{th_general_cond}, may or may not hold, depending on the
network topology and the information dissemination protocol. Next, we
will investigate for what network topologies the conditions (\ref{eq_cond_general}) 
hold for the dissemination Algorithm~\ref{alg_diss} so that the network is
identifiable.

\begin{lemma}\label{lem:equalcond}
  Consider two arbitrary nodes $u$ and $v$, where
  $P(u)=\{u_1,\ldots,u_{p_u}\}$ and $P(v)=\{v_1,\ldots,v_{p_v}\}$ are
  the parents of $u$ and $v$ respectively. Let
  $\Pi_{P(u)}(t-1)=\sum_{i=1}^{p_u}\Pi_{u_i}(t-1),$ and 
  $\Pi_{P(v)}(t-1)=\sum_{i=1}^{p_v}\Pi_{v_i}(t-1).$ 
  If $\Pi_u(t)=\Pi_v(t)$ we should have had
  $\Pi_{P(u)}(t-1)=\Pi_{P(v)}(t-1)$ w.h.p.
\end{lemma}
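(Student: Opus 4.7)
The plan is to prove the contrapositive: assuming $\Pi_{P(u)}(t-1) \neq \Pi_{P(v)}(t-1)$, I will show that with high probability $\Pi_u(t) \neq \Pi_v(t)$. Without loss of generality, suppose $\Pi_{P(u)}(t-1) \nsubseteq \Pi_{P(v)}(t-1)$ (otherwise swap the roles of $u$ and $v$). The inclusion $\Pi_u(t) \subseteq \Pi_{P(u)}(t-1)$ and $\Pi_v(t) \subseteq \Pi_{P(v)}(t-1)$ hold deterministically, since packets delivered to a node up to time $t$ are linear combinations of packets held by its parents up to time $t-1$.

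The key object is the incremental subspace $\pi_u(t) = \sum_{i=1}^{p_u} \pi_u^{(u_i)}(t)$, which node $u$ receives at exactly time $t$. By the dissemination rule in Algorithm~\ref{alg_diss}, each parent $u_i$ transmits $r_{e_i}$ uniformly random linear combinations of the vectors in $\Pi_{u_i}(t-1) \subseteq \Pi_{P(u)}(t-1)$; hence $\pi_u(t)$ is spanned by $m \triangleq \sum_{i=1}^{p_u} r_{e_i}$ vectors drawn uniformly at random from $\Pi_{P(u)}(t-1)$. Condition on the entire history up to time $t-1$ together with the independent randomness used by the nodes in $P(v)$ at time $t$; this fixes $\Pi_{P(u)}(t-1)$, $\Pi_{P(v)}(t-1)$, and $\Pi_v(t)$, and leaves the sampling that generates $\pi_u(t)$ still uniformly random in $\Pi_{P(u)}(t-1)$.

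I now apply Lemma~\ref{lem:subset} with $\Pi_1 = \Pi_{P(u)}(t-1)$ and $\Pi_2 = \Pi_{P(v)}(t-1)$: since $\Pi_1 \nsubseteq \Pi_2$, the probability that the $m$ random vectors forming $\pi_u(t)$ all lie in $\Pi_2$ is $O(q^{-m})$. But if $\Pi_u(t) = \Pi_v(t)$, then $\pi_u(t) \subseteq \Pi_u(t) = \Pi_v(t) \subseteq \Pi_{P(v)}(t-1)$, which is precisely the event bounded above. Therefore $\Pr[\Pi_u(t) = \Pi_v(t)] = O(q^{-m})$, and symmetry gives $\Pi_{P(u)}(t-1) = \Pi_{P(v)}(t-1)$ with high probability.

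The main obstacle is the dependency structure: $\Pi_v(t)$ is itself random, and a naive application of Lemma~\ref{lem:subset} could be invalid if one tried to sample $\pi_u(t)$ from $\Pi_{P(u)}(t-1)$ and simultaneously control the random object $\Pi_v(t)$. The way around this is to relax the target from $\Pi_v(t)$ to the deterministic (conditional on the history) upper bound $\Pi_{P(v)}(t-1)$, and to exploit the fact that the randomness at $u$'s incoming edges at time $t$ is independent of the randomness used by all other nodes at time $t$. This separation is what makes the single-step sampling lemma directly applicable without needing the more delicate time-evolving machinery of Lemma~\ref{lem:ConsecutiveSubspaceSampling}.
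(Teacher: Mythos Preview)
Your overall strategy matches the paper's: both argue by contrapositive, invoking Lemma~\ref{lem:subset} to rule out $\Pi_u(t)=\Pi_v(t)$ when the parent spans differ. There is, however, one genuine gap. You write that $\pi_u(t)$ is spanned by $m=\sum_i r_{e_i}$ vectors drawn uniformly at random from $\Pi_{P(u)}(t-1)$, but that is not what the protocol does: each parent $u_i$ samples its $r_{e_i}$ vectors uniformly from its \emph{own} subspace $\Pi_{u_i}(t-1)$, not from the joint span. Hence Lemma~\ref{lem:subset} with $\Pi_1=\Pi_{P(u)}(t-1)$ is not directly applicable, and the $O(q^{-m})$ bound you claim need not hold (e.g.\ if all parents but one already lie inside $\Pi_{P(v)}(t-1)$, those parents' vectors contribute nothing to escaping it).

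The repair is immediate and is essentially what the paper does: since $\Pi_{P(u)}(t-1)\nsubseteq\Pi_{P(v)}(t-1)$, there exists at least one parent $u_i$ with $\Pi_{u_i}(t-1)\nsubseteq\Pi_{P(v)}(t-1)$; apply Lemma~\ref{lem:subset} with $\Pi_1=\Pi_{u_i}(t-1)$ and $\Pi_2=\Pi_{P(v)}(t-1)$ to that parent's $r_{e_i}$ vectors alone. Your explicit conditioning (fixing the history and the time-$t$ randomness on $P(v)$'s edges, so that $\Pi_2$ is deterministic relative to the sampling at $u_i$) is actually cleaner than the paper's treatment of the same dependency. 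One further difference worth noting: the paper applies Lemma~\ref{lem:subset} per parent against $\Pi=\Pi_u(t)=\Pi_v(t)$ itself rather than against $\Pi_{P(v)}(t-1)$, which yields the slightly stronger conclusion $\Pi_{P(u)}(t-1)=\Pi_{P(v)}(t-1)=\Pi$; this extra equality with $\Pi$ is what Corollary~\ref{cor:equalcond_gener} subsequently exploits.
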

\begin{proof}
  Suppose $\Pi_{P(u)}(t-1)\neq\Pi_{P(v)}(t-1)$ and let us assume that
  $\Pi_u(t)=\Pi_v(t)=\Pi$. This implies that if $\pi_u(t)$ and
  $\pi_v(t)$ are subspaces collected by nodes $u$ and $v$ at time $t$
  then,
\begin{align*}
\Pi_u(t) &= \Pi_v(t) = \Pi \\
\pi_u(t) + \Pi_u(t-1) &= \pi_v(t) +\Pi_v(t-1).
\end{align*}
From construction, we have $\Pi=\Pi_u(t)\subseteq\Pi_{P(u)}(t-1)$ and 
$\Pi=\Pi_v(t)\subseteq\Pi_{P(v)}(t-1)$.

On the other hand, since we randomly chose $\pi_u^{(u_i)}(t)$ from
$\Pi_{u_i}(t-1)$ and since $\pi_u^{(u_i)}(t)\subseteq\Pi$ 
(because $\pi_u(t)\subseteq\Pi$) using Lemma~\ref{lem:subset}
we conclude that we should have that $\Pi_{u_i}(t-1)\subseteq\Pi$
which means we should have $\Pi_{P(u)}(t-1)\subseteq\Pi$. Similarly,
we should have $\Pi_{P(v)}(t-1)\subseteq\Pi$. As a result (w.h.p.) we have to have 
\begin{equation*}
\Pi_{P(u)}(t-1)=\Pi_{P(v)}(t-1)=\Pi,
\end{equation*}
which is a contradiction, so we are done.
\end{proof}

\begin{corollary}\label{cor:equalcond_gener}
  If $\Pi_u(t)=\Pi_v(t)=\Pi$ for $t>l$ we should have had
  $\Pi_{P^l(u)}(t-l)=\Pi_{P^l(v)}(t-l)=\Pi$, w.h.p.
\end{corollary}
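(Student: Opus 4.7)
The plan is to iterate the reasoning of Lemma~\ref{lem:equalcond} along every length-$l$ ancestor path, establishing the two inclusions
\[
\Pi \;\subseteq\; \Pi_{P^l(u)}(t-l) \quad\text{and}\quad \Pi_{P^l(u)}(t-l) \;\subseteq\; \Pi
\]
separately, together with the symmetric statement for $v$. The base case $l=1$ is exactly Lemma~\ref{lem:equalcond}, so the interesting work is the passage from $l-1$ to general $l$, which I would do in one shot rather than by formal induction on $l$.

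The forward inclusion $\Pi \subseteq \Pi_{P^l(u)}(t-l)$ is purely structural. By the dissemination rule, every vector a node holds at time $\tau$ is a linear combination of vectors its parents held at time $\tau-1$, so $\Pi_w(\tau) \subseteq \Pi_{P(w)}(\tau-1)$ for any node $w$. Iterating this expansion $l$ times starting from $u$ yields
\[
\Pi = \Pi_u(t) \subseteq \Pi_{P(u)}(t-1) \subseteq \Pi_{P^2(u)}(t-2) \subseteq \cdots \subseteq \Pi_{P^l(u)}(t-l),
\]
and likewise for $v$.

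The reverse inclusion is the heart of the argument. I would show that every individual ancestor $w \in P^l(u)$ already satisfies $\Pi_w(t-l) \subseteq \Pi$ with high probability, so the sum does as well. Fix a length-$l$ path $u = w_0, w_1, \ldots, w_l = w$ witnessing $w \in P^l(u)$ (where $w_{j}$ is a parent of $w_{j-1}$) and induct on $j = 0, 1, \ldots, l$ with hypothesis ``$\Pi_{w_j}(t-j) \subseteq \Pi$ w.h.p.'' The base case $j=0$ is $\Pi_u(t) = \Pi$. For the inductive step, note that $\pi_{w_{j-1}}^{(w_j)}(t-j+1)$ is the span of vectors sampled uniformly at random from $\Pi_{w_j}(t-j)$, and it lies inside $\Pi_{w_{j-1}}(t-j+1) \subseteq \Pi$ by the inductive hypothesis. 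If $\Pi_{w_j}(t-j) \nsubseteq \Pi$ were to hold, Lemma~\ref{lem:subset} would give this containment probability $O(q^{-1})$; hence w.h.p.\ $\Pi_{w_j}(t-j) \subseteq \Pi$. Repeating the same reasoning along paths ending at $v$ gives $\Pi_w(t-l) \subseteq \Pi$ for every $w \in P^l(v)$. Summing over all ancestors in $P^l(u)$ and $P^l(v)$ produces the desired reverse inclusions.

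The main obstacle is purely bookkeeping rather than conceptual: each use of Lemma~\ref{lem:subset} carries a $O(q^{-1})$ failure probability, and the argument invokes it at most once per (ancestor, path-step) pair, a count polynomial in $\vartheta$ and $l$. A union bound then keeps the total failure probability $O(q^{-1})$, which preserves the ``w.h.p.'' conclusion for large $q$. One minor case to address is an ancestor $w$ that lies in $P^l(u)$ via several length-$l$ paths; this only provides multiple chances to certify $\Pi_w(t-l) \subseteq \Pi$ and does not degrade the union bound beyond a constant factor. A final small check, which I would flag but not belabor, is that along each fixed path the sample size at every step is at least one (guaranteed in the steady-state phase by Theorem~\ref{thm:RandNetCodRate}), which is all that is required for Lemma~\ref{lem:subset} to supply an $O(q^{-1})$ error term.
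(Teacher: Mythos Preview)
Your proposal is correct and follows essentially the same idea as the paper. The paper's proof is terser: it treats $P^{j}(u)$ and $P^{j}(v)$ as supernodes and re-applies the argument of Lemma~\ref{lem:equalcond} $l$ times, obtaining $\Pi_{P^{j}(u)}(t-j)=\Pi_{P^{j}(v)}(t-j)=\Pi$ for each $j$. Your version unpacks this supernode iteration into the two inclusions and traces the reverse inclusion along individual length-$l$ paths, but the engine is identical---Lemma~\ref{lem:subset} applied at each step to force each parent's subspace into $\Pi$. Your explicit bookkeeping of the union bound and the remark about needing at least one sampled vector per step are more careful than the paper, which leaves both implicit.
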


\begin{proof}
  Consider the parents of nodes $u$ and $v$ as supernodes $P(u)$ and
  $P(v)$. Using a similar argument as stated in
  Lemma~\ref{lem:equalcond}, we can conclude that the parents of
  $P(u)$ and $P(v)$, denoted as $P^2(u)$ and $P^2(v)$, should satisfy
\begin{equation*}
\Pi_{P^2(u)}(t-2)=\Pi_{P^2(v)}(t-2)=\Pi.
\end{equation*}
We use this argument $l$ times to get the result.
\end{proof}

\begin{lemma}
  If the dissemination protocol is in the steady state, $t\ge T_s$,
  we could not have $\Pi_u(t)=\Pi_v(t)$ unless nodes $u$ and $v$ have
  the same set of ancestors at some $l$ level above in the network.
\end{lemma}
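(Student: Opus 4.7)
My plan is to iterate Corollary~\ref{cor:equalcond_gener} to propagate the alleged equality $\Pi_u(t)=\Pi_v(t)$ upward through both ancestor chains of $u$ and $v$, and then to exploit the finiteness of the DAG together with the steady-state dimension bound to force the two ancestor sets to coincide at some common level.

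Assume $\Pi_u(t)=\Pi_v(t)=\Pi$ with $t\ge T_s$. Iterating Corollary~\ref{cor:equalcond_gener} yields, with high probability, $\Pi_{P^l(u)}(t-l)=\Pi_{P^l(v)}(t-l)=\Pi$ for every $l\ge 1$ with $l<t$ such that both $P^l(u)$ and $P^l(v)$ are non-empty. Suppose, for contradiction, that $P^l(u)\ne P^l(v)$ at every such $l$. Since the graph is a finite DAG rooted at $S$, there is a smallest $l^*\ge 1$ at which $S$ belongs to $P^{l^*}(u)\cup P^{l^*}(v)$; without loss of generality $S\in P^{l^*}(u)$. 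Then
\[
\Pi \;=\; \Pi_{P^{l^*}(u)}(t-l^*) \;\supseteq\; \Pi_S(t-l^*).
\]
Combining $t\ge T_s$ with the bound $T_s\le 2D(G)-1$ of Lemma~\ref{lem:UpperBound_Ts} and the trivial estimate $l^*\le D(G)$, the time $t-l^*$ is large enough that the source has already injected all $n$ source packets; hence $\Pi_S(t-l^*)=\Pi_S$ is the full $n$-dimensional ambient space, which forces $\Pi=\Pi_S$. But $u$ is a non-source node, so by the definition of steady state $\dim(\Pi_u(t))<n$, a contradiction. Therefore the contrapositive holds: whenever $\Pi_u(t)=\Pi_v(t)$ there must exist some $l\ge 1$ with $P^l(u)=P^l(v)$.

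The main obstacle is making sure the contradiction is airtight in the sub-case where $S$ enters both chains at level $l^*$ but the two sets still differ because one side carries extra non-source members. The argument above still delivers $\Pi\supseteq\Pi_S$ on each side and therefore $\Pi=\Pi_S$, clashing with $\dim(\Pi_u(t))<n$, so the only consistent escape is $P^{l^*}(u)=P^{l^*}(v)=\{S\}$, which is exactly the required common-ancestor condition. A minor technical point I would formalize is the claim ``the source has injected all $n$ packets by time $t-l^*$''; this follows from $t-l^*\ge T_s - D(G) \ge D(G)-1$, combined with the source rate $r_S$ and the sufficient-size condition of Corollary~\ref{cor:SuffCondon_n}, which together guarantee $\Pi_S(t-l^*)=\Pi_S$ with high probability by Lemma~\ref{lem:rand_choose_full_rank}.
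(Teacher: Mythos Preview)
Your argument is essentially the paper's: iterate Corollary~\ref{cor:equalcond_gener} until the source enters one of the ancestor sets, then contradict the steady-state bound $\dim(\Pi_u(t))<n$. One unnecessary detour: you worry whether ``the source has injected all $n$ packets by time $t-l^*$,'' but in the paper's model the source node always possesses the full space $\Pi_S$ from time $0$ (it is the node's own subspace, not the cumulative transmitted one, that appears in $\Pi_{P^{l^*}(u)}$), so $S\in P^{l^*}(u)$ immediately yields $\dim(\Pi_{P^{l^*}(u)}(t-l^*))=n$ without any appeal to Lemma~\ref{lem:UpperBound_Ts}, Corollary~\ref{cor:SuffCondon_n}, or Lemma~\ref{lem:rand_choose_full_rank}.
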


\begin{proof}
  Because $t\ge T_s$, we have $d_u=\dim(\Pi_u)<n$ and
  $d_v=\dim(\Pi_v)<n$. Let us assume $\Pi_u(t)=\Pi_v(t)=\Pi$ so we
  have $d\triangleq d_u=d_v$. From the Corollary~\ref{cor:equalcond_gener} we
  can write
\[
\Pi_{P^l(u)}(t-l)=\Pi_{P^l(v)}(t-l)=\Pi,
\]
for every $l\ge 1$. Increasing $l$, two cases may happen. First,
either $P^l(u)$ or $P^l(v)$ contains the source node $S$ that results
in $\dim(\Pi_{P^l(u)}(t-l))=n$ or $\dim(\Pi_{P^l(v)}(t-l))=n$ which is
a contradiction since $d<n$. Second, nodes $u$ and $v$ have the same
set of ancestors at some level $l$.
\end{proof}

Up to here, we have shown that assuming the dissemination protocol is
in the steady state the subspaces of two arbitrary nodes are equal
only if they have the same ancestors at some level above in the
network. 
The following result, Theorem~\ref{thm:SuffCond} states
sufficient conditions that make the nodes' subspace different for
dissemination Algorithm~\ref{alg_diss}.
\begin{theorem}\label{thm:SuffCond}
  Suppose two arbitrary nodes $u$ and $v$ have the same set of parents
  $P^l=P^l(u)=P^l(v)$ at some level $l$. The following conditions are
  sufficient so that the dissemination Algorithm~\ref{alg_diss} satisfies
  condition (\ref{eq_cond_general})\footnote{Note that  the min-cut to node $u$, $c_u=\textrm{min-cut}(S,u)$, equals
    $c_u=\min\{\hat{c}_u,c_p\}$.}:
\begin{align*}
\hat{c}_u=\textrm{min-cut}(P^l,u) &\le \textrm{min-cut}(S,P^l)=c_p,\\
\hat{c}_v=\textrm{min-cut}(P^l,v) &\le \textrm{min-cut}(S,P^l)=c_p.
\end{align*}
\end{theorem}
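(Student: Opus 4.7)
The plan is to argue by contradiction. Suppose that at some timeslot $t$ in the steady-state regime (with $t \ge T_s + l$ so that $u$, $v$, and the common ancestor set $P^l$ are all in steady state), one has $\Pi_u(t) = \Pi_v(t) =: \Pi$.

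First, I would iterate Corollary~\ref{cor:equalcond_gener}: since $P^l(u) = P^l(v) = P^l$, the equality $\Pi_u(t) = \Pi_v(t)$ propagates back $l$ levels to yield $\Pi_{P^l}(t-l) = \Pi$ with high probability. Combined with the inclusion chain $\Pi_u(t) \subseteq \Pi_{P(u)}(t-1) \subseteq \cdots \subseteq \Pi_{P^l(u)}(t-l)$, which is automatic from the construction of Algorithm~\ref{alg_diss}, this in particular forces the dimensional equality $\dim(\Pi_u(t)) = \dim(\Pi_{P^l}(t-l))$.

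Second, I would invoke Theorem~\ref{thm:RandNetCodRate} to extract the steady-state growth rates. The hypothesis $\hat{c}_u \le c_p$ gives $c_u = \min(\hat{c}_u, c_p) = \hat{c}_u$, so $\dim(\Pi_u(\cdot))$ grows at rate $\hat{c}_u$ per timeslot; meanwhile, viewing $P^l$ as a virtual super-sink, the min-cut from $S$ to $P^l$ is $c_p$, so $\dim(\Pi_{P^l}(\cdot))$ grows at rate $c_p$ per timeslot.

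Third, I would derive the contradiction in two regimes. If $\hat{c}_u < c_p$, the two dimensions drift apart linearly in $t$, so $\dim(\Pi_{P^l}(t-l)) - \dim(\Pi_u(t)) \to \infty$, contradicting equality for sufficiently large $t$. In the tight case $\hat{c}_u = c_p$ the growth rates coincide, so I would exploit the propagation delay of $l \ge 1$ hops together with Definition~\ref{def:waiting_time}: any innovation that reaches $P^l$ during the window of length $l$ immediately preceding the comparison time cannot yet have traveled all $l$ edges down to $u$, giving a strictly positive offset $\dim(\Pi_{P^l}(t-l)) - \dim(\Pi_u(t)) \ge c_p \cdot l \ge 1$, again contradicting equality. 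Applying the same argument (or simply symmetry) to $v$ is redundant, since the contradiction on the $u$ side already kills the assumption $\Pi_u(t) = \Pi_v(t)$.

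The principal obstacle is the boundary case $\hat{c}_u = c_p$, for which comparison of asymptotic rates alone is inconclusive; the fix is to track the constant offset along the chain $\Pi_u(t) \subseteq \Pi_{P(u)}(t-1) \subseteq \cdots \subseteq \Pi_{P^l(u)}(t-l)$ and verify that at least one inclusion is strict in steady state, which is guaranteed by the $l$-hop delay combined with the ``dimension at least $c_v + 1$'' clause of the waiting-time rule.
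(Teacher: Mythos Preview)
Your overall strategy matches the paper's: assume $\Pi_u(t)=\Pi_v(t)$, invoke Corollary~\ref{cor:equalcond_gener} to force $\Pi_{P^l}(t-l)=\Pi_u(t)$, and then contradict this by showing $\dim(\Pi_u(t))<\dim(\Pi_{P^l}(t-l))$. The drift argument for the strict case $\hat{c}_u<c_p$ is fine and is essentially what the paper does as well.

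The gap is in the tight case $\hat{c}_u=c_p$. Your delay-window argument does not work: the comparison dictated by Corollary~\ref{cor:equalcond_gener} is between $\Pi_{P^l}(t-l)$ and $\Pi_u(t)$, and the $l$-hop propagation delay is \emph{already} absorbed by the $l$-timeslot shift on the first term. When $\hat{c}_u=c_p$, the min-cut from $P^l$ to $u$ equals $c_p$, so in steady state every packet present in $P^l$ at time $t-l$ can in principle be delivered to $u$ by time $t$; the delay by itself produces no offset, let alone $c_p\cdot l$. Any offset that does appear comes entirely from the start-up behaviour (the waiting times), not from propagation in steady state, so the argument as written is circular.

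The paper obtains the strict inequality by working at the \emph{transient} phase rather than in steady state. It extracts a subset $P'\subseteq P^l$ that is a cut for $u$ with all shortest paths of length exactly $l$, orders its nodes by waiting time $\tau_{p_1}\le\cdots\le\tau_{p_k}$, and tracks the accumulative min-cuts $a_1,\ldots,a_k$ (from $S$ into $P'$) and $b_1,\ldots,b_k$ (from $P'$ into $u$). The ``dimension at least $c_v+1$'' clause of Definition~\ref{def:waiting_time} injects a ``$+1$'' into the lower bound for $d_{P^l}(\tau_k)$, while the upper bound for $d_u(\tau_k+l)$ carries no such term; combined with $\sum_j b_j=\hat{c}_u\le c_p=\sum_j a_j$ this yields the anchor inequality $d_u(\tau_k+l)<d_{P^l}(\tau_k)$. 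Only then does the rate comparison (which you already have) propagate strictness to all later times. Your closing remark about the ``$+1$'' clause points at exactly the right ingredient, but it has to be cashed out via this transient-phase bookkeeping on $P'$; a steady-state delay argument cannot replace it.
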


\begin{proof}
Consider the set of nodes in $P^l$. From the definition we know that
there exists at least one path of length $l$ from each node in $P^l$ to 
the node $u$. But also there might exist paths of length less than $l$
from some nodes in $P^l$ to $u$. If this is the case, because the topology 
is a directed acyclic graph, we can find a subset $P'$
of the nodes in $P^l$ such that it forms a cut for the node $u$ and the shortest
path from each node in $P'$ to $u$ is $l$; see Figure~\ref{fig:GenTopThmProofFigure}. Moreover, we have 
$\textrm{min-cut}(S,P')=c_p$ and $\textrm{min-cut}(P',u)=\hat{c}_u$.

Now assume that $P'=\{p_1,\ldots,p_k\}$ such that $\tau_{p_1}\le\cdots\le\tau_{p_k}$.
Let $a_1,\ldots,a_k,$ be the accumulative min-cut from $S$ to each node in $P'$.
By this we mean that $a_1=c_{p_1}$ and $a_2$ is the amount of increase in the 
min-cut from $S$ by adding node $p_2$ and so on. We similarly consider  
the accumulative min-cut values from $p_i$ to $u$ and denote these by $b_1,\ldots,b_k$. So we have
$\sum_{j=1}^k a_j=c_p$ and $\sum_{j=1}^k b_j=\hat{c}_u$.

From definition of the waiting times (Definition~\ref{def:waiting_time}) 
we can write
\begin{gather*}
d_{P'}(\tau_1) \ge a_1+1,\\
d_{P'}(\tau_2) \ge d_{P'}(\tau_1) +(\tau_2-\tau_1)a_1 + a_2,\\
d_{P'}(\tau_k) \ge d_{P'}(\tau_{k-1}) +(\tau_{k}-\tau_{k-1})\sum_{j=1}^{k-1} a_j + a_k.
\end{gather*}
Then we have
\begin{gather}
d_{P^l}(\tau_k) \ge d_{P'}(\tau_k)\nonumber\\
\ge (\tau_2-\tau_1)a_1+\cdots+(\tau_k-\tau_{k-1})\sum_{j=1}^{k-1} a_j + \sum_{j=1}^k a_j + 1. \label{eq:AccMinCut-dP}
\end{gather}

For $d_u$ we can also write
\begin{gather*}
d_u(\tau_1+l) \le b_1,\\
d_u(\tau_2+l) \le d_u(\tau_1+l) + (\tau_2-\tau_1)\min[a_1,b_1] + b_2,\\
d_u(\tau_k+l) \le d_u(\tau_{k-1}) + (\tau_k-\tau_{k-1})\min[\sum_{j=1}^{k-1} a_j,\sum_{j=1}^{k-1}b_j] + b_k,
\end{gather*}
or
\begin{gather}
d_u(\tau_k+l) \le (\tau_2-\tau_2)\min[a_1,b_1] \nonumber\\
 +\cdots+ (\tau_k-\tau_{k-1})\min[\sum_{j=1}^{k-1} a_j,\sum_{j=1}^{k-1}b_j] + \sum_{j=1}^k b_j.\label{eq:AccMinCut-du}
\end{gather}

From \eqref{eq:AccMinCut-dP}, \eqref{eq:AccMinCut-du} and the theorem assumptions we 
conclude that $d_u(\tau_k+l) < d_{P^l}(\tau_k)$. Now for $\Delta t$ timeslots later
we write
\begin{align*}
d_u(\tau_k+l+\Delta t) &\stackrel{\text{(a)}}{\le} d_u(\tau_k+l) + \hat{c}_u \Delta t\\
&\stackrel{\text{(b)}}{<} d_{P^l}(\tau_k) + c_p \Delta t\\
&\stackrel{\text{(c)}}{=} d_{P^l}(\tau_k+\Delta t),
\end{align*}
where (a) is true because $u$ receives packets from $P^l$ with rate at most $\hat{c}_u$;
(b) is true because $d_u(\tau_k+l) < d_{P^l}(\tau_k)$ and $\hat{c}_u\le c_p$; and finally
(c) is true because after $\tau_k$ all of the nodes in $P'$ receive packets at rate 
equal to their min-cut which means that $P'$ (the same is true for $P^l$) receives packets
at rate equal to its min-cut $c_p$.

The same inequality holds for the dimension of
\mbox{$\Pi_v(\tau_k+l+\Delta t)$}. Thus for time $t>\tau_k+l$ we cannot have
\mbox{$\Pi_{P^l}(t-l)=\Pi_u(t)$} and
\mbox{$\Pi_{P^l}(t-l)=\Pi_v(t)$} if $\hat{c}_u\le c_p$ and $\hat{c}_v\le c_p$. So using
Corollary~\ref{cor:equalcond_gener} we are done.
\end{proof}

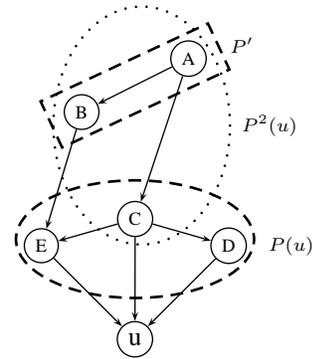
\begin{figure}[hbt]
\begin{center}
\psset{unit=0.07in}
\begin{pspicture}(0,-1)(20,25)
\psset{linewidth=0.5pt}


\rput(17,6){\circlenode{D}{\scriptsize{D}}}
\rput(3,6){\circlenode{E}{\scriptsize{E}}}
\rput(10,8){\circlenode{C}{\scriptsize{C}}}
\rput(10,-1){\circlenode{u}{u}}
\ncline{->}{D}{u}
\ncline{->}{E}{u} 
\ncline{->}{C}{u} 
\ncline{->}{C}{D} 
\ncline{->}{C}{E}  

\psellipse[linewidth=1pt,linestyle=dashed,hatchangle=0](10,6.5)(9,4.5)
\rput[Cl](20,6){\scriptsize{$P(u)$}}

\rput(14,20){\circlenode{A}{\scriptsize{A}}}
\rput(6,16){\circlenode{B}{\scriptsize{B}}}
\ncline{->}{A}{B}
\ncline{->}{B}{E}
\ncline{->}{A}{C}

\psellipse[linewidth=1pt,linestyle=dotted,hatchangle=0](10.5,15)(6.7,9)
\rput[Cl](18,15){\scriptsize{$P^2(u)$}}

\rput{25}(10,18){\psframe[linewidth=1pt,linestyle=dashed](-7,2)(7,-2)}
\rput(18,21){\scriptsize{$P'$}}
\end{pspicture}
\end{center}
\caption{Sets used in the proof of Theorem~\ref{thm:SuffCond}: the set $P(u)$ contains the parents of node $u$ at distance $l=1$; the set $P^2(u)$ contains the set of parents at distance  $l=2$; while $P'$ is the subset of  $P^2(u)$ at distance no less than $l=2$.}

\label{fig:GenTopThmProofFigure}
\end{figure}

Intuitively, what Theorem~\ref{thm:SuffCond} tell us is that, if for a
node $u$ there exists a path that does not belong in any cut between
the source and another node $v$, then nodes $u$ and $v$ will
definitely have distinct subspaces. The only case where nodes $u$ and
$v$ may have the same subspace is, if they have a common set of
parents, a common cut. Even then, they would need both of them to
receive all the innovative information that flows through the common
cut at the same time. Note that the condition of
Theorem~\ref{thm:SuffCond} are also necessary for identifiably for
the special case of tree topologies, such as the topology in
Figure~\ref{fig_tree}.

\subsection{Practical Considerations}\label{sec_practical}
We here argue that our proposed scheme can  lead to a practical protocol,
where nodes passively collect information during the dissemination, and send once a small amount of information
to the central node in charge of the topology inference.
In particular, we assume that the
 nodes follow the information dissemination protocol and at some point
the central node query them to report the subspaces they  gather at a 
specific\footnote{We assume the query is send before time $t$ actually occurs; 
Also note that if the number of source packets $n$ is much larger than the 
min-cut to each node, and if we have an estimate for $\Delta_{\text{i}}(G)$, 
a central node can with high probability select at time $t$ in steady state. 
A node can also send a feedback message to inform the central node if it is 
not at steady state at time $t$.} time $t$. 

We now calculate the communication cost (total number of 
bits required to be transmitted to a central node) of the proposed passive
inference algorithm. 
Each node has  to transmit at most $2\Delta_{\text{i}}(G)$ subspaces to 
the central node where $\Delta_{\text{i}}(G)$ is the maximum in-degree of nodes in the 
network. There are $\vartheta$ nodes in the network so $2\vartheta \Delta_{\text{i}}(G)$ 
subspace have to be transmitted. 
The total number of subspaces of $\Pi_S$ (which itself is an $n$-dimensional
space) is
\[
\sum_{i=1}^{n} \gaussnum{n}{i}{q} \approx \sum_{i=1}^n q^{i(n-i)} \approx q^{n^2/4},
\]
where $\gaussnum{n}{i}{q}$ is the Gaussian number, the number of $i$-dimensional
subspaces of an $n$-dimensional space. To approximate the Gaussian number we use 
\cite[Lemma~1]{JaMoFrDi-IT11}; note that the approximation holds for large $q$.

So to encode one of the subspace of $\Pi_S$ we need approximately $\frac{n^2}{4} \log_2{q}$ bits.
As a result, the total number of bits need to be transmitted to the central node
is at most 
\[
\frac{2n^2 \Delta_{\text{i}}(G) \vartheta}{4} \log_2{q}.
\] 

Clearly, the complexity depends on the size of $n$, 
the number of packets that the source transmits. 
In our work we assume that $n$ is large enough,
so that the network enters in steady state; on the other hand, other 
considerations such as decoding complexity at network nodes, 
would require $n$ to take moderate values.
Note that, for our algorithm to work, 
(\ie, to sample the network while in the steady state)
we only require that $n=2\beta c_{\text{max}} D(G)$ (Corollary~\ref{cor:SuffCondon_n}),
where $\beta>1$ is some constant that
determines how many time slots the network is in the steady state.
If $n$ has such a size, the maximum number of bits that
need to be transmitted per node (communication cost per node) is
\[
R_{\text{com-cost/ND}} \approx 2\beta^2 c_{\text{max}}^2 D(G)^2 \Delta_{\text{i}}(G) \log_2{q} \quad\text{bits}.
\]
In the above equation $\beta$, $c_{\text{max}}$, and $\Delta_{\text{i}}(G)$
are some constants. The only parameter that depends on the network 
size is $D(G)$. However for the most of practical
content distribution networks the longest path of network is kept small 
to ensure a good connectivity between nodes in the network (see for 
example \cite{HaLeBa-INRIA}).

To give a specific example for a possible communication cost, let us consider
a practical scenario where $q=2^8$, $c_{\text{max}}=1$, $\beta^2=5$, 
$\Delta_{\text{i}}(G)=5$, and $D(G)=10$. Then we have $R_{\text{com-cost/nd}}\approx 4$
kilobytes. In contrast, in a practical dissemination scenario (ex. of video) we would disseminate a large number of information packets each  possibly
as large as a few megabytes; thus the overhead of the topological information would not be significant.

\section{Locating Byzantine Attackers}\label{chp:ByzantineAttack}

In this section we explore a problem that is dual to topology
inference: given complete knowledge of the topology, we leverage
subspace properties to identify the location of a malicious Byzantine
attacker.

In a network coded system, the adversarial nodes in the network
disrupt the normal operation of the information flow by inserting
erroneous packets into the network. This can be done by inserting
spurious data packets into their outgoing edges. One way in which
these erroneous packets can be prevented from disrupting information
flow is by reducing the transmission rate to below the min-cut of the
network, and using the redundancy to protect against errors; 
\cite{YeCa2006_1,CaYe2006_2,Zh2006}. One such
technique, using subspaces to code information was proposed in
\cite{KoKs2007}. In this approach, the source sends a basis of the
subspace corresponding to the message. In the absence of errors, the
linear operations of the intermediate nodes do not alter the sent
subspace, and hence the receiver decodes the message by collecting the
basis of the transmitted subspace. A malicious attacker inserts
vectors that do not belong in the transmitted subspace. Therefore, if
the message codebook uses subspaces that are ``far enough'' apart
(according to an appropriately defined distance measure), then one can
correct these errors \cite{KoKs2007}. Note that in this technique, we
do not need any knowledge of the network topology for the error
correction mechanism. All that is needed is that the intermediate
nodes do not alter the transmitted subspace (which can be done if they
do linear operations).

The approach of this section to locating adversaries uses the
framework developed in the previous sections, where it was shown that
under randomized network coding, the subspaces gathered by the nodes
of the network provide information about the topology. Therefore, the
basic premise in this section is to use the structure of the erroneous
subspace inserted by the adversary to reveal information about its
location, when we already know the network topology.

\subsection{Problem Formulation}\label{sec:Byz_Attack:ProbForm}

Consider a network represented as a directed acyclic graph
$G=(V,E)$. We have a source, sending information to $r$ receivers, and
one (or more) Byzantine adversaries, located at intermediate nodes of
the network. We assume complete knowledge of the network topology, and
consider the source and the receivers to be trustworthy
(authenticated) nodes, that are guaranteed not to be adversaries.

Suppose source $S$ sends $n$ vectors, that span an $n$-dimensional
subspace $\Pi_S$ of the space $\Fbb_q^\ell$, where we assume $q\gg 1$. In particular, 
in this section we will consider (without loss of generality) subspace coding, where
$\Pi_S$ belongs to a codebook $\mathcal{C}$, $\Pi_S \in \mathcal{C}$
designed to correct network errors  and erasures \cite{KoKs2007}.

In the absence of any adversaries in the network each
receiver $R_i$, $i=1,\ldots,r$, can decode the exact space $\Pi_S$. Now assume that 
there
is an adversary, Eve, who attacks one of the nodes in the network by
combining a $\delta$-dimensional subspace $\Pi_\varepsilon$ with its
incoming space and sending the resulting vectors to its children.
Then the receiver $R_i$ collects some linearly independent vectors that span a subspace
$\Pi_{R_i}$. We can write
\[ 
\Pi_{R_i} = \mathcal{H}_i(\Pi_S + \Pi_\varepsilon), 
\]
where $\mathcal{H}_i(\Pi)$ is a linear operator. 
This operator models the linear transformation that the network induces
on the inserted source and adversary packets.

We assume that the receiver is able to at least detect that a
Byzantine attack is under way. Moreover, we assume that the receiver
is able to decode the subspace $\Pi_S$ that the source has sent. This
might be, either because the receiver has correctly decoded the sent
message (\ie, using code construction from \cite{KoKs2007}), or, 
because after detecting the presence of an attack has
requested the source subspace through a secure channel from the source
node.

We can restrict the Byzantine attack in several ways, depending on the
edges where the attack is launched, the number of corrupted vectors
inserted, and the vertices (network nodes) that the adversary has
access to. In this section we will distinguish between the cases where
\begin{enumerate}
\item[I.] there is a single Byzantine attacker located in a vertex of the network, and  
\item[II.] there are multiple independent attackers, located on
  different vertices, that act without coordinating with each other.
\end{enumerate}
We assume that each attacker located on a single vertex is able to
corrupt any outgoing edges by inserting arbitrary erroneous information.
However, in this work we only consider the case where the attackers
inject independent information without any coordination among themselves.


We are interested in understanding under what conditions we can
uniquely identify the attacker's location (or, up to what uncertainty
we can identify the attacker), under the above scenarios.

\subsection{The Case of a Single Adversary}\label{sec_single} 
In this section we focus on the case where we want to locate a
Byzantine adversary, Eve, controlling a {\em single} vertex of the network
graph.  

In \S\ref{subsec:OnlyTop} we illustrate the limitation of using {\em
  only} the information the receivers have observed along with the
knowledge of the topology, to locate the adversary. This motivates
requiring additional information from the intermediate nodes related
to the subspaces observed by them. In \S\ref{subsec:InfoIntNodes}, we
show that such additional information allows us to localize the
adversary either uniquely or within an ambiguity of at most two nodes.

\subsubsection{Identification using only Topological Information}\label{subsec:OnlyTop}
In order to illustrate the ideas, we will examine the case where
the corrupted packets are inserted on a single edge of the network,
say edge $e_A$. The extension to the cases where multiple edges get
corrupted is easy.

Since each receiver $R$ knows the subspaces $\{\Pi^{(i)}_R\}$ it has
received from its $|\In(R)|$ parents, it knows whether what it
received is corrupted or not (a subspace of $\Pi_S$ or not). Using
this, we can infer some information regarding topological properties
that the edge $e_A$ should satisfy. In particular we have the following
result, Lemma~\ref{lem:IdenUsingTopInfo}.

\begin{lemma}\label{lem:IdenUsingTopInfo}
Let $P_e$ denote the set of paths\footnote{In the following we are going
to equivalently think of $P_e$ as the set of all edges that take
part in these paths.} starting from the source and ending at edge
$e$.
Then, if $\mathcal{E}_C$ is the set of incoming edges to receivers
that bring corrupted packets, while $\mathcal{E}_S$ the set of
incoming edges to receivers that only bring source information, the
edge $e_A$ belongs in the set of edges $\mathcal{E}_A$, with
\[  
\mathcal{E}_A \triangleq \left\{ \bigcap_{e \in \mathcal{E}_C} P_e -
\bigcup_{e \in \mathcal{E}_S} P_e \right\}.
\] 
\end{lemma}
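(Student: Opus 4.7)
My plan is to prove the two inclusions separately, treating the ``in every $P_e$'' part for corrupted receivers as a deterministic information-flow statement, and the ``in no $P_e$'' part for clean receivers as a high-probability statement over the randomized network code.

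First I would set up the flow argument for $e \in \mathcal{E}_C$. The adversary injects the erroneous subspace $\Pi_\varepsilon$ on the single edge $e_A$. Because every intermediate node only forms $\Fbb_q$-linear combinations of the packets it receives, any contribution of $\Pi_\varepsilon$ that appears on a downstream edge $e'$ must have traveled along a directed path from $e_A$ to $e'$. Consequently, if $e \in \mathcal{E}_C$ carries corrupted packets at all, there must exist a directed path from $e_A$ to $e$, and concatenating this with any $S$-to-$e_A$ path gives an $S$-to-$e$ path passing through $e_A$. Hence $e_A \in P_e$ for every $e \in \mathcal{E}_C$, which yields $e_A \in \bigcap_{e \in \mathcal{E}_C} P_e$. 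This part uses no randomness.

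For $e \in \mathcal{E}_S$ I would argue the contrapositive: if $e_A$ were on some $S$-to-$e$ path, then the image of $\Pi_\varepsilon$ under the random transfer matrix from $e_A$ to $e$ would, with high probability, be nontrivial and not absorbed into the legitimate source subspace on $e$. This can be reduced to the subspace-sampling results of \S\ref{chp:NotesVecSpace}: Theorem~\ref{thm:SubSpaceGenPos} applied to the random coding coefficients along the paths from $e_A$ to $e$ gives that the propagated subspace is in general position w.r.t.\ the legitimate subspace, while Lemma~\ref{lem:subset} controls the probability that an injected vector falls back into the clean space $\Pi_S$. Since by hypothesis $e$ carries only source information, this high-probability event must fail, and therefore $e_A \notin P_e$ for every $e \in \mathcal{E}_S$. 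Taking the union yields $e_A \notin \bigcup_{e \in \mathcal{E}_S} P_e$ w.h.p., and combining with the first part gives $e_A \in \mathcal{E}_A$.

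The main obstacle is the second direction: one has to rule out pathological cancellations where the corrupted subspace propagates topologically to a clean receiver but happens to vanish (or to be linearly absorbed by the legitimate subspace) due to an unlucky random choice of local coding coefficients. The cleanest way I see to handle this is to view the aggregate transfer from $e_A$ to each potentially affected downstream edge as a random linear map whose restriction to $\Pi_\varepsilon$ is, w.h.p., injective modulo $\Pi_S$; this is precisely the kind of statement that the lemmas in \S\ref{chp:NotesVecSpace} are designed to support, so the rest should reduce to bookkeeping and a union bound over the finitely many edges in $\mathcal{E}_S$.
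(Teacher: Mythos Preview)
Your proposal is correct and follows the same two-part structure as the paper's proof: for $e\in\mathcal{E}_C$ you use the deterministic information-flow argument that corruption can only reach $e$ along a directed path from $e_A$, and for $e\in\mathcal{E}_S$ you argue the contrapositive. The paper's own proof is essentially a two-sentence version of exactly this.

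The one genuine difference is that you make explicit, and justify, a point the paper leaves implicit. The paper simply asserts that if $e$ is uncorrupted then no path from $e_A$ to $e$ exists, without addressing the possibility that corruption propagates topologically but is accidentally absorbed back into $\Pi_S$ by the random coefficients. You correctly identify this as the nontrivial direction and reduce it to the subspace-sampling lemmas of \S\ref{chp:NotesVecSpace} (Lemma~\ref{lem:subset} is indeed the right tool: at each hop the outgoing combination lies in $\Pi_S$ only if the node's full span is already in $\Pi_S$, which fails once any incoming edge is corrupted). So your argument is the same route, just with the ``w.h.p.'' gap filled in; invoking Theorem~\ref{thm:SubSpaceGenPos} is a bit heavier than needed, but Lemma~\ref{lem:subset} together with a union bound over the edges on the relevant paths suffices.
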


\begin{proof}
If $R$ receives corrupted vectors from an incoming edge $e$ then
there exists at least one path that connects $e_A$ to $e$. 
Then $e_A$ is part of at least one path in $P_e$.

Conversely, if a receiver $R$ does not receive corrupted packets
from an incoming edge $e$, then $e_A$ does not form part of any path
in $P_e$. That is, there does not exist a path that connects $e_A$
to $e$.
\end{proof}

The following example illustrates this approach.
\begin{example}\label{ex:ByzAttack:ex1}
  Consider the network in Figure~\ref{fig_ex1}, and assume that $R_1$
  receives corrupted packets from edge $DR_1$ and uncorrupted packets
  from $AR_1$, while $R_2$ receives only uncorrupted packets.
\begin{figure}[h]
\vspace{-1.5em}
\begin{center}
\psset{unit=0.035in}
\begin{pspicture}(-15,-5)(35,40)
\psset{linewidth=0.5mm}
\begin{small}
\rput(10,30){\circlenode{S}{S}}
\rput(-5,8){\circlenode{A}{A}}
\rput(10,15){\circlenode{B}{B}}
\rput(25,8){\circlenode{C}{C}}
\rput(10,-5){\circlenode{D}{D}}
\rput(-15,-5){\circlenode{R1}{$R_1$}}
\rput(35,-5){\circlenode{R2}{$R_2$}}
\ncline[linewidth=0.5mm,linecolor=black]{->}{S}{A}
\ncline[linewidth=0.5mm,linecolor=black]{->}{S}{B}
\ncline[linewidth=0.5mm,linecolor=black]{->}{S}{C}
\ncline[linewidth=0.5mm,linecolor=black]{<-}{A}{B}
\ncline[linewidth=0.5mm,linecolor=black]{->}{A}{D}
\ncline[linewidth=0.5mm,linecolor=black]{->}{B}{C}
\ncline[linewidth=0.5mm,linecolor=black]{->}{C}{D}
\ncline[linewidth=0.5mm,linecolor=black]{->}{A}{R1}
\ncline[linewidth=0.5mm,linecolor=black]{->}{D}{R1}
\ncline[linewidth=0.5mm,linecolor=black]{->}{C}{R2}
\ncline[linewidth=0.5mm,linecolor=black]{->}{D}{R2}
%
%
%
\end{small}
\end{pspicture}
\end{center} 
\caption{The source $S$ distributes packets to receivers $R_1$ and $R_2$.}
\label{fig_ex1}
\end{figure}
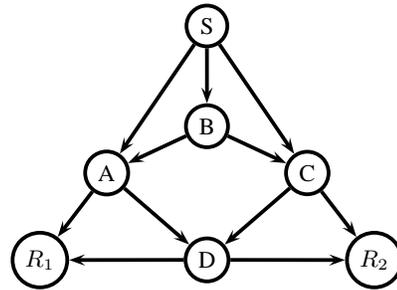
Then $\mathcal{E}_A=\{DR_1\} $ and the attacker is located on node $D$. 
\hfill{$\blacksquare$}
\end{example}

In Example~\ref{ex:ByzAttack:ex1}, we were able to exactly identify
the location of the adversary, because the set $\mathcal{E}_A$
contained a single edge, and node $R_1$ is trustworthy. It is easy to
find network configurations where $\mathcal{E}_A$ contains multiple
edges, or in fact all the network edges, and thus we can no longer
identify the attacker. The following example illustrates one such
case.  

\begin{example}\label{ex:ByzAttack:ex2}
  Consider the line network shown in
  Figure~\ref{line_network}. Suppose the attacker is node $A$. If the
  receiver $R$ sees a corrupted packet, then using just the topology,
  the attacker could be {\em any} of the other nodes in the line
  network. This illustrates that just the topology and receiver
  information could lead to large ambiguity in the location of the
  attacker.  \hfill{$\blacksquare$}
\end{example}

Therefore, Example~\ref{ex:ByzAttack:ex2} motivates the ideas examined
in \S\ref{subsec:InfoIntNodes} which obtain additional information and
utilize the structural properties of the subspaces observed.

\subsubsection{Identification using Information from all Network Nodes}
\label{subsec:InfoIntNodes}
We will next discuss algorithms where a central authority, which we
will call {\em controller}, requests from all nodes in the network to
report some additional information, related to the subspaces they have
received from their parents.  The adversary could send inaccurate
information to the controller, but the other nodes report the
information accurately. Our task is to design the question to the
nodes such that we can locate the adversary, despite its possible
misdirection.

The controller may ask the nodes of the following types of
information, listed in decreasing order of complexity:
\begin{itemize}
\item[]{\em Information 1:} Each node $v$ sends all subspaces
  $\Pi^{(i)}_v$ it has received from its parents, where
  $\Pi_v=\sum_{i\in P(v)}\Pi^{(i)}_v$.
\item[]{\em Information 2:} Each node $v$ sends a randomly chosen
  vector from each of the received subspaces $\Pi^{(i)}_v$
  ($|\In(v)|$ vectors in total).
\end{itemize}

Information 2 is motivated by the following well-known
observation, see Lemma~\ref{lem:subset}: let $\Pi_1$ and $\Pi_2$ be
two subspaces of $\mathbb{F}_q^n$, and assume that we randomly select
a vector $\Bs{y}$ from $\Pi_1$.  Then, for $q\gg 1$, $\Bs{y}\in \Pi_2$ if and
only if $\Pi_1\subseteq \Pi_2$. Thus, a randomly selected vector from
${\Pi}_v$ allows to check whether ${\Pi}_v\subseteq
\Pi_S$ or not.

In fact, we will show in this section that for a single adversary it
is sufficient to use\footnote{Using Information~2 these
statements are made with high probability, \ie, the
probability goes to one as field size $q\rightarrow\infty$.}
Information~2, and classify the edges of the network by simply testing
whether the information flowing through each edge is a subspace of
$\Pi_S$ or not (\ie, is corrupted or not).

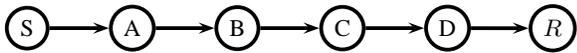
\begin{figure}[h]
\vspace{-0.8em}
\begin{center}
\psset{unit=0.055in}
\begin{pspicture}(-5,-5)(55,5)
\psset{linewidth=0.5mm}
\begin{small}
\rput(0,0){\circlenode{S}{S}}
\rput(10,0){\circlenode{A}{A}}
\rput(20,0){\circlenode{B}{B}}
\rput(30,0){\circlenode{C}{C}}
\rput(40,0){\circlenode{D}{D}}
\rput(50,0){\circlenode{R}{$R$}}
\ncline[linewidth=0.5mm,linecolor=black]{->}{S}{A}
\ncline[linewidth=0.5mm,linecolor=black]{->}{A}{B}
\ncline[linewidth=0.5mm,linecolor=black]{->}{B}{C}
\ncline[linewidth=0.5mm,linecolor=black]{->}{C}{D}
\ncline[linewidth=0.5mm,linecolor=black]{->}{D}{R}
\end{small}
\end{pspicture}
\end{center} 
\vspace{-1.5em}
\caption{ 
The source $S$ sends information to receiver $R$ over a line network.
}
\label{line_network}
\end{figure}

\begin{theorem}\label{thm:FindAdvBySplittingSnglCase}
Using Information~1, by splitting the network edges into corrupted and
uncorrupted sets, we can narrow the location of the adversary up
to a set of at most two nodes. With Information 2, the same result
holds w.h.p.
\end{theorem}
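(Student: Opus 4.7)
The plan is to reduce the problem to a binary classification of edges: declare an edge $(u,v)$ \emph{corrupted} if the reported subspace $\Pi_v^{(u)}$ is not contained in $\Pi_S$, and \emph{uncorrupted} otherwise. Because only the single adversary $A$ can mis-report, and then only about edges in $\In(A)$, this classification is truthful on every edge whose head is not $A$. I then define a node $v$ to be a \emph{candidate} if there exist a tainted-edge set $S_v\subseteq \Out(v)$ and a choice of declared statuses for edges in $\In(v)$ such that, under the hypothesis that $v$ is the sole adversary and every other node is honest, the resulting observed classification matches what we actually see. The theorem reduces to showing that this candidate set has cardinality at most $2$.

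The true adversary $A$ is always a candidate, so the task is to bound the number of alternative candidates $v^\prime\neq A$. The key structural fact is that, under any single-adversary hypothesis with adversary $v^\prime$, the truly tainted edges form the forward closure $F(S_{v^\prime})$ in $G$: an honest node receiving any tainted incoming produces tainted outgoing w.h.p.\ by random linear combination, which follows from Lemma~\ref{lem:subset} applied to the sent subspace and $\Pi_S$. Hence the observed corrupted edges \emph{outside} $\In(v^\prime)$ must coincide with $F(S_{v^\prime})$ outside $\In(v^\prime)$, while edges inside $\In(v^\prime)$ can be matched freely by lies. A case analysis on the position of $v^\prime$ then gives: $v^\prime$ unrelated to $A$ cannot produce the observed downstream-of-$A$ corruption; $v^\prime$ downstream of $A$ is consistent only if $v^\prime$ is an immediate child $c$ of $A$ and $A$ attacked only the edge $A\to c$ (otherwise the corrupted siblings of $c$ lie outside $F(S_{v^\prime})$); $v^\prime$ upstream of $A$ is consistent only if $v^\prime\in P(A)$ is an immediate parent --- any distance-$\ge 2$ ancestor would produce observed corruption on edges through honest intermediates, and no such edges are observed --- and moreover $A$ must have attacked all of $\Out(A)$ and fabricated exactly the single report on the edge $v^\prime\to A$. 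Finally, the parent case admits at most one $p$ (two fabricated edges would require an alternative parent to explain the other, which its outgoing cannot), the child case admits at most one $c$ (two attacked children orphan the other edge), and the two cases are mutually exclusive (the first needs $S_A=\Out(A)$ plus fabrication, the second needs $|S_A|=1$ and no fabrication). Hence at most two candidates in total.

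For Information~2, each reported subspace contributes one random vector $\Bs{y}$, and we classify the edge as corrupted iff $\Bs{y}\notin \Pi_S$. By Lemma~\ref{lem:subset}, the test fails with probability $O(q^{-1})$ whenever the underlying subspace is not contained in $\Pi_S$; a union bound over the $O(|E|)$ edges shows that the classification agrees with that of Information~1 with probability $1-O(q^{-1})$, so the preceding argument applies verbatim.

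I expect the main obstacle to be ruling out ancestors of $A$ at distance $\ge 2$ as candidates. The difficulty is that such a $v^\prime$ could a priori inject a carefully tailored subspace that, after random-combination propagation through honest intermediates, mimics $A$'s observed downstream pattern. The rigorous resolution relies on the fact that honest intermediates faithfully report their incoming subspaces, so any upstream taint leaves an observed-corrupted trace on every edge of its propagation path; since the only observed corrupted edges strictly upstream of $A$'s children are the edges in $\In(A)$ that $A$ itself fabricated, no candidate can live more than one edge above $A$, and this combined with the adversary's freedom to choose arbitrary $S_{v^\prime}$ and arbitrary lies on $\In(v^\prime)$ is what makes the formal argument delicate.
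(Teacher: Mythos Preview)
Your argument is correct and reaches the same conclusion as the paper's, via essentially the same case analysis but organized from a dual vantage point. The paper argues from the adversary's side: it lets $t_A$ denote the number of outgoing edges $A$ actually corrupts and splits into the three cases $1<t_A<|\Out(A)|$ (uniquely identified, since no honest node has a strictly partial set of corrupted outputs), $t_A=|\Out(A)|$ (can frame exactly one parent by fabricating a single incoming report), and $t_A=1$ (can frame the single attacked child by reporting all incoming edges clean). You instead fix the observation and ask which alternative nodes $v'$ are consistent with it; your child case, parent case, and mutual-exclusivity claim correspond one-for-one to the paper's three cases, and your exclusion of unrelated or distance-$\ge 2$ nodes supplies the details the paper leaves implicit. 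One thing the paper adds that you do not: having imposed the ancestor partial order on edges, it identifies the ambiguity set constructively as the two endpoints of the maximal element of $E_C$, which yields an explicit localization rule rather than just the cardinality bound. You could append this observation, since in your parent case the maximal corrupted edge is $p\to A$ and in your child case it is $A\to c$, so the endpoints recover exactly your candidate pair.
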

\begin{proof}
The network is a directed acyclic graph, so we can impose a partial order on
the edges of the graph, such that $e_1>e_2$ if $e_1$ is an ancestor
edge of $e_2$ (\ie, there exists a path from $e_1$ to $e_2$).
Then having Information~1 or Information~2, we can divide the edges of 
the network into two sets: the set of edges $E_C$ through which are 
reported to flow corrupted subspaces, and the remaining edges $E_S$ 
through which the source information flows so we have $E=E_S\cup E_C$ 
and $E_S\cap E_C=\emptyset$. Note that all the outgoing 
edges from the source belong in $E_S$.

Nodes in the network perform randomized network coding so every
node that receives corrupted information on at least one of its incoming
edges makes all of the outgoing edges polluted w.h.p. Let $t_v$ be the
number of corrupted outgoing edges of a node $v$ where we have
$1\le t_v\le|\Out(v)|$. For each node $v$ that is not an adversary
we have either $t_v=0$ or $t_v=|\Out(v)|$.

Now, to prove the theorem we consider the following possible cases.
\begin{enumerate}
\item If the adversary Eve corrupts $t_A$ outgoing edges where $1<t_A<|\Out(A)|$
we can identify  the node she has attacked uniquely because its  behavior is different from
all other nodes.
\item If she  corrupts all of its outgoing edges, $t_A=|\Out(A)|$, 
then she can fraud us by declaring that one of the node's 
incoming edges is corrupted. If $A$ declares more than one of
the incoming edges as corrupted we can find its location uniquely.
\item She can also corrupt only one of its outgoing edges, $t_A=1$, and pretends
that its children is in fact the adversary by declaring all of 
its incoming edges bring non-corrupted information. She cannot
declare that any of its incoming edges are polluted since then 
we may find its location uniquely.
\end{enumerate}

In all of the above cases the adversary is on the boundary of two
sets $E_S$ and $E_C$ and the ambiguity about its location is at 
most withing a set of two vertices where this set contains those
two vertices that are connected by the corrupted edge with highest 
order among all corrupted edges (recall that we can compare all of the 
corrupted edges using the imposed partial order).
\end{proof}

\subsection{The Case of Multiple Adversaries}\label{sec_multiple}
In the case of a single adversary, it was sufficient to divide the set
of edges into two sets, $E_S$ and $E_C$, as described in the previous
section. In the presence of multiple adversaries, this may no longer
be sufficient. An additional dimension is that realistically, we may
not know the exact number of adversaries present. In the following, we
discuss a number of algorithms, that offer weaker or stronger
identifiability guarantees.

\subsubsection{Identification using only Topological Information}
The approach in \S\ref{subsec:OnlyTop} can be directly extended in the
case of multiple adversaries, but again, offers no identifiability
guarantees.
\begin{example}
Consider again the network in Figure~\ref{fig_ex1}, and assume 
that $R_1$ receives corrupted packets only from  edge $DR_1$ 
while $R_2$ receives corrupted packets only from edge  $DR_2$. 
Then \mbox{$\mathcal{E}_A=\{AD,CD,DR_1,DR_2\}$} and (depending 
on our assumptions) we may have,
\begin{itemize}
\item[-]a single adversary located on node $D$,
\item[-] two adversaries, located on nodes $A$ and $C$,
\item[-] two adversaries, located on nodes $A$ and $D$, or nodes $C$ and $D$, or
\item[-] three adversaries, located on nodes $A$, $C$, and $D$.
\end{itemize}
\hfill{$\blacksquare$}
\end{example}

\subsubsection{Identification using Splitting}
Similarly to \S\ref{subsec:InfoIntNodes}, using Information~1 or 
Information~2, we can divide the set of
edges into two sets $E_S$ and $E_C$, depending on whether the
information flowing through each edge belongs in $\Pi_S$ or
not. Depending on the network topology, we may be able to uniquely
identify the location of the attackers. However, this approach,
although it guarantees to find at least one of the attackers (within
an uncertainty of at most two nodes), does not necessarily find all
the attackers, even if we know their exact number.

To show this let us state the following definition.
\begin{definition}
We say that node $v$ is in the shadow of node $A$, if there 
exists a path that connects every incoming edge of $v$ to a 
corrupted outgoing edge of $A$. 
\end{definition}
Then we have the following result.
\begin{lemma}
By splitting the network edges into two sets $E_S$ and $E_C$
we cannot   identify  adversarial nodes that are in the
shadow of an adversary $A$.
\end{lemma}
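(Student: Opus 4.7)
The plan is to argue that the shadow condition makes the partition $(E_S, E_C)$ insensitive to whether $v$ is an additional adversary or not. The key observation is that the shadow definition guarantees every incoming edge of $v$ already carries corrupted information flowing out of $A$, and therefore $v$'s contribution to corruption is indistinguishable from what $A$ alone induces.

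More concretely, I would proceed in three steps. First, I would show that if $v$ lies in the shadow of $A$, then every incoming edge of $v$ belongs to $E_C$. This is because, by the definition of the shadow, each incoming edge $e \in \In(v)$ can be reached by a directed path starting at a corrupted outgoing edge of $A$; since each node along this path performs randomized network coding on its inputs, any node that receives a vector outside $\Pi_S$ on at least one input produces vectors outside $\Pi_S$ on all of its outputs w.h.p. (exactly as argued in the proof of Theorem~\ref{thm:FindAdvBySplittingSnglCase}). Thus every edge on this path, and in particular $e$, lies in $E_C$.

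Second, I would observe that this immediately implies every outgoing edge of $v$ is in $E_C$, regardless of whether $v$ itself is adversarial: even if $v$ performs only honest randomized combining, all of its outputs are linear combinations that include corrupted vectors and therefore lie outside $\Pi_S$ w.h.p. Consequently, for fixed behavior of $A$ (and of the rest of the network), the induced partition $(E_S,E_C)$ produced by the scenario ``only $A$ is adversarial'' coincides with the partition produced by the scenario ``$A$ and $v$ are both adversarial'' (assuming $v$ injects some arbitrary corrupted subspace into its outgoing edges).

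Third, I would conclude that, since the only information an algorithm based on splitting has access to is the partition $(E_S,E_C)$, and this partition is identical under the two scenarios, no such algorithm can decide whether $v$ is adversarial. The main subtlety—really the only thing to check carefully—is the w.h.p.\ claim that randomized linear combining of a corrupted and an uncorrupted subspace almost surely produces corrupted outputs; this follows directly from Lemma~\ref{lem:subset} applied to $\Pi_S$ and the outgoing subspace of $v$, so no new machinery is required.
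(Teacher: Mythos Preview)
Your proposal is correct and follows essentially the same approach as the paper. The paper's proof is a terse two-sentence version of precisely your argument: since a node in the shadow of $A$ has all its inputs already corrupted, any additional corruption it performs acts only on already-corrupted vectors and is therefore undetectable from the partition $(E_S,E_C)$; you have simply made explicit the indistinguishability of the two scenarios and the w.h.p.\ propagation step.
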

\begin{proof}
This is because if an attacker is in the shadow of
another attacker, it may corrupt only already corrupted
vectors and thus not incur a detectable effect.
So we cannot distinguish between an attacker
and a normal node that are in the shadow of $A$.
\end{proof}

The following example illustrates these points.
\begin{example}
For the example in Figure~\ref{fig_ex1}, assume that each attacker 
corrupts all its outgoing edges, and consider the following two situations:
\begin{enumerate}
\item Assume that nodes $A$ and $C$ are attackers. If $A$ reports
truthfully while $C$ lies we get $E_C=\{ AD,AR_1, DR_1, DR_2, BC,
CR_2,CD \}$, which allows to identify the attackers.
\item Assume that nodes $B$ and $D$ are attackers. Then we say 
that node $D$ is in the shadow of node $B$, as it corrupts only 
already packets corrupted by $B$. Indeed, if 
$E_C=\{SB, BA,BC, AD,AR_1, DR_1, DR_2, BC, CR_2,CD\}$, knowing 
that the source is trustworthy, we can infer that node $B$ is an 
attacker. However, any of the nodes $A$, $C$, and $D$ can equally 
probably be the second attacker. All these nodes are in the shadow 
of node $D$.
\end{enumerate}
\hfill{$\blacksquare$}
\end{example}

\begin{theorem}\label{thm:FindAdvBySplittingGenCase}
Using Information~1 it is possible to narrow down the location
of those adversaries that have the highest order in the network
using the splitting method. The same result holds for Information~2
w.h.p.
\end{theorem}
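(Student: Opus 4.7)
The plan is to reduce the multi-adversary case to repeated applications of the single-adversary argument of Theorem~\ref{thm:FindAdvBySplittingSnglCase}, applied locally at each \emph{maximal} adversary in the induced partial order on edges. First I would formalize ``highest-order adversary'' as follows: use the partial order on $E$ where $e_1 > e_2$ iff $e_1$ is an ancestor of $e_2$, and call an adversary $A$ of highest order if none of its incoming edges lies in $E_C$. Equivalently, no corrupted edge is an ancestor of any of $A$'s corrupted outgoing edges. Such a node exists whenever any adversary is active in the network, since the maximal element (with respect to the partial order) of the set of corrupted edges must originate at some highest-order adversary.

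Next, I would observe that from the viewpoint of the splitting procedure, the neighborhood of a highest-order adversary $A$ looks exactly like the single-adversary scenario of Theorem~\ref{thm:FindAdvBySplittingSnglCase}: all of $A$'s parents either are non-adversarial or are adversaries whose outgoing edge to $A$ is not corrupted (else they would be ancestor-adversaries of $A$, contradicting maximality). Hence the reports received by the controller about $A$'s incoming edges are truthful, while at least one of $A$'s outgoing edges lies in $E_C$. The three-case analysis from the proof of Theorem~\ref{thm:FindAdvBySplittingSnglCase} then carries over verbatim: (i) if $1 < t_A < |\Out(A)|$ the mixed behavior on outgoing edges pins $A$ down uniquely; (ii) if $t_A = |\Out(A)|$, $A$ can falsely flag one of its incoming edges as corrupted, giving ambiguity within the pair $\{A,\text{falsely-accused parent}\}$; (iii) if $t_A = 1$, $A$ can instead blame the corresponding child, again giving ambiguity within a set of two nodes. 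Repeating the argument at every maximal adversary establishes the claim for Information~1.

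For Information~2, the only change is that the classification of edges into $E_S$ and $E_C$ becomes a probabilistic test: by Lemma~\ref{lem:subset}, a single vector drawn uniformly at random from a subspace $\Pi^{(i)}_v$ lies in $\Pi_S$ iff $\Pi^{(i)}_v \subseteq \Pi_S$, with probability $1 - O(q^{-1})$. A union bound over the $\xi$ edges of the network shows that every edge is correctly classified with probability $1 - O(q^{-1})$, and the deterministic argument then yields the result w.h.p.

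The main obstacle I anticipate is making precise the local nature of the argument: one has to rule out the possibility that some non-maximal adversary elsewhere in the network corrupts the reports around $A$ in a way that disguises $A$. Because adversaries act independently and without coordination (by the setup in \S\ref{sec:Byz_Attack:ProbForm}), and because a highest-order adversary $A$ sits above a ``clean'' upstream region (its parents receive only uncorrupted source information), the reports concerning $A$'s incoming edges are untouched by downstream tampering, so the single-adversary bound of two transfers cleanly to each maximal $A$. Carefully tracking which reports can be trusted at each maximal adversary, and verifying that adversaries in the shadow of $A$ cannot spoof $A$'s local signature, is the most delicate part of the write-up.
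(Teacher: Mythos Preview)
Your overall strategy---localize to maximal positions in the partial order and reuse the single-adversary analysis---is the same as the paper's. But your formalization has a slip that would derail the write-up if left as is.

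Recall that under Information~1 each node $v$ reports the subspaces $\Pi_v^{(i)}$ it received from its parents. Thus for an edge $e=(u,v)$ the report on $e$ is made by $v=\head(e)$, not by $u$. In particular, the reports on $A$'s \emph{incoming} edges come from $A$ itself and are therefore untrustworthy; your sentence ``the reports received by the controller about $A$'s incoming edges are truthful'' is backwards. What is true (under a ground-truth notion of ``highest order'') is that the actual data on those edges lies in $\Pi_S$, but the controller never sees that directly---only $A$'s possibly falsified report of it. This also breaks your claimed equivalence between the two definitions of highest-order adversary: if $A$ falsely flags an incoming edge (your case~(ii)), that edge enters $E_C$ and $A$ ceases to be highest-order under the report-based definition, even though it remains maximal in the ground-truth sense your case split actually requires.

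The paper sidesteps this by working with maximal \emph{edges} of the reported set $E_C$ rather than trying to define maximal adversaries: for each such edge $e=(u,v)$, either $v$ is an adversary (lying about $e$), or $v$ is honest, in which case $u$ must be the adversary---an honest $u$ doing randomized coding with a corrupted input would have truthfully reported a corrupted incoming edge, contradicting maximality of $e$. This edge-centric argument needs no a~priori definition of ``highest-order adversary'' and avoids the circularity between the construction of $E_C$ and the adversary's reporting strategy. Your Information~2 paragraph via Lemma~\ref{lem:subset} and a union bound is fine and in fact more explicit than the paper's one-line remark.
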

\begin{proof}
As stated in the proof of Theorem~\ref{thm:FindAdvBySplittingSnglCase}
we can impose a partial order on the edges of the network graph. Then,
by using Information~1 or Information~2 we may split the network
edges into two sets $E_S$ and $E_C$.

Because every node in the network performs randomized network 
coding, there are only two possibilities for each adversary to 
corrupt its outgoing edges and report subspaces for its incoming 
edges such that it is not located uniquely. These are as follows.
\begin{enumerate}
\item She corrupts some (or all) of its outgoing edges but reports
its incoming edges as uncorrupted.
\item She corrupts all of its outgoing edges and reports some (at 
least one) of its incoming edges as corrupted.
\end{enumerate}

Now, let us consider the set of all the corrupted edges that
have highest order with respect to other corrupted edges and
cannot be compared against each other. For each of the above
cases there should be at least one adversary connected to 
every edge in this set.
\end{proof}

\subsubsection{Identification using Subset Relationships}
In this subsection we develop a new algorithm to find the
adversaries which is based on Information~1.

For each node $u\in V$, let $P(u)=\{u_1,\ldots,u_{p_u} \}$ denote 
the set of parent nodes of $u$. We are going to treat $P(u)$ as 
a super node, and use the notation 
$\Pi_{P(u)}=\sum_{i=1}^{p_u} \Pi_{u_i}$ for the union of the 
subspaces of all nodes in $P(u)$. Also recall that $\Pi_v^{(u)}$ 
denotes the subspace received by node $v$ from node $u$.

Our last algorithm checks, for every node $u\in V$, whether
\begin{equation}\label{eq:SubsetRelation}
\Pi_v^{(u)} \stackrel{?}{\subseteq} \Pi_{P(u)} \quad \forall v\in V: e_{uv} \in E.  
\end{equation}
Then we have the following result, Theorem~\ref{thm:ByzIdntfUsingSubsetRelation}.
\begin{theorem}\label{thm:ByzIdntfUsingSubsetRelation}
If the pairwise distance between adversaries is greater than 
two, it is possible to find the exact number as well as the 
location of the attackers (within an uncertainty of 
parent-children sets) using the subset method.
\end{theorem}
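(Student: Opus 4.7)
The plan is to analyze the outcome of the subset test~\eqref{eq:SubsetRelation} at every node $u$, partitioning $V$ into three classes: (i)~honest nodes all of whose parents and children are honest, (ii)~the adversaries themselves, and (iii)~honest nodes that are a parent or a child of some adversary. I will argue that the test always passes in class~(i), fails w.h.p.\ in class~(ii), and may fail in class~(iii) depending on the adversary's strategy. The hypothesis that any two adversaries are at distance strictly greater than two will then force the neighborhoods $N(A) \triangleq \{A\}\cup P(A)\cup C(A)$, where $C(A)$ denotes the children of $A$, to be pairwise disjoint for distinct adversaries, separating the set of failing tests into one cluster per adversary.

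For class~(i), because every parent and every child of $u$ is honest, all reports used in the test at $u$ are truthful: each $\Pi_{u_i}$ and each $\Pi_v^{(u)}$ is the actual subspace. Since $u$ transmits only random linear combinations of the packets it has received, every vector it sends on $e_{uv}$ lies in $\Pi_{P(u)}$, so $\Pi_v^{(u)}\subseteq \Pi_{P(u)}$ and the test passes. For class~(ii), the distance hypothesis guarantees that every parent and every child of an adversary $A$ is honest, so $\Pi_{P(A)}$ is the genuine span of $A$'s incoming subspaces and each $\Pi_v^{(A)}$ faithfully records what $A$ actually forwarded on edge $e_{Av}$. An effective attack must inject at least one vector outside $\Pi_{P(A)}$; applying Lemma~\ref{lem:subset} to the randomized combinations sent on some outgoing edge shows that $\Pi_v^{(A)} \nsubseteq \Pi_{P(A)}$ with probability $1-\bigo{q^{-1}}$, so the test at $A$ fails. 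For class~(iii), an adversary $A$ may cause a failure at an honest neighbor $u$ either by lying about its own incoming subspace (when $u$ is a parent of $A$) or by under-reporting $\Pi_A$ so that the reported $\Pi_{P(u)}$ fails to contain the vectors $u$ honestly forwarded (when $u$ is a child of $A$); this is the source of the unavoidable parent--children ambiguity.

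Finally, I would verify that the hypothesis $d(A_1,A_2)>2$ implies $N(A_1)\cap N(A_2)=\varnothing$ for every pair of distinct adversaries: a shared parent or a shared child would yield a path of length~$2$ between $A_1$ and $A_2$, while $A_1\in N(A_2)$ (or the reverse) would give a path of length at most~$1$. Combining this disjointness with (i)--(iii), every node failing the test lies in exactly one $N(A)$, and each $N(A)$ contains the adversary $A$ who is itself guaranteed to fail by (ii). Thus the failing nodes decompose into one cluster per adversary, recovering the exact number of attackers, and each $A$ is pinned down to the parent--children set of a failing node.

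The main obstacle I expect is class~(iii): one must rigorously verify that no adversary, by cleverly fabricated reports, can make the test fail at a node \emph{outside} its own $N(A)$, and that every failure inside $N(A)$ really is attributable to the single adversary $A$. Case~(ii) also requires the probabilistic argument, based on Lemma~\ref{lem:subset} and the randomized coding assumption, that an injected vector outside $\Pi_{P(A)}$ cannot with more than $\bigo{q^{-1}}$ probability happen to remain inside $\Pi_{P(A)}$ after every random combination along all outgoing edges simultaneously.
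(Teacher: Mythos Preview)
Your approach is essentially the same as the paper's: both analyze the test~\eqref{eq:SubsetRelation} locally, observe that it can only fail at the adversary itself or at an immediate parent or child of the adversary (depending on what the adversary chooses to report), and then invoke the distance hypothesis to separate the resulting ambiguity sets. Your three-class decomposition and the explicit disjointness argument for $N(A_1)\cap N(A_2)=\varnothing$ are somewhat cleaner than the paper's more discursive case analysis, but the content is the same.

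One small correction: your class~(ii) argument does not need Lemma~\ref{lem:subset} or any probabilistic reasoning. The adversary $A$ is not performing randomized coding on its outgoing edges; it is adversarial and chooses what to send. The point is simply that an \emph{effective} attack, by definition, sends at least one vector outside $\Pi_{P(A)}$ on at least one outgoing edge $e_{Av}$. Since $v$ is honest (by the distance hypothesis), it reports $\Pi_v^{(A)}$ truthfully, and since every parent of $A$ is honest, the computed $\Pi_{P(A)}$ is genuine. Hence $\Pi_v^{(A)}\nsubseteq\Pi_{P(A)}$ holds deterministically, not merely with high probability. Your remark that ``an injected vector outside $\Pi_{P(A)}$ cannot\ldots happen to remain inside $\Pi_{P(A)}$ after every random combination'' misplaces the randomness: the randomness in Lemma~\ref{lem:subset} is relevant for honest nodes downstream, not for the adversary's own transmissions.
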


\begin{proof}
First, let us focus on a single adversary case where $A\in V$ is the
node attacked by the adversary. Then we will 
generalize the idea for an arbitrary number of adversaries.

If \eqref{eq:SubsetRelation} is satisfied for all children of $u$, 
we know that node $u$ is 
not an adversary. If the relationship is not satisfied, that 
is \mbox{ $\Pi_v^{(u)} \nsubseteq \Pi_{P(u)}$} for at least 
one child of $u$, we consider node $u$ as a potential candidate
for being an adversary. For sure we know that
\[
\Pi_v^{(A)} \nsubseteq \Pi_{P(A)} \quad \forall v\in V: e_{Av} \in E,
\]
but depending on the subspace that the adversary reports, the 
relation \eqref{eq:SubsetRelation} may not be also satisfied 
for other nodes. Based on what the adversary reports there
would be two possible cases.

If the adversary pretends that it is a trustworthy node (just 
declares the received subspace from its parents) the above 
relation also fails for the children of $A$ who receive corrupted 
subspaces. On the other hand, if the adversary tells the truth and 
declares its corrupted subspace, we have
\[
\Pi_A^{(u)} \nsubseteq \Pi_{P(u)} \quad \forall u\in V: uA\in E.
\]
Thus the ambiguity set we have identified includes the adversary 
and its parents and/or its children depending on the adversary's 
report.

Repeating this procedure 
for every node in the network, we can identify sets of potential 
adversaries. 
We know that depending on the adversaries action there exists
ambiguity in finding their exact location. In fact in the worst
case, the uncertainty is within a set of nodes including the
adversary, its parents and its children. So if the distance between
adversaries is greater than two, the ``uncertainty'' sets do not
overlap. In this case we can easily distinguish between different
adversaries.
\end{proof} 

This procedure allows to identify adversaries (within the mentioned 
parent-children ambiguity set), even if one is in the shadow of another, and
even if we do not know their exact number, provided they are ``far
enough'' in the network to be distinguishable.




\section{Practical Implications for Topology Management}\label{chp:TopManagement}
In \S\ref{chp:TopInfer}, we demonstrated that using subspaces of all
nodes, we can infer the network topology under certain conditions. In
this section, we will show that even from what a
single node observes, it is possible to get some information regarding the
bottlenecks and clustering in the network. 

Leveraging this observation in the context of P2P networks, we
propose  algorithms that use this information in a distributed peer-initiated
manner to avoid bottlenecks and clustering.  

\subsection{Problem Statement and Motivation} \label{sec_mot}


In  peer-to-peer networks that employ network coding for content
distribution
(see for example Avalanche \cite{GkRo2005,GkMiRo2006}) 
we want to create and maintain a well-connected network topology, to allow the
information to flow fast between the nodes; 
however, this is not straightforward.
Peer-to-peer are very dynamically changing networks, where hundreds of
nodes may join and leave the network within seconds.  All nodes in this
network are connected to a small
number of neighbors (four to eight). An arriving node is allocated
neighbors among the active participating nodes\footnote{This is 
usually done by a central node which we call it (following Avalanche)
``registrat''. This is the central authority that keeps the list of all nodes
in the network and gives every new node a set of neighbors.},
which accept the solicited connection unless they have already reached
their maximum number of neighbors. As a result, nodes that arrive at
around the same time tend to get connected to each other, since they are
all simultaneously available and looking for neighbors. That is, we have
formation of clusters and bottlenecks in the network.

To avoid this problem, one method adopted in protocols is to
ask all nodes to periodically drop one neighbor and reconnect to a new one
among an active peers list.
This randomized rewiring  results in a fixed average
number of reconnections per node independently of how good or bad is
the formed network topology. Thus to achieve a good, on the average,
performance in terms of breaking clusters, it entails a much larger
number of rewiring  than required, and
unnecessary topology changes.

An alternative approach is to have peers initiate topology rewirings when
they detect they are in a cluster.
Clearly a central node 
could keep some structural information, \ie, keep track of the
current network topology, and use it to make more educated choices of
neighbor allocations.  However, the information this central node can
collect only reflects the \emph{overlay} network topology, and is
oblivious to bandwidth constraints from the underlying physical
links. Acquiring bandwidth information for the underlying physical
links at the central node requires costly estimation techniques over
large and heterogeneous networks, and steers towards a centralized
network operation. We will argue that such bottlenecks can be inferred
almost passively in a peer-initiated manner, thus alleviating these drawbacks.

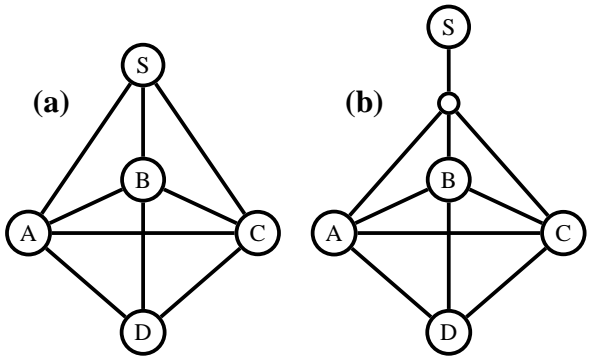
\begin{figure}[t]
\vspace{1cm}
\begin{center}
\psset{unit=0.04in}
\begin{pspicture}(-5,-5)(65,30)
\psset{linewidth=0.5mm}
\begin{small}
\rput(-2,25){\large \bf (a)}
\rput(10,30){\circlenode{S}{S}}
\rput(-5,8){\circlenode{A}{A}}
\rput(10,15){\circlenode{B}{B}}
\rput(25,8){\circlenode{C}{C}}
\rput(10,-5){\circlenode{D}{D}}
\ncline[linewidth=0.5mm,linecolor=black]{-}{S}{A}
\ncline[linewidth=0.5mm,linecolor=black]{-}{S}{B}
\ncline[linewidth=0.5mm,linecolor=black]{-}{S}{C}
\ncline[linewidth=0.5mm,linecolor=black]{-}{A}{B}
\ncline[linewidth=0.5mm,linecolor=black]{-}{A}{C}
\ncline[linewidth=0.5mm,linecolor=black]{-}{A}{D}
\ncline[linewidth=0.5mm,linecolor=black]{-}{B}{C}
\ncline[linewidth=0.5mm,linecolor=black]{-}{B}{D}
\ncline[linewidth=0.5mm,linecolor=black]{-}{C}{D}
%
\rput(39,25){\large \bf (b)}
\rput(50,35){\circlenode{S1}{S}}
\rput(50,25){\circlenode{S}{}}
\rput(35,8){\circlenode{A}{A}}
\rput(50,15){\circlenode{B}{B}}
\rput(65,8){\circlenode{C}{C}}
\rput(50,-5){\circlenode{D}{D}}
\ncline[linewidth=0.5mm,linecolor=black]{-}{S1}{S}
\ncline[linewidth=0.5mm,linecolor=black]{-}{S}{A}
\ncline[linewidth=0.5mm,linecolor=black]{-}{S}{B}
\ncline[linewidth=0.5mm,linecolor=black]{-}{S}{C}
\ncline[linewidth=0.5mm,linecolor=black]{-}{A}{B}
\ncline[linewidth=0.5mm,linecolor=black]{-}{A}{C}
\ncline[linewidth=0.5mm,linecolor=black]{-}{A}{D}
\ncline[linewidth=0.5mm,linecolor=black]{-}{B}{C}
\ncline[linewidth=0.5mm,linecolor=black]{-}{B}{D}
\ncline[linewidth=0.5mm,linecolor=black]{-}{C}{D}
\end{small}
\end{pspicture}
\end{center} 
\caption{The source $S$ distributes packets to the peers $A$, $B$, $C$
  and $D$ over the overlay network (a), that uses the underlying
  physical network (b).}
\label{fig_logical}
\vspace{-1.5em}
\end{figure}

Here, we will show that the coding vectors the
peers receive from their neighbors can be used to passively infer
bottleneck information. This allows individual nodes to initiate
topology changes to correct problematic connections. In particular,
peers by keeping track of the coding vectors they receive can detect
problems in both the overlay topology and the underlying physical
links. The following example illustrates these points.

\begin{example}
Consider the toy network depicted in Figure~\ref{fig_logical}(a) where
the edges correspond to logical (overlay network) links.  The source
$S$ has $n$ packets to distribute to four peers.  Nodes $A$, $B$ and
$C$ are directly connected to the source $S$, and also among
themselves with logical links, while node $D$ is connected to nodes
$A$, $B$ and $C$.  In this overlay network, there exist three
edge-disjoint paths between source and any other nodes.

Assume now (as shown in Figure~\ref{fig_logical}(b)) that the logical
links $SA$, $SB$, $SC$ share the bandwidth of the same underlying
physical link, which forms a bottleneck between the source and the
remaining nodes of the network.  As a result, assume the bandwidth on
each of these links is only $1/3$ of the bandwidth of the remaining
links. A central node (registrat), even 
if it keeps track of the complete logical
network structure by querying each node asking about its neighbors, is oblivious 
to the existence of the bottleneck and the asymmetry between the link bandwidths.


Node $D$ however, can infer this information by observing the coding
vectors it receives from its neighbors $A$, $B$ and $C$. Indeed, when
node $A$ receives a coded packet from the source, it will forward a
linear combination of the packets it has already collected to nodes
$B$ and $C$ and $D$. Now each of the nodes $B$ and $C$, once they
receive the packet from node $A$, they also attempt to send a coded
packet to node $D$. But these packets will not bring new information
to node $D$, because they will belong in the linear span of coding
vectors that node $D$ has already received. Similarly, when nodes $B$
and $C$ receive a new packet from the source, node $D$ will end up
being offered three coded packets, one from each of its neighbors, and
only one of the three will bring to node $D$ new information.
\hfill $\blacksquare$
\end{example}

More formally, the coding vectors nodes $A$, $B$ and $C$ will collect
will effectively span the same subspace; thus the coded packets they
will offer to node $D$ to download will belong in significantly
overlapping subspaces and will thus be redundant (we formalize these
intuitive arguments in
\S\ref{sec:TopManagement:TheoryFramework}). Node $D$ can infer from
this passively collected information that there is a bottleneck
between nodes $A$, $B$, $C$ and the source, and can thus initiate a
connection change.

\subsection{Theoretical Framework}\label{sec:TopManagement:TheoryFramework}
Here we use the same notations introduced in
\S\ref{chp:ProbDefMod}. For simplicity we will assume that the network
is synchronous\footnote{This is not essential for
the algorithms but simplifies the theoretical analysis.}. Nodes are
allowed to transmit linear combinations of their received packets only
at clock ticks, at a rate equal to the adjacent link bandwidth. 


Now we use the framework of \S\ref{chp:NotesVecSpace} to investigate the
information that we can obtain from the local information of a node's
subspace. From notations defined in \S\ref{chp:ProbDefMod}, we know that
for an arbitrary node $v$ we can write
\[
\Pi_v(t)=\sum_{i\in P(v)}\Pi^{(i)}_{v}(t).
\]
We are interested in understanding what information we can infer from
these received subspaces
$\Pi^{(i)}_{v}$, $i\in P(v)$, about bottlenecks in the 
network. For example, the overlap
of subspaces from the neighbors reveals some information about 
bottlenecks. Therefore, we need to show that such overlaps occur due to
topological properties and not due to particular random linear
combinations chosen by the network code. 

Let us assume that the subspaces
$\Pi^{(i)}_{v}$ a node $v$ receives from
its set of parents $P(v)$ have an intersection of dimension
$d$. Then we have the following observations.
\begin{observation}
The subspaces
$\Pi^{(i)}_{v}$, $i\in P(v)$, of the neighbors have an
intersection of size at least $d$ (see Corollary~\ref{cor:SpaceJointDim}).
\end{observation}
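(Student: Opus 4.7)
The plan is to recognize that the observation is a direct structural consequence of the hypothesis, formalized by Corollary~\ref{cor:SpaceJointDim}. By hypothesis, node $v$ has measured $\dim\bigl(\bigcap_{i\in P(v)}\Pi^{(i)}_v\bigr)=d$, so the conclusion $\dim\bigl(\bigcap_{i\in P(v)}\Pi^{(i)}_v\bigr)\ge d$ for the same intersection of received subspaces is immediate from ``equality implies lower bound.'' What the proof proposal needs to make explicit is \emph{why} Corollary~\ref{cor:SpaceJointDim} is the right citation, and why the ``$\ge d$'' (rather than ``$=d$'') phrasing is natural in the network-coded setting.

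First, I would recall how each $\Pi^{(i)}_v$ is built: at every time slot during which parent $u_i$ is transmitting, node $v$ accumulates a subspace $\pi^{(i)}_v(t)$ spanned by random linear combinations drawn uniformly from $\Pi_{u_i}(t-1)$, and $\Pi^{(i)}_v(t)=\Pi^{(i)}_v(t-1)+\pi^{(i)}_v(t)$. Corollary~\ref{cor:SpaceJointDim} gives, for any pair $i\ne j$, an exact w.h.p.\ formula for $\dim(\Pi^{(i)}_v\cap\Pi^{(j)}_v)$ in terms of the sampling parameters; iterating the corollary over the $|P(v)|$ incoming subspaces (or, equivalently, applying it to the growing pair $(\Pi^{(i_1)}_v,\,\Pi^{(i_2)}_v+\cdots+\Pi^{(i_k)}_v)$) yields that $\dim\bigl(\bigcap_{i\in P(v)}\Pi^{(i)}_v\bigr)$ concentrates on the value predicted by the sampling combinatorics. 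So the measured $d$ is a \emph{structural} quantity, reproducible from the topology and rate allocation, not a coincidence of one realization of the random code.

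With this in place, the ``$\ge d$'' wording is explained by the time-monotonicity of cumulative received subspaces: since $\Pi^{(i)}_v(t)\subseteq\Pi^{(i)}_v(t')$ whenever $t\le t'$, we have
\[
\bigcap_{i\in P(v)}\Pi^{(i)}_v(t)\;\subseteq\;\bigcap_{i\in P(v)}\Pi^{(i)}_v(t'),
\]
so the intersection dimension is nondecreasing in time. Hence once $v$ records value $d$ at some time $t$, the same intersection has dimension at least $d$ at every subsequent snapshot, which is precisely the observation.

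The only thing to be careful about is to stay on the receiver-side subspaces $\Pi^{(i)}_v$ throughout and not to slip into a statement about the parents' full subspaces $\Pi_{u_i}$: the inclusion $\Pi^{(i)}_v\subseteq\Pi_{u_i}$ would let us deduce a lower bound on $\dim\bigl(\bigcap_i\Pi_{u_i}\bigr)$, but that is a different (and for the present observation irrelevant) assertion. I expect no real obstacle, since the claim reduces to a monotonicity remark once the role of Corollary~\ref{cor:SpaceJointDim} is clarified.
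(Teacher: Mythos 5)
There is a genuine gap here, and it is one you flagged yourself and then resolved in the wrong direction. The content of this observation is \emph{not} the tautology that a quantity equal to $d$ is at least $d$; it is the inference, from the locally measured overlap of the received subspaces $\Pi^{(i)}_v$, that the \emph{neighbors' own} collected subspaces $\Pi_{u_i}$, $u_i\in P(v)$, must overlap in dimension at least $d$. That is what the phrase ``of the neighbors'' means, it is the only reading under which the observation says anything (under the trivial reading the hypothesis already gives equality, and citing Corollary~\ref{cor:SpaceJointDim} would be pointless), and it is the reading demanded by the surrounding section: the whole point of \S\ref{sec:TopManagement:TheoryFramework} is that node $v$ can infer an \emph{upstream} bottleneck --- a property of $P(v)$ and its connection to the source --- from purely local data, exactly as Observation~2 infers an upstream min-cut. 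Your closing paragraph explicitly identifies the inclusion $\Pi^{(i)}_v\subseteq\Pi_{u_i}$ and the resulting bound on $\dim\bigl(\bigcap_i\Pi_{u_i}\bigr)$ as ``a different (and for the present observation irrelevant) assertion''; that assertion is precisely the observation. What you prove instead (equality implies the lower bound, plus monotonicity of $\bigcap_i\Pi^{(i)}_v(t)$ in $t$) is correct but vacuous and does not establish the claim.

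The intended argument is short. Deterministically, $\pi^{(i)}_v(t)$ is drawn from $\Pi_{u_i}(t-1)$, so $\Pi^{(i)}_v(t)\subseteq\Pi_{u_i}(t-1)\subseteq\Pi_{u_i}(t)$, hence $\bigcap_{i\in P(v)}\Pi^{(i)}_v(t)\subseteq\bigcap_{i\in P(v)}\Pi_{u_i}(t)$ and $\dim\bigl(\bigcap_i\Pi_{u_i}(t)\bigr)\ge d$. The role of Corollary~\ref{cor:SpaceJointDim} is to certify that this lower bound is informative rather than loose: since $\hat\Pi_1\cap\hat\Pi_2$ w.h.p.\ attains the dimension $\min[d_{12},\dots]$ dictated by the sampling rates and the parents' true overlap $d_{12}$, a large measured $d$ is a structural signature of a large $d_{12}$ (i.e., of a bottleneck above $P(v)$) and not an artifact of the particular random coding coefficients. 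Your second paragraph actually contains most of this machinery; you only need to redirect it at the neighbors' subspaces rather than stopping at the receiver side.
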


\begin{observation}
The min-cut between the set of nodes
$P(v)$ and the source is smaller than the min-cut between the node
$v$ and set $P(v)$ (see Theorem~\ref{thm:RandNetCodRate}).
\end{observation}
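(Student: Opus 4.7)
The plan is to translate the hypothesis on the intersection dimension $d$ into a relation between mincuts, using Theorem~\ref{thm:RandNetCodRate} to compute the growth rates of the relevant subspaces. Denote $c_p = \mathrm{mincut}(S,P(v))$ (treating $P(v)$ as a super-sink), $\hat c_v = \mathrm{mincut}(P(v),v)$, and $c_v = \mathrm{mincut}(S,v)$. A standard max-flow argument gives $c_v = \min(c_p,\hat c_v)$, since every $S$--$v$ flow must traverse the cut between $P(v)$ and $v$ and the edges into $v$ from $P(v)$.

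By Theorem~\ref{thm:RandNetCodRate}, in steady state $\dim\Pi_v(t) = c_v t + O(1)$ w.h.p. For each parent $u_i$, the subspace $\Pi^{(u_i)}_v(t)$ is built from $r_{u_iv}$ random vectors drawn per timeslot from $\Pi_{u_i}(t-1)$; applying Lemma~\ref{lem:rand_choose_full_rank} timeslot by timeslot, together with the flow conservation bound $r_{u_iv}\le c_{u_i}$ from (\ref{eq_rate}), gives $\dim\Pi^{(u_i)}_v(t) = r_{u_iv}t + O(1)$ w.h.p. Summing over parents and using $\sum_i r_{u_iv} \le \hat c_v$ (the total flow into $v$ cannot exceed its mincut from $P(v)$) yields
\[
\sum_{i\in P(v)} \dim \Pi^{(u_i)}_v(t) \;\le\; \hat c_v\, t + O(1).
\]

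The key step is then the subspace dimension formula. In the two-parent case, $\dim(\Pi^{(u_1)}_v\cap\Pi^{(u_2)}_v) = \dim\Pi^{(u_1)}_v + \dim\Pi^{(u_2)}_v - \dim\Pi_v$, so the growth-rate computation forces $d \le (\hat c_v - c_v)\,t + O(1)$. For $d$ to be positive (growing linearly in $t$) we must have $\hat c_v > c_v$, which via $c_v = \min(c_p,\hat c_v)$ forces $c_v = c_p$ and hence $c_p < \hat c_v$, exactly the claim $\mathrm{mincut}(S,P(v)) < \mathrm{mincut}(P(v),v)$. The main obstacle I anticipate is the multi-parent case ($|P(v)|\ge 3$), where the identity above is exact only for pairwise intersections, and bounding the full intersection $\bigcap_i \Pi^{(u_i)}_v$ requires more care. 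I would handle this either by reducing to pairs (a nonzero full intersection forces nonzero pairwise intersection, and the two-parent argument then applies) or by invoking Theorem~\ref{thm:SubSpaceGenPos} to conclude that the per-parent subspaces lie in general position inside $\Pi_{P(v)}(t-1)$, so the full intersection attains its minimum possible dimension, which is positive exactly when $\sum_i r_{u_iv} > c_p$.
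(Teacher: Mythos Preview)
The paper does not supply a proof for this observation; it simply points to Theorem~\ref{thm:RandNetCodRate} and leaves the reader to connect a nontrivial intersection of the per-parent subspaces with a bottleneck upstream of $P(v)$. Your proposal is therefore not a different route but a fleshing-out of what the paper leaves implicit, and the core of your two-parent argument is sound: by Theorem~\ref{thm:RandNetCodRate} the joint span $\Pi_v(t)$ grows at rate $c_v$, each $\Pi^{(u_i)}_v(t)$ grows at rate $r_{u_iv}$, and inclusion--exclusion forces the intersection to grow at rate $\hat c_v - c_v$; positivity then gives $c_p<\hat c_v$ via $c_v=\min(c_p,\hat c_v)$. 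That last identity is precisely what the paper itself asserts in the footnote to Theorem~\ref{thm:SuffCond}, and it relies on the flow-conservation assumption~(\ref{eq_rate}).

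Two small points. First, for immediate parents the only $P(v)$--$v$ cut is the set of in-edges to $v$, so $\sum_i r_{u_iv}=\hat c_v$ with equality, not merely $\le$; you need the equality to pass from $d>0$ to $\hat c_v>c_v$. Second, in the multi-parent reduction to a pair $u_1,u_2$, the sum $\Pi^{(u_1)}_v+\Pi^{(u_2)}_v$ is no longer all of $\Pi_v$, so its growth rate is governed by $\mathrm{mincut}(S,\{u_1,u_2\})$ rather than $c_v$; the pairwise argument then yields $\mathrm{mincut}(S,\{u_1,u_2\})<r_{u_1v}+r_{u_2v}$, which still exhibits a bottleneck upstream of $P(v)$ and is enough for the observation. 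The general-position route via Theorem~\ref{thm:SubSpaceGenPos} is less clean here, since the $\Pi^{(u_i)}_v$ are not independently sampled from a common ambient space.
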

In the following, we will discuss algorithms that use such
observations for topology management.

\subsection{Algorithms} \label{sec_algo}
Our peer-initiated algorithms for topology management consist of three tasks:
\begin{enumerate}
\item Each peer decides whether it is satisfied with its connection or
  not, using a \emph{decision criterion}.
\item An unsatisfied peer sends a \emph{rewiring request}, that can
  contain different levels of information, either directly to the
  registrat, or to its neighbors (these are the only nodes the peer
  can communicate with).
\item Finally, the registrat, having received rewiring requests,
  \emph{allocates neighbors} to nodes to be reconnected.
\end{enumerate}

The decision criterion can capitalize on the fact that overlapping
received subspaces indicate an opportunity for improvement. 
For example, in the first algorithm we propose (Algorithm~1),
a node can decide it is not satisfied with a particular
neighbor, if it receives $k>0$, non-innovative coding vectors from it,
where $k$ is a parameter to be decided. Then it has each
unsatisfied node directly contact the registrat and specify the
neighbor it would like to change. The registrat randomly selects a new
neighbor. This algorithm, as we demonstrate through simulation
results, may lead to more rewirings than necessary: indeed, all nodes
inside a cluster may attempt to change their neighbors, while it would
have been sufficient for a fraction of them to do so.

Our second algorithm (Algorithm 2) uses a different decision criterion: for 
every two neighbors $u$ and $v$, each peer computes the rate at which 
the received joint space $\hat{\Pi}_u + \hat{\Pi}_v$ and intersection 
space $\hat{\Pi}_u\cap\hat{\Pi}_v$ increases. If the ratio between these 
two rates becomes greater than a threshold $\mathcal{T}$, the node decides 
it would like to change one of the two neighbors. However, instead of directly 
contacting the registrat, it uses a decentralized voting method that attempts 
to further reduce the number of reconnections. Then the registrat randomly 
selects and allocates one new neighbor for the nodes have sent rewiring request.

Our last proposed algorithm (Algorithm 3), while still peer-initiated
and decentralized, relies more than the two previous ones in the
computational capabilities of the registrat. The basic observation is
that, nodes in the same cluster will not only receive overlapping
subspaces from their parents, but moreover, they will end up
collecting subspaces with very small distance (this follows from
Theorem~\ref{thm:RandNetCodRate} and Corollary~\ref{cor:SpaceJointDim}
and is also illustrated through simulation results in
\S\ref{sec_simul}; see Figure~\ref{fig_test1}). Each unsatisfied peer $v$ sends a rewiring request
to the registrat, indicating to the registrat the subspace $\Pi_v$ it
has collected. A peer can decide it is not satisfied using for example
the same criterion as in Algorithm~2.

The registrat waits for a short time period, to collect requests from
a number of dissatisfied nodes. These are the nodes of the network
that have detected they are inside clusters. It then calculates the
distance between the identified subspaces to decide which peers belong
in the same cluster. While exact such calculations can be
computationally demanding, in practice, the registrat can use one of
the many hashing algorithms to efficiently do so. Finally the
registrat breaks the clusters by rewiring a small number of nodes in
each cluster. The allocated new neighbors are either nodes that belong
in different clusters, or, nodes that have not send a rewiring request
at all.

We will compare our proposed algorithms against the \emph{Random Rewiring}
currently employed by many peer-to-peer 
protocols (\eg, see \cite{GkRo2005,GkMiRo2006,HaLeBa-INRIA}). 
In this algorithm, each time a peer
receives a packet, with probability $p$ contacts the registrat and
asks to change a neighbor. The registrat randomly selects which
neighbor to change, and randomly allocates a new neighbor from the
active peer nodes.

\begin{figure}[htb]
\begin{center}
\includegraphics[width=2.3in,height=1.5in]{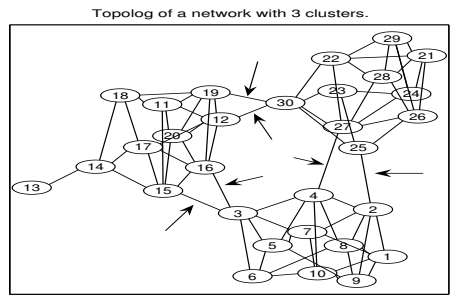}
\end{center}
\vspace{-1em}
\caption{A sample of topology with three clusters: cluster~$1$ contains nodes
  $1$--$10$, cluster~$2$ nodes $11$--$20$ and cluster~$3$ nodes
  $21$--$30$.}
\label{fig_topology}
\vspace{-1.5em}
\end{figure}

\begin{figure*}
\begin{center}
\includegraphics[width=3in,height=1.8in]{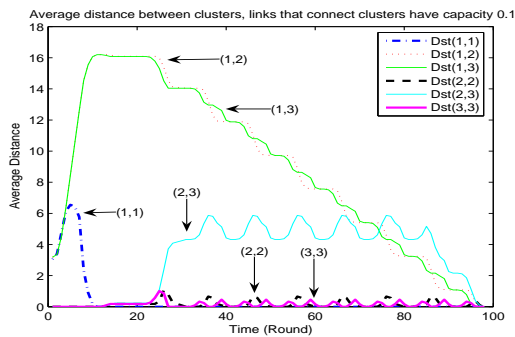}
\hspace{0.1\textwidth}
\includegraphics[width=3in,height=1.8in]{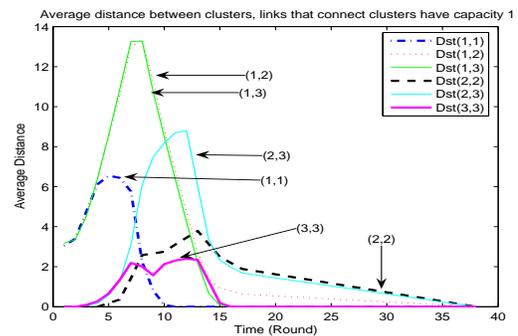}
\end{center}
\vspace{-1.5em}
\caption{Simulation results for the topology in Figure~\ref{fig_topology}, 
with bottleneck link capacity values equal to $0.1$ (left) and $1$ (right).}
\label{fig_test1}
\vspace{-.8em}
\end{figure*}

\subsection{Simulation Results} \label{sec_simul} For our simulation
results we will start from randomly generated topologies similar to 
Figure~\ref{fig_topology}, that consists of $30$ nodes connected into
three distinct clusters. The source is node~$1$, and belongs in the
first cluster. The bottleneck links are indicated with arrows (and
thus indicate the underlying physical link structure). Our first set
of simulation results depicted in Figure~\ref{fig_test1} show that the
subspaces within each cluster are very similar, while the subspaces
across clusters are significantly different, where we use the distance
measure $D_S(\cdot,\cdot)$ defined in (\ref{distance}). 
These results
indicate for example that knowledge of these subspaces will allow the
registrat to accurately detect and break clusters (Algorithm~3).

Our second set of simulation results considers again topologies with
three clusters: cluster~$1$ has $15$ nodes and contains the source,
cluster~$2$ has also $15$ nodes, while the number of nodes in
cluster~$3$ increases from $15$ to $250$.  During the simulations we
assume that the registrat keeps the nodes' degree between $2$ and $5$,
with an average degree of $3.5$. All edges correspond to unit capacity
links. 

We compare the performance of the three proposed algorithms in
\S\ref{sec_algo} with random rewiring. 
We implemented these algorithms as follows. For random
rewiring, every time a node receives a packet it changes one of its
neighbors with probability $p=\frac{8}{500}$. For Algorithm~1, we use
a parameter of $k=10$, and check whether the non-innovative packets
received exceed this value every four received packets. For
Algorithm~2, every node checks each received subspaces every four
received packets using the threshold value $\mathcal{T}=1$. 
Finally for Algorithm $3$, we assume that nodes use the same criterion
as in Algorithm 2 to decide whether they form part of a cluster, again
with $\mathcal{T}=1$. Dissatisfied nodes send their observed subspaces
to the registrat. The registrat assigns nodes $u$ and $v$ in the same
cluster if $d_S(\Pi_u,\Pi_v)\leq 7$.

Table~\ref{table_1} compares all algorithms with respect to the
average collection time, defined as the difference between the time a
peer receives the first packet and the time it can decode all packets,
and averaged over all peers. All algorithms perform similarly,
indicating that all algorithms result in breaking the clusters. It is
important to note that the average collection time is in terms of
number of exchanges needed and {\em does not} account for the delays
incurred due to rewiring. We compare the number of such rewirings
needed next.

Figure~\ref{fig_all} plots the average number of rewirings each
algorithm employs. Random rewiring incurs a number of rewirings
proportional to the number of P2P nodes, and independently from the
underlying network topology.
Our proposed algorithms on the other hand, adapt to the existence and
size of clusters. Algorithm~$3$ leads to the smallest number of
rewirings. Algorithm~$2$ leads to a larger number of rewirings, partly
due to that the new neighbors are chosen randomly and not in a manner
that necessarily breaks the clusters. The behavior of algorithm~$1$ is
interesting. This algorithm rewires any node that has received more
than $k$ non-innovative packets. Consider cluster $3$, whose size we
increase for the simulations. If $k$ is small with respect to the
cluster size, then a large number of nodes will collect close to $k$
non-innovative packets; thus a large number of nodes will ask for
rewirings. Moreover, even after rewirings that break the cluster
occur, some nodes will still collect linearly dependent information
and ask for additional rewirings. As cluster~$3$ increases in size,
the information disseminates more slowly within the cluster. Nodes in
the border, close to the bottleneck links, will now be the ones to
first ask for rewirings, long before other nodes in the network
collect a large number of non-innovative packets. Thus once the
clusters are broken, no new rewirings will be requested. This
desirable behavior of Algorithm~$1$ manifests itself for large
clusters; for small clusters, such as cluster~$2$, the second
algorithm for example achieves a better performance using less
reconnections.

\begin{figure}[htb]
\begin{center}
\includegraphics[width=3.3in]{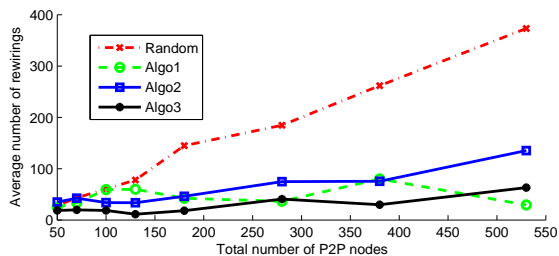}
\vspace{-0.5em}
\caption{Average number of rewirings, for a topology with three
  clusters: cluster~$1$ has $15$ nodes, cluster~$2$ has $15$ nodes,
  while the number of nodes in cluster~$3$ increases from $20$ to
  $250$ as described in Table~\ref{table_1}.}
\label{fig_all}
\end{center}
\vspace{-1.5em}
\end{figure}

\begin{table}[tbh!] 
\begin{center}
\caption{\label{table_1} Average Collection Time}  
\begin{tabular}{|c|c|c|c|c|}\hline
Topology & {Random} & {Algo 1} & {Algo 2} &{Algo 3}\\
\hline
$15$--$15$--$20$ & $20.98$ &  $22.14$  & $20.57$ &  $20.39$ \\
$15$--$15$--$40$ & $18.72$ &  $21.13$  & $19.36$ &  $19.47$ \\
$15$--$15$--$70$ & $18.88$ &  $21.54$  & $18.97$ &   $19.54$ \\
$15$--$15$--$100$ & $18.6$ &  $21.48$  & $18.91$ &  $21.42$ \\
$15$--$15$--$150$ & $19.56$ &  $20.85$  & $19.96$ &  $20.18$ \\
$15$--$15$--$250$ & $18.79$ &  $19.8$  & $19.18$ &  $18.99$ \\\hline
\end{tabular}
\end{center}
\vspace{-1.8em}
\end{table}

\section{Conclusions and Discussion}\label{chp:conclusion}

In this work we explored the properties of subspaces each node
collects in networks that employ randomized network coding and found
that there exists an intricate relationship between the structure of
the network and these properties. This observation led us to utilize
these relationships in several different applications.
As the first application, we studied the conditions under which we can  
passively infer the network topology during content distribution. We showed
that these conditions are not very restrictive and hold for a general
class of information dissemination protocols. 
As our second application, we focused on locating  Byzantine
attackers in the network. We studied and formulated this problem and found
that for the single adversary we can identify the adversary within an
uncertainty of two nodes. For the case of multiple adversaries, we
discussed a number of algorithms and conditions under which we can
guarantee identifiability. 
For our last application, we investigated the relation between the
bottlenecks in a logical network and the subspaces received at a specific network
node. We leveraged our observations to propose decentralized peer-initiated
algorithms for rewiring in P2P systems to avoid clustering in a
cost-efficient manner, and evaluated our algorithms through simulations results.

The applications studied in this paper demonstrate advantages of using
randomized network coding for network management and control, 
that are additional to throughput benefits. 
These are just a few examples and we
believe that there exist a lot more applications where we can use the
subspace properties developed in this work. We hope that these
properties will become part of a toolbox that can be used to develop
applications for systems that employ network coding techniques.


\appendices

\section{Proofs}\label{apndx:Proofs}

\begin{proof}[Proof of Lemma~\ref{lem:rand_choose_full_rank}]
First, let us fix a basis for $\Pi_S$. Then choosing $m$ vector 
uniformly at random from $\Pi_S$ is equivalent to choose an $m\times n$
matrix $\Bs{A}$ uniformly at random from $\Fbb_q$ and construct $\Pi=\sspan{\Bs{A}}$
with respect to this fixed basis.

It is well known (\eg, see \cite{LaTh-MathScand94}) that the number of different $m\times n$ 
matrices $\Bs{A}$ with rank $0\le k\le\min[m,n]$ over $\Fbb_q$ is equal to
\[
N_{m,n}(k)\triangleq q^{(m+n-k)k}\prod_{i=0}^{k-1} \frac{(1-q^{i-n})(1-q^{i-m})}{(1-q^{i-k})}.
\]
So we can write
\begin{align*}
\Prob{\dim(\Pi)=k} &= \frac{N_{m,n}(k)}{q^{mn}}.
\end{align*}
Then using the Taylor series 
$\frac{1}{1-\epsilon}=1+\epsilon+\epsilon^2+\cdots$ for $|\epsilon|<1$,
choosing $\epsilon=q^{-1}$, we can write
\begin{align*}
\Pr[\dim(\Pi)=k] = q^{-(m-k)(n-k)} [1-O(q^{-1})].
\end{align*}
By setting $k=\min[m,n]$ we are done.
\end{proof}

\begin{proof}[Proof of Lemma~\ref{lem:subset}]
The probability that all $m$ vectors are in the intersection is
\begin{eqnarray*}
\Prob{\Pi_1' \subset \Pi_2} &=& \left( \frac{q^{d_{12}}}{q^{d_1}} \right)^m = q^{(d_{12}-d_1)m},
\end{eqnarray*}
which is of order $\bigo{q^{-m}}$ provided that
$\Pi_1\nsubseteq\Pi_2$ , \ie, $d_{12}<d_1$.
\end{proof}

\begin{proof}[Proof of Lemma~\ref{lem:JointSpacePi_k}]
Let $\Bs{v}_1,\ldots,\Bs{v}_m$ be the vectors chosen randomly from $\Pi_S$
to construct $\Pi$, namely, we have $\Pi=\sspan{\Bs{v}_1,\dots,\Bs{v}_m}$.
Then construct the sequence of subspaces $\Pi(i)$, $i=0,\ldots,m$, as follows.
First, set $\Pi(0)\triangleq \Pi_k$ and then define $\Pi(i)$ for $i\neq 0$ 
recursively, $\Pi(i)=\Pi(i-1)+\sspan{\Bs{v}_i}$. We also define 
$d(i)\triangleq\dim(\Pi(i))$, $i=0,\ldots,m$. From Lemma~\ref{lem:subset}, by
choosing $\Pi_1=\Pi_S$, $\Pi_2=\Pi(i-1)$ and $m=1$ we deduce that 
$d(i)=d(i-1)+1$ with probability $1-\bigo{q^{-1}}$, unless $d(i-1)=n$.

Now we consider two cases. First, if $m+k\le n$ then we have $\dim(\Pi+\Pi_k)=k+m$ or
equivalently $\dim(\Pi\cap\Pi_k)=0$ with high probability, \ie, $1-\bigo{q^{-1}}$. 
Secondly, when $m+k>n$ we have $\dim(\Pi+\Pi_k)=n$ with probability $1-\bigo{q^{-1}}$.
From Lemma~\ref{lem:rand_choose_full_rank} we have $\dim(\Pi)=\min[m,n]$ w.h.p. 
So we have $\dim(\Pi\cap\Pi_k)=\dim(\Pi_k)+\dim(\Pi)-\dim(\Pi_k\cup\Pi)=k+\min[m,n]-n$. 

Combining these two cases we can write
\[
\dim(\Pi\cap\Pi_k)=(k+\min[m,n]-n)^+,
\]
w.h.p., which completes the proof.
\end{proof}

\begin{proof}[Proof of Corollary~\ref{cor:SpaceJointDim}]
Let us define $\Pi_{12}=\Pi_1\cap\Pi_2$, where
$d_{12}=\dim(\Pi_{12})$. Using Lemma~\ref{lem:JointSpacePi_k}, and
taking $\Pi_S=\Pi_1$ and $\Pi_k=\Pi_{12}$, we have
\begin{equation*}
\dim(\hat{\Pi}_1\cap\Pi_{12})=\min\left[d_{12},(m_1-(d_1-d_{12}))^+\right],
\end{equation*}
with probability $1-\bigo{q^{-1}}$.
Now, we can write
\begin{gather*}
\Prob{\hat{d}_{12}=\alpha} =\\ 
\Prob{\hat{d}_{12}=\alpha|\dim(\hat{\Pi}_1\cap\Pi_{12})=\beta} \Prob{\dim(\hat{\Pi}_1\cap\Pi_{12})=\beta}\\
+\Prob{\hat{d}_{12}=\alpha|\dim(\hat{\Pi}_1\cap\Pi_{12})\neq\beta} \Prob{\dim(\hat{\Pi}_1\cap\Pi_{12})\neq\beta},
\end{gather*}
where $\hat{d}_{12}=\dim(\hat{\Pi}_1\cap\hat{\Pi}_2)$. Substituting
$\beta=\min\left[d_{12},(m_1-(d_1-d_{12}))^+\right]$ we obtain
\begin{gather*}
\Prob{\hat{d}_{12}=\alpha} = \\
\Prob{\hat{d}_{12}=\alpha|\dim(\hat{\Pi}_1\cap\Pi_{12})=\beta} \left(1-\bigo{q^{-1}}\right) + \bigo{q^{-1}}.
\end{gather*}
Selecting $\alpha$ properly and using Lemma~\ref{lem:JointSpacePi_k}
one more time, we get
\begin{equation*}
\Prob{\hat{d}_{12}=\alpha} = 1-\bigo{q^{-1}},
\end{equation*}
where $\alpha=\min[\beta,(m_2-(d_2-\beta))^+]$, which completes the proof.
\end{proof}

\begin{proof}[Proof of Theorem~\ref{thm:SubSpaceGenPos}]
To prove the theorem, it is sufficient to show that
  (\ref{equ:genpos}) is valid for one specific $i$ with high probability. 
This is sufficient because
if $p_i$ is the probability that $\Pi$ is in general position with respect
  to each $\Pi_i$, $i=1,\ldots,r$, then the probability that $\Pi$ is in 
  general position with the whole family is lower bounded by
  $1-\sum_{i=1}^r (1-p_i)$. 

Now by applying Lemma~\ref{lem:JointSpacePi_k}, we know that 
  $p_i=1-\bigo{q^{-1}}$ which completes the proof.
\end{proof}

\begin{proof}[Proof of Lemma~\ref{lem:UpperBound_Ts}]
Here we assume that $n$ is very large. Then in Corollary~\ref{cor:SuffCondon_n}
we will derive a sufficient condition on the largeness of $n$.

Let $v$ be the node that has the longest path to the source $S$.
Because of Definition~\ref{def:waiting_time} we can write
$T_s\le \tau_v-1$. Then we may upper bound $\tau_v$ as follows
\[
\tau_v \le 2 + \max_{u\in P(v)} \tau_u,
\]
where $P(v)$ is the set of parents of $v$. Now we can repeat the
above argument until we reach the source $S$. So finally we have
\[
\tau_v \le 2 D(G),
\]
which leads to the lemma's assertion.
\end{proof}

\begin{proof}[Proof of Lemma~\ref{lem:ConsecutiveSubspaceSampling}]
Let us write
\begin{align*}
\dim(&\pi_u(1)\cap\Pi_v(j))\\ 
&\stackrel{(a)}{=} \dim\left(\pi_u(1)\cap (\Pi_v(j)\cap\Pi(0))\right)\\
&\stackrel{(b)}{=} \dim(\pi_u(1)\cap\pi_v(1))\\
&\stackrel{(c)}{=} \min[d_0,(k_u(1)+k_v(1)-d_0)^+,k_u(1),k_v(1)]\\
&= (k_u(1)+k_v(1)-d_0)^+\\
&< k_u(1),
\end{align*}
where $(a)$ follows because $\pi_u(1)\subseteq\Pi(0)$ and $(c)$ is a result of Corollary~2.
So $\forall j\in\{1,\ldots,t\}$ we have $\pi_u(1)\nsubseteq\Pi_v(j)$ which results in
$\Pi_u(i)\nsubseteq \Pi_v(j)$, $\forall i,j\in\{1,\ldots,t\}$. By symmetry, we have the 
second assertion of the lemma, namely, 
$\Pi_v(j)\nsubseteq \Pi_u(i)$, $\forall i,j\in\{1,\ldots,t\}$.

Now, it only remains to check $(b)$. We will prove this by induction. Obviously,
$\Pi(0)\cap\Pi_v(1)=\pi_v(1)$. Suppose that we have $\Pi(0)\cap\Pi_v(k)=\pi_v(1)$
where $k<t$ then we show that it also holds for $k+1$.

We know that $\pi_v(1)\subseteq\Pi(0)\cap\Pi_v(k+1)$. To show that 
$\Pi(0)\cap\Pi_v(k+1)\subseteq\pi_v(1)$ we proceed as follows. Let 
$w\in \Pi(0)\cap\Pi_v(k+1)$ then $w\in\Pi(0)$ and $w\in\Pi_v(k+1)=\sum_{i=1}^{k+1} \pi_v(i)$.
We may decompose $w$ as $w=\sum_{i=1}^{k+1} w_i$ where $w_i\in\pi_v(i)$.
Then we notice that $w_{k+1}=w-\sum_{i=1}^k w_i\in\Pi(k-1)$ and $\Pi(k-1)\cap\pi_v(k+1)=\emptyset$
w.h.p. (by Lemma~3). So we conclude that $w_{k+1}=0$ which means $w\in\Pi_v(k)$.
This shows that $w\in\Pi(0)\cap\Pi_v(k)$ where by induction assumption we have
$w\in\pi_v(1)$ and we are done.
\end{proof}


\begin{proof}[Proof of Corollary~\ref{cor:ConsecutiveSubspaceSampling}]
Because we have $\Pi_u(0)\nsubseteq\Pi_v(j)$ then by Lemma~2 we have 
$\pi_a(1)\nsubseteq\Pi_v(j)$ w.h.p. So as a result we have $\Pi_a(i)\nsubseteq\Pi_v(j-1)$
$\forall i,j\in\{1,\ldots,t\}$. Because $\Pi_b(j)\subseteq\Pi_v(j-1)$ we conclude that
$\Pi_a(i)\nsubseteq\Pi_b(j)$ $\forall i,j\in\{1,\ldots,t\}$ w.h.p. By symmetry, we also
deduce the other part of the corollary.
\end{proof}

\section{Algebraic Model for Synchronous Networks}\label{sec:Mod:AlgebraicModel}

In this appendix we employ an algebraic approach to analyze the dissemination 
protocol given in Algorithm~\ref{alg_diss}. This approach is similar to 
\cite{KoMe2003_TransNet} and \cite{HoKoMeEfShKa2006}, but differs in that 
we introduce memory into the coding process.

We introduce memory as follows. Suppose we are interested in finding the 
transfer function between the source and an arbitrary node $v$.
Let $\Bs{X}$ be a $n\times \ell$ matrix with  rows the $n$ packets (vectors) 
that the source wants to transmit to the receivers.  We assume that $\dim(\sspan{\Bs{X}})=n$.
Let  $\Bs{Y}(t)\in\mathbb{F}_q^{\xi\times \ell}$ be a matrix with rows 
the packets that pass through the  $\xi$ edges of the network at time $t$.  
Let $\Bs{Z}_v(t)$ be the set of packets that node $v$ receives.
Similarly to  \cite{KoMe2003_TransNet}, we will write state-space equations 
that involve these vectors; however, we will ensure that, at each time $t$, 
coding at each node occurs  across all the packets that the node has received 
before time $t$.

In each timeslot $t$, the source injects $|\Out(S)|$ packets into the network 
that are random linear combinations of the original source packets $\Bs{X}$. 
These linear combinations can be captured as $\Bs{M}(t)\Bs{X}$, 
where $\Bs{M}(t)\in\mathbb{F}_q^{|\Out(S)|\times n}$ is a random matrix. 
Intermediate network nodes will transmit packets on their outgoing edges 
depending on the network connectivity, and the state of the dissemination 
protocol.

The network connectivity can be captured by the  $\xi\times \xi$ adjacency 
matrix  $\mathcal{F}$  of the labeled line graph of the graph $G$, defined 
as follows
\begin{equation*}
\mathcal{F}_{ij}\triangleq\left\{ \begin{array}{ll}
1 & \head(e_i)=\tail(e_j),\\
0 & \textrm{otherwise.}
\end{array} \right.
\end{equation*}
To model random coding over a field  $\mathbb{F}_q$, we consider a sequence 
of random matrices $\Bs{F}_1^{(t)}, \ldots,\Bs{F}_{t-1}^{(t)}$  which conform 
to $\mathcal{F}$. That is, the entries of these matrices have for $i\ne j$ $(\Bs{F}_k^{(t)})_{ij}=0$ 
wherever $\mathcal{F}_{ij}=0$ and have random numbers from $\mathbb{F}_q$ in 
all other places. 

The dissemination protocol dictates when a node can start transmitting packets,
according to its waiting time (equivalently, when the outgoing edges of the 
node will have packets send through them).
To capture this, we will use the step function $u(t)$,
\[
u(t)\triangleq\left\{\begin{array}{l}
1\quad t\ge 0,\\
0\quad \textrm{otherwise},
\end{array}\right.
\]
and define the $\xi\times\xi$ diagonal matrix $\Bs{U}(t)$ as,
\[
\forall i\in E:\quad \Bs{U}_{ii}(t)\triangleq u\left(t-\tau_{\tail(i)}-1\right),
\]
where $\tau_v$ is the \emph{waiting time} for node $v$. In this section we
assume that the waiting times may have arbitrary values and we do not restrict them 
according to Definition~\ref{def:waiting_time}.

Using the above definitions, the set of packets (vectors) that each node $v$ 
receives in every time instant $t>0$ can be written as follows
\begin{equation}\label{eq:stateeq}
\left\{ \begin{array}{l}
\Bs{Y}(t) = \Bs{U}(t)\left(\Bs{A} \Bs{M}(t)\Bs{X}+\sum_{i=1}^{t-1} \Bs{F}_i^{(t)} \Bs{Y}(t-i)\right),\\
\\
\Bs{Z}_v(t) = \Bs{B}_v \Bs{Y}(t),
\end{array} \right.
\end{equation}
where $\Bs{Y}(0)=\Bs{0}$. In the above, $\Bs{A}\in\mathbb{F}_q^{\xi\times |\Out(S)|}$ is a 
matrix which represents the connection of node $S$ to the rest of the network. In the 
same way matrix $\Bs{B}_v\in\mathbb{F}_q^{|\In(v)|\times \xi}$ defines the connection 
of node $v$ to the set of edges in the network.

It is worth noting that although (\ref{eq:stateeq}) is written for the  packets transmitted 
on each edge, we can write the same set of equations for the coding vectors.

Suppose we are interested in finding the output of such a system at some time instant $T$. 
We can rewrite the above equations by defining new matrices as follows. We can collect 
the source random operations as
\begin{equation*}
\Bs{M}_T \triangleq \left[\begin{array}{c} 
\Bs{M}(1)\\
\vdots\\
\Bs{M}(T)
\end{array} \right] \in\mathbb{F}_q^{T|\Out(S)| \times n}.
\end{equation*}
For the states of system we define
\begin{equation*}
\Bs{Y}_T \triangleq \left[\begin{array}{c} 
\Bs{Y}(1)\\
\vdots\\
\Bs{Y}(T)
\end{array} \right] \in\mathbb{F}_q^{\xi T\times \ell}.
\end{equation*}
We also define a new set of matrices which represent the input-output relation. 
Using matrix $\Bs{A}$ we define the following matrix
\begin{equation*}
\Bs{A}_T \triangleq \Bs{I}_T\otimes \Bs{A}=\left[\begin{array}{ccc}
\Bs{A} &  & \\
 & \ddots & \\
 &  & \Bs{A}
\end{array} \right] \in\mathbb{F}_q^{\xi T\times T|\Out(S)|}.
\end{equation*}
For the connection of node $v$ we define
\begin{equation*}
\Bs{B}_v(T) \triangleq \left[\begin{array}{cc}
\Bs{0}_{|\In(v)|\times (T-1)\xi} & \Bs{B}_v \\
\end{array} \right] \in\mathbb{F}_q^{|\In(v)|\times \xi T}.
\end{equation*}
We define matrix $\Bs{F}_T$ which represent how the states are related to each other
\begin{equation*}
\boldsymbol{F}_T \triangleq \left[\begin{array}{ccccc} 
\boldsymbol{0} & \boldsymbol{0} & \boldsymbol{0} & \boldsymbol{0} &\cdots\\
\Bs{F}_1^{(2)} & \boldsymbol{0} & \boldsymbol{0} & \boldsymbol{0} & \cdots\\
\Bs{F}_2^{(3)} & \Bs{F}_1^{(3)} & \boldsymbol{0} & \boldsymbol{0} & \cdots\\
\Bs{F}_3^{(4)} & \Bs{F}_2^{(4)} & \Bs{F}_1^{(4)} & \boldsymbol{0} & \cdots\\
\vdots & \vdots & \vdots & \vdots & \ddots
\end{array}\right] \in\mathbb{F}_q^{\xi T\times \xi T}.
\end{equation*}
Finally, we use matrix $\Bs{U}_T$ that captures the time when transmissions start for each edge
\[
\Bs{U}_T \triangleq \left[\begin{array}{ccc}
 \Bs{U}(1) &  & \\
 & \ddots & \\
 & & \Bs{U}(T)
\end{array} \right] \in\mathbb{F}_q^{\xi T\times \xi T}.
\]
Using the above definitions, we can rewrite (\ref{eq:stateeq}) as follows
\begin{equation*}
\left\{ \begin{array}{l}
\Bs{Y}_T = \Bs{U}_T \left( \Bs{A}_T \Bs{M}_T \Bs{X} + \Bs{F}_T \Bs{Y}_T \right),\\
\\
\Bs{Z}_v(T) = \Bs{B}_v(T) \Bs{Y}_T.
\end{array} \right.
\end{equation*}
This equation can be solved to find the input-output transfer matrix at time $T$ 
which results in
\begin{equation}\label{eq:alg_input_output}
\boldsymbol{Z}_v(T)= \underbrace{\left[ \Bs{B}_v(T)(\Bs{I}-\Bs{U}_T\Bs{F}_T)^{-1} \Bs{U}_T\Bs{A}_T\Bs{M}_T\right]}_{\Bs{H}_{Sv}(T)} \Bs{X},
\end{equation}
where $\Bs{H}_{Sv}(T) \in \Fbb_{q}^{|\In(v)|\times n}$.
From the definition of matrix $\Bs{F}_T$, we know that it is a ``strictly lower triangular matrix'' 
which means $\Bs{F}_T$ is nilpotent and we have $\Bs{F}_T^T=0$. The same applies 
for the matrix $\Bs{U}_T\Bs{F}_T$, namely we have $(\Bs{U}_T\Bs{F}_T)^T=0$. 
So the matrix $(\Bs{I}-\Bs{U}_T\Bs{F}_T)^{-1}$ has an inverse which is equal to
\[
(\Bs{I}-\Bs{U}_T\Bs{F}_T)^{-1} = \left(\Bs{I}+\cdots+(\Bs{U}_T\Bs{F}_T)^{T-1}\right).
\]
Finally, note that if the nodes do not wait before starting the 
transmission ($\tau_v=0:\ \forall v\in V$), 
then we will have $\Bs{U}_T=\Bs{I}_{\xi T\times \xi T}$.

\subsection{Proof of Theorem~\ref{thm:RandNetCodRate}}\label{apndx:ProofThmRandNetCodeRate}

For simplicity, in the following proof, we assume that each edge of the network has 
capacity $1$. Edges with capacity more than $1$ can be modeled by replacing them with multiple 
edges of unit capacity.

From \eqref{eq:alg_input_output} the transfer matrix from $S$ to $v$ at time $T$ is equal to
$\Bs{H}_{Sv}(T)$. Knowing that the min-cut of node $v$ is $c_v$, we choose a set of $c_v$ incoming 
edges to $v$ such that there exist $c_v$ edge disjoint paths from $S$ to $v$ 
and find the input-output transfer matrix just for this set of edges. 
Then we can write
\begin{align}\label{eq:alg_input_output_mincut}
\Bs{\hat{H}}_{Sv}(T) &=  \Bs{\hat{B}}_v(T)(\Bs{I}-\Bs{U}_T\Bs{F}_T)^{-1} \Bs{U}_T\Bs{A}_T\Bs{M}_T \\
&=  \Bs{\hat{B}}_v(T) \left(\Bs{I}+\cdots+(\Bs{U}_T\Bs{F}_T)^{T-1}\right) \Bs{U}_T\Bs{A}_T\Bs{M}_T \nonumber, 
\end{align}
where $\Bs{\hat{H}}_{Sv}(T) \in \Fbb_{q}^{c_v\times n}$ and $\Bs{\hat{B}}_v(T)\in\Fbb_q^{c_v\times\xi T}$.
Let $f_{ij}^{(t,k)}$ denote 
for the entries of $\Bs{F}_k^{(t)}$ and $m_{ij}^{(t)}$ denote for the entries of $\Bs{M}(t)$. 
Every node in the network performs random linear network coding so $m_{ij}^{(t)}$ and 
$f_{ij}^{(t,k)}$ (those that are not zero) are chosen uniformly at random from $\mathbb{F}_q$.

From (\ref{eq:alg_input_output_mincut}) we know that each entry of ${\Bs{\hat{H}}}_{Sv}(T)$ 
is a polynomial of degree at most $T$ in variables $m_{ij}^{(t)}$ and $f_{ij}^{(t,k)}$. 
For $T>t_0(v)$ where $t_0(v)\triangleq\max_{i\in P(v)} \tau_i$, we 
know that there exists a trivial solution for variables $m_{ij}^{(t)}$ 
and $f_{ij}^{(t,k)}$ (which simply routes $c_v$ packets from $S$ to $v$ through the 
$c_v$ edge disjoint paths) that results in
\begin{equation}\label{eq:TrivialSolution-TransferMatrix}
\Bs{\hat{H}}_{Sv}(T) = \left[ \begin{array}{cc} \Bs{I}_{c_v} & \Bs{0}_{c_v\times (n-c_v)} \end{array} \right].
\end{equation}
Note that by changing the routing solution (in fact by changing the variables $m_{ij}^{(t)}$ 
properly) we could change the place of identity matrix in 
\eqref{eq:TrivialSolution-TransferMatrix} arbitrarily. We conclude that the
determinant of every $c_v\times c_v$ submatrix of $\Bs{\hat{H}}_{Sv}(T)$ (which 
is a polynomial of degree at most $c_v T$ in variables $m_{ij}^{(t)}$ and $f_{ij}^{(t,k)}$) is not
identical to zero. So by using the Schwartz-Zippel lemma \cite{MotRagh-RandAlgorithm} we can upper bound 
the probability that $\Bs{\hat{H}}_{Sv}(T)$ is not full rank if the variables
$m_{ij}^{(t)}$ and $f_{ij}^{(t,k)}$ are chosen uniformly at random as follows
\[
\Prob{\rank{\Bs{\hat{H}}_{Sv}(T)} < c_v} < \frac{c_v T}{q}.
\]
We can apply the same argument for $k< \frac{n}{c_v}$ consecutive timeslots to show that 
\[
\Prob{\rank{\Bs{\hat{H}}_{Sv}(T:T+k-1)} < kc_v} < \frac{k c_v (T+k)}{q},
\]
where 
\[
\Bs{\hat{H}}_{Sv}(T:T+k-1) \triangleq \left[\begin{array}{c} 
\Bs{\hat{H}}_{Sv}(T)\\
\vdots\\
\Bs{\hat{H}}_{Sv}(T+k-1)
\end{array} \right].
\]
Now let us define the event $\mc{A}_k(v)$ as follows
\[
\mc{A}_k(v):\ \rank{\Bs{\hat{H}}_{Sv}(T:T+k-1)} = kc_v.
\]
Then we can write
\begin{align*}
\Prob{\cap_{v\in V} \mc{A}_k(v)} &= 1- \Prob{\cup_{v\in V} \mc{A}_k^\complement(v)}\\
&\ge 1-\sum_{v\in V} \Prob{\mc{A}_k^\complement(v)}\\
&\ge 1- \frac{k(T+k)}{q}\sum_{v\in V} c_v,
\end{align*}
where $T>t_0$ and $t_0\triangleq\max_{v\in V} t_0(v)$.

This means that assuming $q$ is large enough we are sure that with high probability 
each node $v$ receives $c_v$ innovative packets per time slot for $t>t_0$.


\begin{thebibliography}{1}
\bibitem{HoKoMeEfShKa2006}  T.~Ho, R.~Koetter, M.~Medard, M.~Effros, J.~Shi, and D.~Karger, 
``A random linear network coding approach to multicast,'' 
\emph{IEEE Transactions on Information Theory}, vol.~52, pp.~4413--4430, Oct. 2006.

\bibitem{ChWuJa2003} P.~A.~Chou, Y.~Wu, and K.~Jain, 
``Practical network coding,'' 
\emph{Allerton}, Monticello, IL, Oct. 2003.

\bibitem{GkRo2005} C.~Gkantsidis and P.~Rodriguez, 
``Network coding for large scale content distribution,'' 
\emph{in the Proceedings of IEEE INFOCOM}, Mar. 2005.

\bibitem{GkMiRo2006} C.~Gkantsidis, J.~Miller, and P.~Rodriguez, 
``Comprehensive view of a live network coding P2P system,'' 
\emph{ACM SIGCOMM/USENIX IMC}, 2006.

\bibitem{add1} Wei Li-shuang, Song Wei, Hu Wen-bin and Hu Zheng-bing,
``Using network coding makes P2P content sharing scalable'', 2nd International Workshop on Database Technology and Applications (DBTA), 2010.  

\bibitem{add2} Xi Wei and Dong-Yang Long, ``P2P Content-Propagation Mechanism Tailored by Network Coding'',
 International Symposium on  Computer Network and Multimedia Technology,  CNMT 2009.

\bibitem{add3} X.~Zhang and B.~Li, 
``On the market power of network coding in P2P content distribution systems,''  
\emph{INFOCOM}, pp.~334--342, Apr.~2009.

\bibitem{Lun} D. S. Lun, N. Ratnakar, M. Medard, R. Koetter, D. R. Karger, T. Ho, E. Ahmed, and F. Zhao, 
``Minimum-cost multicast over coded packet networks'', 
\emph{IEEE Trans. Inform. Theory}, vol.~52, no.~6, pp.~2608--2623, Jun. 2006.

\bibitem{CWL:Infocom2006} C.~Fragouli, J.~Widmer and J.~Y.~Le Boudec, 
``A network coding approach to energy efficient broadcasting: from theory to practice,'' 
\emph{Proceedings INFOCOM, 25th IEEE International Conference on Computer Communications}, Barcelona, pp.~1--11, Apr. 2006.

\bibitem{JaFrDi2007} M.~Jafari~Siavoshani, C.~Fragouli, and S.~Diggavi, 
``Subspace properties of randomized network coding,'' 
\emph{IEEE Information Theory Workshop}, pp.~17--21, Bergen, Norway, Jul. 2007.

\bibitem{JaFrDiGk2007} M.~Jafari~Siavoshani, C.~Fragouli, S~Diggavi, and C.~Gkantsidis, 
``Bottleneck discovery and overlay management in network coded peer-to-peer systems,'' 
\emph{ACM SIGCOMM Workshop on Internet Network Management}, Kyoto, Japan, Aug. 2007.

\bibitem{JaFrDi2008} M.~Jafari~Siavoshani, C.~Fragouli, S.~Diggavi, 
``On locating byzantine attackers,'' 
\emph{Network Coding Workshop}, Jan. 2008.

\bibitem{AhlCaiLiYeu} R.~Ahlswede, N.~Cai, S-Y.~R.~Li, and R.~W.~Yeung, 
``Network information flow,'' 
\emph{IEEE Transactions on Information Theory}, vol.~46,  pp.~1204--1216, Jul. 2000.

\bibitem{LiYeCa2003_InfoTheoryTran} S.-Y.~R.~Li, R.~W.~Yeung, and N.~Cai., 
``Linear network coding,'' 
\emph{IEEE Transactions on Information Theory}, vol.~49, pp.~371--381, 2003.

\bibitem{KoMe2003_TransNet} R.~Koetter, M.~Medard, 
``An algebraic approach to network coding,'' 
\emph{Transactions on Networking}, Oct. 2003.

\bibitem{HoKoMeKaEf2003} T.~Ho, R.~Koetter, M.~Medard, D.~Karger and M.~Effros, 
``The benefits of coding over routing in a randomized setting,'' 
\emph{IEEE International Symposium on Information Theory}, 2003.

\bibitem{KoKs2007} R.~Koetter and F.~Kschischang, 
``Coding for errors and erasures in random network coding,'' 
\emph{IEEE Transactions on Information Theory}, vol.~54, no.~8, pp.~3579--3591, Aug. 2008.

\bibitem{JaLaKaHoKaMe2007} S.~Jaggi, M.~Langberg, S.~Katti, T.~Ho, D.~Katabi, and M.~Medard, 
``Resilient network coding in the presence of byzantine adversaries,'' 
\emph{in the Proceedings of IEEE INFOCOM}, pp.~616--624, 2007.

\bibitem{HoLeKoMeEfKa2004} T.~Ho, B.~Leong, R.~Koetter, M.~Medard, M.~Effros, and D.~Karger, 
``Byzantine modification detection in multicast networks using randomized network coding,'' 
\emph{IEEE International Symposium of Information Theory}, Jun. 2004.

\bibitem{YeCa2006_1} R.~W. Yeung and N.~Cai, 
``Network error correction, i: basic concepts and upper bounds,'' 
\emph{Communication and Information System}, vol.~6, pp.~19--35, 2006.

\bibitem{CaYe2006_2} N.~Cai and R.~W.~Yeung, 
``Network error correction, ii: lower bounds,'' 
\emph{Communication and Information System}, vol.~6, pp.~37--54, 2006.

\bibitem{Zh2006} Z.~Zhang, 
``Network error correction coding in packetized networks,'' 
\emph{IEEE Information Theory Workshop}, Oct. 2006.

\bibitem{SiKsch-IT2011} D.~Silva and F.~R.~Kschischang, 
``Universal secure network coding via rank-metric codes,'' 
\emph{IEEE Transactions on Information Theory}, vol.~57, no.~2, pp.~1124--1135, Feb.~2011.

\bibitem{KeLi-InfoCom09} E.~Kehdi and B.~Li, 
``Null keys: Limiting malicious attacks via null space properties of network coding,'' 
\emph{in the Proceedings of IEEE INFOCOM}, pp.~19--25, Brazil, Apr. 2009.

\bibitem{ResOvNet} ``RON: Resilient Overlay Networks,'' 
available online at \mbox{http://nms.csail.mit.edu/ron}.

\bibitem{FrMa2005} C.~Fragouli and A.~Markopoulou, 
``A network coding approach to overlay network monitoring,'' 
\emph{Allerton}, Oct. 2005.

\bibitem{FrMaDi2006} C.~Fragouli, A.~Markopoulou, and S.~Diggavi, 
``Active topology inference using network coding,'' 
\emph{Allerton}, Oct. 2006.

\bibitem{HoLeChWeKo2005} T.~Ho, B.~Leong, Y.~Chang, Y.~Wen, and R.~Koetter, 
``Network monitoring in multicast networks using network coding,'' 
\emph{IEEE International Symposium on Information Theory}, Jun. 2005.

\bibitem{ShJaDe-ITA07} G.~Sharma, S.~Jaggi, and B.~K.~Dey, 
``Network tomography via network coding,'' 
\emph{in Proceeding of ITA Workshop}, UCSD, 2007.

\bibitem{JaFrDi2008-ISIT} M.~Jafari~Siavoshani, C.~Fragouli, and S.~Diggavi, 
``Noncoherent multisource network coding,'' 
\emph{IEEE International Symposium on Information Theory}, Toronto, Canada, pp.~817--821, Jul. 2008.

\bibitem{SiKschKo-CapRandNetCod} D.~Silva, F.~R.~Kschischang, and R.~Koetter, 
``Communication over finite-field matrix channels,'' 
\emph{IEEE Transactions on Information Theory}, 
vol.~56, iss.~3, pp.~1296--1305, Mar.~2010.

\bibitem{JaMoFrDi-IT11} M.~Jafari~Siavoshani, S.~Mohajer, C.~Fragouli, S~N.~Diggavi, 
``On the capacity of non-coherent network coding,'' 
\emph{IEEE Transactions on Information Theory}, to be appeared in Feb. 2011.

\bibitem{BaFr} L.~Babai and P.~Frankl, 
``Linear algebra methods in combinatorics,'' 
preliminary version, University of Chicago.

\bibitem{HaLeBa-INRIA} A.~Al-Hamra, A.~Legout, and C.~Barakat, 
``Understanding the properties of the BitTorrent overlay,'' 
\emph{INRIA Technical Report}, available online at \mbox{http://arxiv.org/pdf/0707.1820}, Sophia Antipolis, France.

\bibitem{LaTh-MathScand94} D.~Laksov and A.~Thorup, 
``Counting matrices with coordinates in finite fields and of fixed rank,'' 
\emph{Mathematica Scandinavica}, vol.~74, pp.~19--33, 1994. 

\bibitem{MotRagh-RandAlgorithm} R.~Motwani and P.~Raghavan, 
``Randomized Algorithms,''
\emph{Cambridge University Press}, 2000.
\end{thebibliography}
\end{document}